\documentclass[11pt]{article}
\usepackage{graphicx} 

\usepackage{amsmath}
\usepackage{amssymb}

\usepackage{pre}
\usepackage[colorinlistoftodos,prependcaption,textsize=tiny]{todonotes}

\title{Polylogarithmic-Weight Dicke States in QAC$^0$ and Arbitrary Symmetric States in QAC$^0_f$}

\def\ANON{0} 
\def\COMM{0} 

\ifnum\ANON=1
\author{
  Anonymous Authors.
}
\else
\author{
  Lucas Gretta \thanks{University of California at Berkeley. \ Email: \url{lucas_gretta@berkeley.edu}. \ Supported by NSF Award CCF-2231095}
  \and
  Meghal Gupta\thanks{University of California at Berkeley. \ Email: \url{meghal@berkeley.edu}. \ Supported by NSF GRFP.}
  \and
  Malvika Raj Joshi\thanks{University of California at Berkeley. \ Email: \url{malvika@berkeley.edu}. \ Supported by NSF grant 2311733 and DOE grant DE-SC0024124.} 
}
\fi
\date{}

\ifnum\COMM=1
\newcommand{\malvika}[1]{\textcolor{purple}{[\textbf{Malvika:} {#1}]}}
\newcommand{\luke}[1]{\textcolor{blue}{[\textbf{Luke:} {#1}]}}
\newcommand{\meghal}[1]{\textcolor{red}{[\textbf{Meghal:} {#1}]}}
\newcommand{\gpt}[1]{\textcolor{orange}{[\textbf{GPT:} {#1}]}}
\newcommand{\todomal}[1]{\todo[linecolor=Plum,backgroundcolor=Plum!25,bordercolor=Plum]{\textbf{@mal todo}: #1}}
\newcommand{\todoluke}[1]{\todo[linecolor=blue,backgroundcolor=blue!25,bordercolor=blue]{\textbf{@luke todo}:#1}}
\newcommand{\todomeghal}[1]{\todo[linecolor=red,backgroundcolor=red!25,bordercolor=red]{\textbf{@meghal todo} #1}}
\newcommand{\todoany}[1]{\todo[linecolor=orange,backgroundcolor=orange!25,bordercolor=orange]{\textbf{todo @plz do}: #1}}
\else
\newcommand{\malvika}[1]{}
\newcommand{\luke}[1]{}
\newcommand{\meghal}[1]{}
\newcommand{\gpt}[1]{}
\newcommand{\todomal}[1]{}
\newcommand{\todoluke}[1]{}
\newcommand{\todomeghal}[1]{}
\newcommand{\todoany}[1]{}
\fi

\begin{document}

\maketitle
\ifnum\ANON=1
\PackageWarningNoLine{Global}{Note anonymous mode}
\ifnum\COMM=1
\textcolor{red}{\textbf{Warning:}{ Comments are enabled in anonymous mode}}
\fi
\fi
\ifnum\COMM=1
\PackageWarningNoLine{Global}{Note comments are enabled}
\fi

\begin{abstract}

An $n$-qubit Dicke state of weight $k$, $\ket{D^n_k}$, is the uniform superposition over all $n$-bit strings of Hamming weight $k$. Dicke states are central to quantum algorithms exhibiting speedups, such as Decoded Quantum Interferometry (Jordan et al., \emph{Nature}, 2025). In the NISQ era, quantum hardware is constrained by both depth and locality, motivating the question of which global operations suffice to prepare such states. $\QACZ$, the quantum analogue of $\ACZ$, minimally extends local $O(1)$-depth quantum circuits by allowing arbitrary-width Toffoli (reversible AND) gates. We show that Dicke states of $\mathrm{polylog}(n)$ weight can be prepared in $\QACZ$. This gives the first $\QACZ$ construction of any super-constant-weight $n$-qubit Dicke state, since previous constructions relied on the much more powerful $\FANOUT_n$ gate. In general, we show that any weight-$k$ Dicke state can be constructed using $\FANOUT_{\min(k,n-k)}$ gates. Combined with recent hardness results, this yields a tight characterization: for $k \leq n/2$, $\ket{D^n_k}$ can be prepared in $\QACZ$ if and only if $\FANOUT_k \in \QACZ$.

We develop a limited-fanout state-synthesis toolkit for $\QACZ$ that yields further constant-depth, $\poly(n)$-ancilla constructions. Every $n$-qubit symmetric state supported on Hamming weight $\leq k$ can be prepared using $\FANOUT_k$ gates. Moreover, every $O(\log n)$-qubit state can be prepared using quantum random-access memory ($\QRAM_n$), which refers to a coherent indexing gate $\ket{x}\ket{i}\ket{0} \mapsto \ket{x}\ket{i}\ket{x_i}$, a potentially weaker resource than $\FANOUT_n$ ($\QRAM_n \in \QACZF$). Thus, every symmetric state can be synthesized exactly in $\QACZF$.

\end{abstract}
\tableofcontents

\section{Introduction}



$\QACZ$ circuits are a fundamental model of shallow quantum computation consisting of constant depth, polynomial size circuits made up of reversible-$\AND$ gates and single qubit unitaries. These circuits are the direct quantum analog of classical $\ACZ$ circuits \cite{moore1999qac0} but it is unknown if $\QACZ$ can simulate $\ACZ$. It is a major open question whether $\QACZ$ can implement the $\FANOUT_n$ operation, which, given a classical bit $b$, produces $n$ \emph{classical} copies of $b$. This operation is freely available in $\ACZ$, which cannot compute functions such as $\PARITY_n$ and $\MAJORITY_n$ \cite{hastad1986switch}. In contrast, the ability of $\QACZ$ circuits to perform classical tasks such as $\PARITY_n$, modular arithmetic, threshold computations, and sorting, depends on whether $\QACZ$ can implement $\FANOUT_n$ \cite{moore1999qac0, green2002acc, hoyer2005fanout}.

The hardness of $\FANOUT$ for $\QACZ$ is also closely tied to the complexity of several state-synthesis tasks, including the preparation of $n$-qubit GHZ states \cite{moore1999qac0, rosenthal2021qac0}. GHZ states are fundamental entangled resources with applications in quantum metrology and nonlocality tests \cite{giovannetti2011advances, monz201114, kafatos2013bell}. However, $n$-qubit GHZ states can be synthesized in $\QACZ$ if and only if $\FANOUT_n \in \QACZ$. This provides an inherently-quantum complete task for $\QACZF$, the class of constant depth circuits obtained by augmenting $\QACZ$ with fanout.  



Another important entanglement resource is the family of Dicke states, introduced by Robert H. Dicke \cite{dicke1954coherence}. The weight $k$, $n$-qubit Dicke state, $\ket{D^n_k}$, is defined as the uniform superposition over $n$-bit strings of Hamming weight $k$. These states form a complete orthonormal basis for the symmetric subspace of $n$ qubits, which comprises all states invariant under permutations of the qubits. Dicke states also have applications in quantum metrology \cite{zhang2014quantum} and play a central role in Decoded Quantum Interferometry (DQI) \cite{jordan2025optimization}. 
From a computational perspective, Dicke states capture the ability of a circuit to generate a superposition over inputs satisfying a high-degree combinatorial constraint.
This makes Dicke state preparation a natural candidate to probe the ability of shallow circuits to enforce global structure.

The states $\ket{D^n_{n-k}}$ and $\ket{D^n_k}$ are equivalent up to local operations; thus, we may assume $k \le n/2$ to discuss their circuit complexity in terms of $k$. For $k = O(1)$, these states can be prepared in $\QACZ$ \cite{cleandicke2026}. However, for any $k=\omega(1)$, existing constant-depth constructions rely on the $\FANOUT_n$ gate \cite{cleandicke2026} and cannot be implemented in $\QACZ$ unless $\QACZ = \QACZF$.
Recent work showed that $\FANOUT_k$ is in fact \emph{necessary} for preparing weight-$k$ Dicke states \cite{gretta2026paritynotinqac0iff}. However, prior to this work, it was not known whether $\FANOUT_k$ suffices to construct weight-$k$ Dicke states on $n$ qubits. In particular, for the case of $k=\polylog(n)$, although $\QACZ$ circuits can implement $\FANOUT_k$ \cite{rosenthal2021qac0, grier2024threshold}, it was not known how to construct $\polylog(n)$ weight Dicke states on $n$ qubits. Existing techniques for constructing constant weight Dicke states break down in this regime -- they either require Grover amplification in $\poly(k)$ depth or parallel amplification of $\poly(n)$ runs that crucially requires $\FANOUT_{\poly(n)}$ to aggregate the runs \cite{cleandicke2026, grier2026tc0, rosenthal2021qac0}.

In this work, we close this gap by proving new upper bounds for Dicke-state preparation: for every $k=\polylog(n)$, weight-$k$, $n$-qubit Dicke states can be constructed exactly and cleanly in $\QACZ$. More generally, we provide explicit $\QACZ$ circuits with access to $\FANOUT_k$ gates (instead of $\FANOUT_n$ gates) that prepare $\ket{D^n_k}$ and $\ket{D^n_{n-k}}$. 
In light of the reductions of \cite{gretta2026paritynotinqac0iff}, our 
upper bounds are \emph{optimal} with respect to the width of $\FANOUT$ gates. This establishes an equivalence between Dicke state preparation and $\FANOUT$ :  for any $k \leq n/2$,
$$\ket{D^n_k} \text{ can be prepared in } \QACZ \iff \FANOUT_k \in \QACZ .$$


In the process of showing the above result, we develop a limited-fanout state-synthesis toolkit for $\QACZ$ and obtain resource-efficient $O(1)$-depth constructions for more general state-synthesis. 

\paragraph{Arbitrary state synthesis and QRAM significance.} Quantum random-access memory (QRAM) provides coherent access to classical data \cite{hann2021resilience,qramarchitecture} and is a commonly used primitive in quantum machine learning \cite{Kerenidiskmeans,kerenidis_et_al2017}.
We use $\QRAM_n$ to denote a coherent indexing gate into a $n$-bit database, i.e., the map $\ket{x} \ket{i} \ket{0} \mapsto \ket{x} \ket{i} \ket{x_i}$ for $x \in \bin^n$ and $i \in [n]$. 
The corresponding classical function, $\mathsf{Indexing}(x,i)=x_i$, is computable by $O(n)$-size depth-$2$ $\ACZ$ circuits. 
Thus, $\QACZ[\QRAM_n]\subseteq \QACZ[\FAN_n]$, but it is unknown whether the converse also holds.

Our toolkit yields new constant-depth upper bounds for arbitrary state synthesis. Previously, arbitrary state synthesis in constant depth was only known through full quantum fanout, via the  $\poly(n)$-ancilla $\QACZ[\FAN_n]$ circuits for $O(\log n)$ qubit states \cite{rosenthal2021query}.  We show an analogous upper-bound using a potentially weaker resource: every $O(\log n)$-qubit state can be prepared in $O(1)$ depth by a $\poly(n)$-size $\QACZ[\QRAM_n]$ circuit. Building on this, we give the first $O(1)$-depth, $\poly(n)$-size unitary synthesis of \emph{arbitrary symmetric states}. Specifically, every $n$-qubit symmetric state supported on Hamming weights at most $k$ can be prepared by a $\poly(n)$-size $\QACZ[\FAN_k]$ circuit.

From a complexity perspective, $\QACZ[\QRAM] \subseteq (\QACZ)^{\ACZ} \subseteq\QACZF$. Our results show that $(\QACZ)^{\ACZ}$ can perform genuinely quantum tasks, beyond just classical reductions to $\ACZ$, and motivate further study of this intermediate model.




\subsection{Our results}

Our main result is that $\QACZ$ augmented with $\FANOUT_k$ gates, which we denote as $\QACZ[\FANOUT_k]$, can construct $\ket{D^n_k}$. Previously, $\ket{D^n_k}$ was only known to be preparable in constant depth with access to $\FANOUT_n$, rather than $\FANOUT_k$.


\begin{theorem}[$\ket{D^n_k}$ with $\FANOUT_k$ (Informal \cref{thm:kdicke})]\label{thm:kdicke_intro} 
For every $n$ and $k \le n/2$, the states $\ket{D^n_k}$ and $\ket{D^n_{n-k}}$ can be exactly and cleanly prepared by a constant depth $\poly(n)$-ancilla $\QACZ[\FANOUT_k]$ circuit. 
\end{theorem}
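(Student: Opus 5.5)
The construction splits into two stages: first prepare a weight-$k$ Dicke state on a small register of size $\poly(k)$, then ``spread'' it onto $n$ qubits without any gate that fans out to $n$ targets. The second stage carries the real difficulty. The case $\ket{D^n_{n-k}}$ follows from $\ket{D^n_k}$ by applying $X$ to every qubit, so we only treat $\ket{D^n_k}$ with $k\le n/2$.

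\textbf{Step 1 (positions as indices).} A weight-$k$ string $x$ is the same as a $k$-subset $S=\{i_1<\dots<i_k\}\subseteq[n]$. So $\ket{D^n_k}$ is, up to an isometry, the uniform superposition over sorted $k$-tuples of $\lceil\log n\rceil$-bit indices. We prepare this in three moves.

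\begin{enumerate}
\item Prepare $k$ registers, each holding the uniform superposition over $[m]$ with $m=\poly(n)$ chosen so that collisions are rare. This is just Hadamards when $m$ is a power of two.
\item Coherently sort the registers and mark collision-free tuples. Comparison and sorting of $k$ numbers of $O(\log n)$ bits is an $\ACZ$-type computation, but the sorting network has width $\poly(k)$, and copying each index into $\poly(k)$ comparators needs only $\FANOUT_k$-sized fanouts. For $k=\polylog n$ these can be built in $\QACZ$.
\item Map each sorted tuple to indices in $[n]$ (for instance via $m=n$ together with exact amplitude bookkeeping).
\end{enumerate}

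Exactness is the issue here. Uniform $k$-tuples of distinct elements, once sorted, are exactly uniform over $k$-subsets, because each subset has exactly $k!$ preimages. So exactness reduces to two things: conditioning on distinctness, and uncomputing the permutation information. The permutation that sorted the tuple stays entangled with the output. To remove it, I would instead start from a uniform superposition over sorted tuples directly. One option is to build it recursively from a symmetric state on a $\poly(k)$-size register. Another is to invert the sort by preparing a uniform superposition over $S_k$ in an ancilla, which $\QACZ[\FANOUT_k]$ can do by the standard Lehmer-code construction, and applying the permutation to un-sort, so the garbage becomes a product state.

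\textbf{Step 2 (writing the $n$-bit string).} Given the registers $\ket{i_1}\cdots\ket{i_k}$, each output qubit $j\in[n]$ must become $\bigvee_t [i_t=j]$. Each register's index must be compared against all $n$ positions, which looks like a $\FANOUT_n$ of $i_t$. The way around this is to use the decoder $\ket{i}\mapsto\ket{e_i}$, which is an $\ACZ$ function per output bit. Each output bit $[i=j]$ is a single Toffoli with negations on the $\log n$ bits of $i$, but it reads each bit of $i$ $n$ times. Instead, we prepare the one-hot encoding directly: a uniform superposition over one-hot strings is the $W$ state $\ket{D^n_1}$, which is in $\QACZ$ by the cited constant-weight result. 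So we prepare $k$ copies of $\ket{D^n_1}$ and convert one-hot to binary. That conversion is a parity-free OR of $n/2$ bits per output bit, hence a Toffoli, and each one-hot qubit feeds only $\log n$ outputs. This gives the index registers. The $\OR$ of the $k$ one-hot vectors at each position $j$ is one Toffoli of width $k$ per position. Finally, the index registers must be uncomputed using the one-hot vectors and the sorted order.

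\textbf{Main obstacle.} The crux is cleanliness and exactness: guaranteeing distinctness without amplitude amplification, and uncomputing the $k$ one-hot copies from their $\OR$. Recovering the $t$-th one-hot vector from $x$ and the sorted index $i_t$ is local, $e_{i_t}$, but it again needs fanout of $i_t$ to $n$ positions. I would handle this by routing through the $\poly(k)$-size intermediate objects only, so that each $\FANOUT$ call has width at most $k$. Concretely, the plan is to show $x\mapsto(i_1,\dots,i_k)$ ranks are computable with each input bit read $\polylog$ times, via prefix-count ``which block contains the $t$-th one'' tree arguments. For distinctness, I would use a $\polylog$-weight exact-amplitude-amplification trick, or sample from $[n]$ without replacement via sequential conditional distributions whose amplitudes are computed in $\QACZ[\FANOUT_k]$.
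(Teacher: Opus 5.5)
Your reduction of $\ket{D^n_{n-k}}$ to $\ket{D^n_k}$ via $X^{\otimes n}$ is fine. The construction of $\ket{D^n_k}$, however, has a genuine gap. It sits exactly at the step you call the main obstacle and leave as a plan: passing cleanly between position data and the $n$-bit string.

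Building $x=\bigvee_t e_{i_t}$ from $k$ copies of $\ket{W_n}$ is easy. Erasing the one-hot copies given $x$ is not: it needs either the decoder $\ket{i}\ket{0^n}\mapsto\ket{i}\ket{e_i}$ or the rank $|x_{<j}|$ at every position $j$. The same holds in the index picture, where you must write $x$ from $\ket{i_1}\cdots\ket{i_k}$.
\begin{itemize}
\item \emph{Decoder.} It gives $\QRAM_n$ in constant depth: XOR $\bigvee_j(e_{i,j}\wedge x_j)$ into the target. So for $k=\polylog(n)$ it would put $\QRAM_n$ in $\QACZ$. The paper itself obtains this direction only with a $\QRAM_n$ gate (\cref{cl:onehot}); only the encoder $\ket{0}\ket{e_i}\mapsto\ket{i}\ket{e_i}$ is in $\QACZ$.
\item \emph{Ranks.} These need width-$\Theta(n)$ fanout, since each $x_j$ enters $\Theta(n)$ prefix counts.
\item \emph{Your block tree.} It does not remove this. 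With $O(1)$ levels, some level has branching $n^{\Omega(1)}$, and broadcasting a node's offset to its children is a fanout of that width. With branching $\poly(k)$, the tree has $\Theta(\log n/\log k)$ levels, which is not constant for $k=\polylog(n)$ — the case that matters.
\end{itemize}

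By contrast, the permutation garbage you worry about in Step 1 is harmless. On distinct tuples, tuple $\mapsto$ (set, sorting permutation) is a bijection, so the permutation register is already a uniform product state.

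Distinctness is also unresolved. The collision-free probability satisfies $\prod_{t<k}(1-t/n)\le e^{-k(k-1)/(2n)}$, which is $o(1)$ once $k=\omega(\sqrt n)$. So $O(1)$ Grover rounds cannot condition on it, and sequential sampling without replacement has depth $k$. This part can be patched by a case split. For $k\ge n^{1/3}$, \cref{fact:kthresh} builds $\FANOUT_n$ from $O(1)$ $\FANOUT_k$ gates and \cref{fact:fanoutdicke} applies. For smaller $k$, collisions have probability $O(n^{-1/3})$, and marking $|x|=k$ with $\EXACT^n_k$ plus \cref{cor:constamptoexact} suffices. The cleanup problem cannot be patched this way.

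The missing idea is to never represent positions in $[n]$ at all. The paper splits the $n$ outputs into $\ell=k^3$ buckets of $m=n/\ell$ qubits. It keeps every index-like object on $\poly(k)$-qubit registers: an occupancy register $\sum_j\sqrt{t(j)}\ket{e_j}$ with $t(j)\propto p(j)/q(j)$, expanded by \cref{col:dickesuperpos} into $\ket{D^\ell_j}$ on $\ell$ bucket indicators.
\begin{itemize}
\item Each indicator controls the preparation of the damped-binomial state $\ket{\mathcal{S}^m_k}=\sum_{i\le k}\sqrt{s(i)}\ket{D^m_i}$ on its own bucket. This uses the reflection trick of \cref{lem:ctrlstate} and \cref{cor:constctrl}, so the control qubit is never copied across the bucket.
\item Since $\ket{\mathcal{S}^m_k}\perp\ket{0^m}$, each indicator is erased by one OR over its bucket.
\item The weights $s(i)\propto\binom{m}{i}(mk)^{-i}$ make all strings with the same occupancy and weight equally likely (\cref{cl:shiftedoccuniform}).
\item The $p(j)/q(j)$ reweighting makes a single global $\EXACT^n_k$-marked amplification succeed with probability $1/R=\Theta(1)$ (\cref{cl:hybridtoocc}, \cref{lem:binomoccckt}).
\item The occupancy register is then erased with bucket ORs and $\HAM^\ell_k$.
\end{itemize}
The only gate touching all $n$ qubits is a threshold, which $\FANOUT_k$ provides, so the decoder/rank problem never arises.
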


\noindent When $k = \polylog(n)$, $\FANOUT_k$ can be implemented using $\poly(n)$ ancillae in $\QACZ$ \cite{rosenthal2021qac0, grier2026tc0}. This immediately leads to the following corollary. 

\begin{restatable}[Poly-log weight Dicke states in $\QACZ$]{corollary}{polylogdicke}\label{thm:polylog_dicke_intro} 
For all $n$ and $k = \polylog(n)$, the states $\ket{D^n_k}$ and $\ket{D^n_{n-k}}$ can be synthesized exactly and cleanly in constant depth with $\poly(n)$-ancilla, using only multi-qubit Toffoli gates and single qubit unitaries.
\end{restatable}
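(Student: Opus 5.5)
The corollary follows by composing \cref{thm:kdicke_intro} with the known $\QACZ$ implementations of small-width fanout. We treat it as a black-box substitution argument.

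First, fix $k=\polylog(n)$ with $k \le n/2$; the case $\ket{D^n_{n-k}}$ is covered by the same theorem, or alternatively obtained from $\ket{D^n_k}$ by applying $X$ to every qubit, which is a depth-one layer of single-qubit gates. Invoke \cref{thm:kdicke_intro} to get a constant-depth, $\poly(n)$-ancilla $\QACZ[\FANOUT_k]$ circuit $C$ that exactly and cleanly prepares $\ket{D^n_k}$. Since $C$ has constant depth and polynomial size, it contains at most $\poly(n)$ gates of type $\FANOUT_{k'}$, each with $k' \le k$. Any narrower fanout can be viewed as a $\FANOUT_k$ padded with idle ancillae.

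Second, replace each $\FANOUT_k$ gate by the $\QACZ$ circuit of \cite{rosenthal2021qac0, grier2026tc0}. For $k=\polylog(n)$, that circuit implements $\FANOUT_k$ exactly and cleanly in constant depth with $\poly(n)$ ancillae. Depth composes additively over the constantly many layers, so the total depth stays $O(1)$. Ancilla count grows to at most $\poly(n)\cdot\poly(n)=\poly(n)$. Fresh ancillae are allocated per gate instance, so gates within a layer can still run in parallel.

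The one step needing care is exactness and cleanliness of the substitution. If the fanout implementation returns its ancillae to $\ket{0}$ exactly, then it acts as the ideal $\FANOUT_k$ unitary on every input, including superpositions. Each replacement is then an exact identity of unitaries, so the final state is exactly $\ket{D^n_k}\otimes\ket{0\cdots0}$.

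If the cited construction were only approximate, or left garbage, we would need a compute–copy–uncompute step, or an error bound summed over $\poly(n)$ uses. We expect the cited exact, clean constructions to sidestep this entirely. The resulting circuit uses only multi-qubit Toffoli gates and single-qubit unitaries, as claimed.
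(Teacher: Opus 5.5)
Your proposal is correct and takes essentially the same route as the paper. The paper also derives the corollary immediately from \cref{thm:kdicke} together with \cref{fact:kthresh} (namely $\qaczfk{\polylog(n)} = \QACZ$, via exact, clean $\QACZ$ circuits for $\FANOUT_{\polylog(n)}$), so your gate-by-gate substitution, your depth and ancilla accounting, and your observation that $\ket{D^n_{n-k}} = X^{\otimes n}\ket{D^n_k}$ all go through as stated.
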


We remark that the upper bound in \cref{thm:kdicke_intro} is \emph{tight} in the required $\FANOUT$, since previous work \cite{gretta2026paritynotinqac0iff} proved $\FANOUT_k$ is necessary to construct $\ket{D^n_k}$. Thus, we have the following equivalence. 

\begin{corollary}[Dicke states are equivalent to fanout]
   For any $n$ and $k\leq n/2$, the state $\ket{D^n_k}$ can be prepared in $\QACZ$ if and only if $\FANOUT_k \in \QACZ$.  
\end{corollary}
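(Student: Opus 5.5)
The corollary is two implications, each obtained by composing results already in hand. Fix $n$ and $k \le n/2$.

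For the ``if'' direction, assume $\FANOUT_k \in \QACZ$. That is, there is a constant-depth, $\poly(n)$-ancilla $\QACZ$ circuit $F$ that implements $\FANOUT_k$ exactly and cleanly. By \cref{thm:kdicke_intro}, there is a constant-depth, $\poly(n)$-ancilla $\QACZ[\FANOUT_k]$ circuit $C$ that prepares $\ket{D^n_k}$ exactly and cleanly. We replace every $\FANOUT_k$ gate in $C$ by a copy of $F$, giving each copy fresh ancillae. Three checks are needed.
\begin{itemize}
\item Depth: it grows by at most a constant factor, namely the depth of $F$.
\item Size: $C$ has $\poly(n)$ gates, each replaced by a circuit of size $\poly(n)$ (we may take $k \le n$, so a bound polynomial in $k$ is also polynomial in $n$).
\item Cleanliness: each copy of $F$ returns its ancillae to $\ket{0}$, so the output is exactly $\ket{D^n_k}$ tensored with clean ancillae.
\end{itemize}
Hence $\ket{D^n_k}$ is preparable in $\QACZ$.

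For the ``only if'' direction, assume some $\QACZ$ circuit prepares $\ket{D^n_k}$. We invoke the reduction of \cite{gretta2026paritynotinqac0iff}, which shows that $\FANOUT_k$ is necessary for preparing weight-$k$ Dicke states. Concretely, a $\QACZ$ circuit for $\ket{D^n_k}$ can be converted, with constant overhead in depth and polynomial overhead in ancillae, into a $\QACZ$ implementation of $\FANOUT_k$ (equivalently $\PARITY_k$, which is interreducible with $\FANOUT_k$ in $\QACZ$). Therefore $\FANOUT_k \in \QACZ$.

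The main obstacle is aligning the notions of ``preparable'' and ``$\in \QACZ$'' across the two sources. The ``if'' direction needs exactness and cleanliness of the $\FANOUT_k$ implementation. The ``only if'' direction must match the hypotheses (approximation error, ancilla cleanliness) under which \cite{gretta2026paritynotinqac0iff} proves necessity. I would fix the convention to be exact, clean, $\poly(n)$-ancilla, constant depth, and then check that both the substitution argument and the cited reduction hold under that convention. For the case $k \le n/2$ no extra symmetry argument is needed, since $\ket{D^n_{n-k}}$ is obtained from $\ket{D^n_k}$ by applying $X$ to every qubit.
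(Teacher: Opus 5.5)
Your proposal is correct and follows the paper's own argument. The ``if'' direction is \cref{thm:kdicke_intro} with each $\FANOUT_k$ gate replaced by its $\QACZ$ implementation. The ``only if'' direction is the cited reduction of \cite{gretta2026paritynotinqac0iff}, which (as the paper notes) applies even when the Dicke preparation is neither clean nor exact, so your concern about matching hypotheses is resolved in your favor.
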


Our constructions require substantially different techniques than the constant-weight Dicke state construction of \cite{cleandicke2026}, as we elaborate upon in \Cref{sec:overview}.

Moreover, our results generalize to symmetric states that are supported on low weight Dicke states.

\begin{theorem}[General symmetric states with $\FANOUT_k$ (Informal \Cref{thm:any_symm})]\label{thm:symmetric_k_intro}
Let $k \le n$. Any $n$-qubit symmetric state $\ket{\psi}$ supported on weights at most $k$ can be constructed exactly and cleanly in constant depth with $\poly(n)$ ancillae by a $\QACZ[\FANOUT_k]$ circuit.
\end{theorem}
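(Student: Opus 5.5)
The plan is to write $\ket{\psi}=\sum_{j=0}^{k}\alpha_j\ket{D^n_j}$ and reduce to two subroutines. The first prepares a small ``weight register'' in the state $\sum_j \alpha_j\ket{j}$. The second is controlled on $j$ and prepares $\ket{D^n_j}$ on the $n$ target qubits. Once the Dicke state has been written, the weight register is uncomputed by recomputing $j$ from the target's Hamming weight. Each piece has to stay in constant depth and use only $\FANOUT_k$.

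\textbf{Step 1: the weight register.} Since $j\le k$, we may store $j$ in unary, as the string $1^j0^{k-j}$ on $k$ qubits. The state $\sum_j\alpha_j\ket{1^j0^{k-j}}$ lives in a $(k+1)$-dimensional span of a ``thermometer'' code, and can be prepared by a cascade of controlled rotations. To avoid depth $k$, prepare a one-hot state $\sum_j\alpha_j\ket{e_j}$ on $k+1$ qubits instead. This one-hot state is a generalized W-type state. It can be built in constant depth by a binary-tree splitting of amplitudes: at the root a single rotation splits the mass between two halves, and each child node is controlled on its parent. That tree has depth $\log k$, however, so I would instead follow the paper's toolkit. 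Prepare an $O(\log n)$-qubit state $\sum_j\alpha_j\ket{j}_{\mathrm{bin}}$ (binary, $\lceil\log(k+1)\rceil$ qubits) using the $O(\log n)$-qubit synthesis result. Then convert binary to unary/one-hot in constant depth: for each $j$, a Toffoli on the $\log k$ bits with appropriate negations writes $[\,\mathrm{bin}=j\,]$. This needs $k$ copies of each binary bit, which $\FANOUT_k$ provides. The binary register is then cleaned by computing each binary bit as an OR of one-hot bits, which needs constant depth with Toffolis.

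\textbf{Step 2: controlled Dicke preparation.} With a one-hot register $\ket{e_j}$, I would prepare each $\ket{D^n_j}$ for $j\le k$ on a separate fresh block of $n$ qubits, using \cref{thm:kdicke} (which needs only $\FANOUT_j\subseteq\FANOUT_k$). I then select which block to route into the target using the one-hot control. The difficulty is that the unselected blocks stay entangled garbage. To avoid this, I would make the Dicke construction itself controllable: every gate of the circuit for $\ket{D^n_j}$ is controlled on the bit $e_j$. Multi-controlled Toffolis absorb an extra control for free. Single-qubit gates become controlled single-qubit gates, each needing its own copy of $e_j$, so I need $\poly(n)$ copies of the control. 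That is more than $\FANOUT_k$ provides, and it is the main obstacle. I would resolve it by inspecting the structure of the \cref{thm:kdicke} circuit, expecting that the control need only be applied to the initial ``seed'' layer (e.g.\ the $j$ marked positions/weights chosen), which uses $O(k)$ qubits. The remaining layers act as the identity on $\ket{0\cdots0}$ when no seed is present. Alternatively, I would run all $k$ constructions in parallel, summing the outputs into the target in a one-hot-controlled way via OR of the $k+1$ candidate bits per position. For that I need the unselected blocks to be uncomputed back to $\ket{0}$; conditioning their seeds on $e_j$ achieves exactly this.

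\textbf{Step 3: cleanup.} The target now holds $\sum_j\alpha_j\ket{e_j}\ket{D^n_j}$. To erase $e_j$, compute ``weight of target $=j$'' for $j\le k$. Exact Hamming weight counting up to threshold $k$ is what the $\FANOUT_k$ Dicke machinery in \cref{thm:kdicke} already uses, via parallel checks on $\poly(n)$ hashed buckets. I would reuse it to XOR into $e_j$, then run Step 1 in reverse on the emptied register.
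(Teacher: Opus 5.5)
Your Steps 1 and 3 are fine. For Step 1, \cref{lem:anyonehotqac} plus single-qubit phases gives $\sum_j\alpha_j\ket{e_j}$ directly. Step 3 is \cref{lem:hamgate}, with the weight test coming from $\THRESHOLD$ gates via \cref{fact:kthresh}. Step 2, however, has a genuine gap. It is exactly the obstacle the paper points out: black-box controlled preparations of the individual $\ket{D^n_j}$ cannot be combined using only $\FANOUT_k$.

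Take your parallel-blocks variant first. Once the unselected blocks are zero, the per-position OR $t_i=\bigvee_b x^{(b)}_i$ \emph{copies} the selected block into the target rather than moving it. You are left with $\sum_j\alpha_j\ket{e_j}\binom{n}{j}^{-1/2}\sum_{|x|=j}\ket{x}_{\text{block }j}\ket{x}_T$, in which the target is entangled with block $j$. Conditioning the seeds on $e_j$ cleans only the \emph{unselected} blocks. Erasing the selected one requires XORing each $t_i$ into block $j$ controlled on $e_j$ (or on a recomputed $[|t|=j]$), so a single control bit must reach all $n$ positions, i.e.\ $\FANOUT_n$. There is no per-position fix. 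On the promise set, the map $(x^{(0)}_i,\dots,x^{(k)}_i)\mapsto\bigvee_b x^{(b)}_i$ forgets which block held the $1$, and that information exists only globally.

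Your other variant (control only the seed) may give each single controlled preparation with $O(k)$ control copies. But the $k+1$ circuits then all act on the same $n$ target qubits. Run sequentially, they cost depth $\Theta(k)$. They also cannot be fused into one circuit, because everything after the seed in the \cref{thm:kdicke} construction depends on $j$: the number of buckets $j^3$, the damped binomial $\S^m_j$, the weight-$j$ marking, and the Grover angle and round count.

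The missing idea is to make the whole pipeline after the seed independent of the target weight, so that the weight enters only through a $\poly(k)$-qubit seed and the final marking. Switching to the appendix's notation, write $k_*$ for the maximum weight, $k\le k_*$ for a weight, and $j$ for the number of occupied buckets. The paper proceeds as follows:
\begin{itemize}
\item It prepares $\sum_{k,j}\delta^k_j\ket{e_k}_Q\ket{e_j}_A$ with $\delta^k_j\propto\alpha_k\sqrt{p_k(j)/q_k(j)}$.
\item It uses one common bucketing with $\ell=k_*^3$ buckets. The controlled Dicke step then lives on $\poly(k_*)$ qubits, where $\FANOUT_{k_*}$ suffices. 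Each bucket needs only its own single control bit $b_i$, handled fanout-free by \cref{lem:ctrlstate}.
\item It places the same damped binomial $\S^m_{k_*}$ in every occupied bucket, for every weight.
\item It marks weight $k$ controlled on $\ket{e_k}_Q$ (\cref{lem:customthresh}), which needs only $k$ copies of the target.
\item It performs a single global amplitude amplification.
\end{itemize}
The nontrivial quantitative input is \cref{cl:genhybridtoocc}. Even with the shared $\S^m_{k_*}$, each $R_k=\sum_j p_k(j)/q_k(j)$ is $O(1)$, so the marked amplitude $1/R$ with $R=\sum_k|\alpha_k|^2R_k$ is constant. Only after this are $A$ and $Q$ uncomputed by the occupancy and Hamming-weight gadgets, as in your Step 3.
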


\noindent As a consequence, we obtain constant-depth $\poly(n)$-sized constructions for all symmetric states using $\FANOUT_n$.

\begin{corollary}[Arbitrary symmetric states in constant depth]\label{cor:symmetric_qaczf}
Every symmetric state on $n$ qubits can be prepared exactly and cleanly by a constant-depth $\poly(n)$ ancilla $\QACZF$ circuit.
\end{corollary}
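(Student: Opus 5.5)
This follows directly from \Cref{thm:symmetric_k_intro}, instantiated with $k=n$. A symmetric $n$-qubit state is supported on Hamming weights $0,\dots,n$, so it is trivially supported on weights at most $n$. We can therefore expand it as
\[
\ket{\psi}=\sum_{j=0}^{n}\alpha_j\ket{D^n_j}.
\]
\Cref{thm:symmetric_k_intro} then gives an exact, clean, constant-depth, $\poly(n)$-ancilla $\QACZ[\FANOUT_n]$ circuit preparing $\ket{\psi}$.

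It remains to check that $\QACZ[\FANOUT_n]$ circuits of constant depth and polynomial size are $\QACZF$ circuits. By definition, $\QACZF$ augments $\QACZ$ with unbounded fanout gates. A $\FANOUT_n$ gate acting on $n+1$ qubits is exactly such a gate. Any smaller fanout gate $\FANOUT_m$ with $m\le n$ is a special case of $\FANOUT_n$ (apply it with idle targets on fresh ancillae), or is directly available in $\QACZF$. The depth, the ancilla count, and the exactness and cleanliness guarantees carry over unchanged.

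There is essentially no obstacle here. The only care needed is to confirm that the informal statement of \Cref{thm:symmetric_k_intro} (formally \Cref{thm:any_symm}) permits $k=n$ with no hidden constraint such as $k\le n/2$. Its statement says ``$k\le n$'', so it does. If the formal version did require $k\le n/2$, the fallback would be to prepare the state in the Dicke basis directly. That would use the $O(\log n)$-qubit arbitrary-state synthesis via $\QRAM_n\in\QACZF$ to create $\sum_j\alpha_j\ket{j}$, and then map $\ket{j}\mapsto\ket{D^n_j}$ coherently using the $\FANOUT_n$-based Dicke constructions. That route is far more involved, and the direct instantiation above suffices.
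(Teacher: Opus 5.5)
Your proposal is correct and matches the paper's argument: the corollary is \cref{thm:any_symm} instantiated at $k=n$, together with the observation that $\QACZ[\FANOUT_n]$ circuits are $\QACZF$ circuits. One caveat: at $k=n$ the paper's bucketing proof of that theorem does not literally go through, because it uses $\ell=k^3=n^3>n$ buckets, so $m=1<k$ and the padding step's bound $p_{k,0}\ge 1/2$ fails; this regime is really justified by the route you list as a fallback---prepare an $O(\log n)$-qubit or one-hot amplitude register, then prepare each $\ket{D^n_j}$ controlled on it using fanout, then uncompute the register---which, as the paper's overview itself remarks, works directly once $\FANOUT_n$ is available and is no harder than the bucketing construction.
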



Finally, we state the following result for constructing arbitrary states in constant depth with exponentially many ancillae using $\QRAM$, scaled down to $m=O(\log n)$ as follows. 

\begin{theorem}[General state synthesis with QRAM (Informal \cref{thm:anylognqram})]
Every $\ell = O(\log n)$-qubit state $\ket{\psi}$ can be cleanly prepared in $\QACZ[\QRAM_n]$ using $\poly(n)$ ancillae. 
\end{theorem}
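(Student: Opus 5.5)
We prove the final statement, that every $\ell = O(\log n)$-qubit state $\ket{\psi}$ can be cleanly prepared in $\QACZ[\QRAM_n]$ with $\poly(n)$ ancillae, by emulating the standard Grover--Rudolph / binary-tree amplitude-loading scheme. Write $\ket{\psi}=\sum_{x\in\bin^\ell}\alpha_x\ket{x}$, and let $N=2^\ell=\poly(n)$. Preparing $\ket{\psi}$ qubit by qubit needs $\ell$ sequential conditional rotations, where the angle on qubit $j$ depends on the prefix $x_1\cdots x_{j-1}$. Since $\ell=O(\log n)$ is super-constant, we cannot apply these rounds sequentially. Instead we use the following alternative, which fits constant depth.

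\textbf{Step 1 (uniform superposition over an index register).} Prepare $\frac{1}{\sqrt N}\sum_{i\in[N]}\ket{i}$ on $\ell$ qubits with Hadamards. Also prepare a classical (basis-state) description of the amplitudes: write each $\alpha_x = r_x e^{i\phi_x}$ and hardcode $\phi_x$ into ancilla registers as classical bits, which needs no gates beyond $X$'s. These registers play the role of the database $x$ in the $\QRAM_n$ gate.

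\textbf{Step 2 (constant-depth amplitude loading with postselection-free success).} The clean way to avoid $\ell$ rounds is the ``unary'' trick. First map the binary index $\ket{i}$ to a one-hot register $\ket{e_i}$ of length $N$. The forward direction is done with $N$ parallel Toffolis, each checking $i=j$ with controls possibly negated. These Toffolis all read the same $\ell$ control qubits, so they require fanning out $i$ to $N$ copies. We do this with $\QRAM$: a single $\QRAM_N$ call indexing into a hardcoded all-ones pattern restricted to position $j$ returns $[i=j]$. More simply, apply $\QRAM$ with database $e_j$ and address $i$, one call per $j$. Each call uses the address register as a control, and since $\QRAM$ is a gate we may apply many of them. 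Address reuse is the subtle point: $N$ gates sharing the address register cannot all act in one layer. We therefore first copy $i$ $N$ times, which is $\FANOUT_N$ on $\ell$ bits and only $O(\log n)$-wide per bit. Since $\FANOUT_{O(\log n)}\in\QACZ$ (from \cite{rosenthal2021qac0}) and we need width $N$, we instead use $\QRAM$ with database all-ones padded by a target $b$ to fan out a bit in constant depth: indexing a register whose contents are the copied bit yields a copy.

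Rather than loading the amplitudes through such a unary register, we will instead use the known depth-$O(1)$ $\QACZF$ construction of \cite{rosenthal2021query}. We verify that every use of $\FANOUT_n$ there is a fanout of a classical value into $\poly(n)$ read-only copies that are later uncomputed. Each such fanout can be replaced by $\QRAM_n$ calls: a $\QRAM$ call with database register $y$ and address fixed to a position holding the bit returns a copy of that bit. Parallel copies come from parallel $\QRAM$ gates on disjoint copies produced hierarchically over a constant number of levels, each of polynomial branching. The main obstacle is exactly this step: ensuring that each fanout in \cite{rosenthal2021query} is a coherent copy of a single control qubit that $\QRAM$ can simulate with bounded depth. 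Since $\QRAM$ reads many database bits but writes only one target, copying one bit to many targets requires a reversal trick. For that we conjugate by Hadamards, turning copying into parity computation on the targets, and compute the parity via $\QRAM$ with a hardcoded parity-indexed table over $O(\log n)$-bit chunks, combined over constant levels. Finally, uncompute all ancillae by running the copying steps in reverse, which gives a clean preparation.
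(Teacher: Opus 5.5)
Your argument has a genuine gap at its central step: taking the $\QACZF$ construction of \cite{rosenthal2021query} and replacing each $\FANOUT_n$ by $\QRAM_n$ gates. That step is a constant-depth simulation of fanout by $\QRAM$, and it is not known to exist. The paper stresses that $\QACZ[\QRAM_n]\subseteq\QACZ[\FAN_n]$ while the converse is open. Restricting to ``fanouts of classical values'' does not help: copying a control bit that is in superposition coherently into many targets is the full $\FANOUT$ gate.

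The mechanisms you sketch do not close the gap.
\begin{itemize}
\item A $\QRAM$ gate, like a Toffoli, has a single target, and each qubit is in at most one gate per layer. So copying a bit by fixed-address reads at most doubles the number of copies per layer, which needs depth at least $\log_2 N$. A constant number of levels of polynomial branching is not possible this way.
\item The Hadamard-conjugation route fails outright. Lookup tables of $\poly(n)$ size only handle $O(\log n)$-bit chunks, so each level shrinks the number of bits by an $O(\log n)$ factor, and you need $\Theta(\log n/\log\log n)$ levels. In fact, on basis states any constant-depth polynomial-size circuit of Toffoli and $\QRAM$ gates computes an $\ACZ$ function, since Indexing is depth-2 $\ACZ$. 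Your table-based parity circuit would therefore put $\PARITY_n$ in $\ACZ$, contradicting H{\aa}stad's theorem \cite{hastad1986switch}.
\item Separately, your Step 1 never loads the magnitudes $|\alpha_x|$. Arbitrary real phases also cannot be hardcoded exactly as classical bits; single-qubit phase gates make that unnecessary anyway.
\end{itemize}

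The missing idea is that no general fanout is needed: load the amplitudes in unary, and use $\QRAM$ once, as a binary-to-one-hot decoder.
\begin{itemize}
\item \cref{lem:anyonehotqac} prepares $\sum_i\sqrt{p_i}\ket{e_i}$ in plain $\QACZ$. The product state $\bigotimes_i\ket{p_i/(1+p_i)}$ has weight-one component $\sqrt{\alpha}\sum_i\sqrt{p_i}\ket{e_i}$ with $\alpha=\prod_i(1+p_i)^{-1}\in[1/e,1/2]$. Marking with $\EXACT_1$ and applying exact amplitude amplification (\cref{cor:constamptoexact}) yields this state exactly, and single-qubit phase gates then set arbitrary phases.
\item Then compress $\ket{e_i}\mapsto\ket{i}$ as in \cref{cl:onehot}. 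The bits of $i$ are ORs of one-hot qubits, and each qubit enters only $O(\log n)$ ORs, so this needs only $\FANOUT_{O(\log n)}\in\QACZ$.
\item The only wide-fanout step is erasing $e_i$ given $i$. This is a single $\QRAM$ gate conjugated by Hadamards on the database and target. The conjugation turns the CNOT from $x_i$ into $t$ into a CNOT from $t$ into $x_i$, i.e.\ $\ket{i}\ket{x}\ket{b}\mapsto\ket{i}\ket{x\oplus b\,e_i}\ket{b}$.
\end{itemize}
You were close in your Step 2. Rather than $N$ calls with fixed databases $e_j$ all sharing the address register, make one call whose database is the one-hot register itself, so that ``read position $i$'' becomes ``write position $i$''.
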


\noindent Notably, \cite{rosenthal2021query} showed a comparable upper-bound, which, when scaled down, gives a $\poly(n)$-size $\QACZ[\FAN_n]$ circuit for every $O(\log n)$-qubit state. Our construction is meaningfully different and relaxes the resource requirement, since $\FAN_n$ gates are known to be at least as powerful as $\QRAM_n$ gates.
\subsection{Prior work}
$\QACZ$, the quantum analogue of $\ACZ$, is defined as the class of constant depth circuits with a polynomial number of ancilla qubits. The gate set of $\QACZ$ consists of $\poly(n)$-width Toffoli gates (reversible $\AND$) and single-qubit unitaries \cite{moore1999qac0}. Although wide Toffoli gates allow for global interactions among qubits, these interactions are seemingly weaker than those allowed by the $\FANOUT$ gate. For instance, $\FANOUT_n$ gates and local unitaries can simulate $\poly(n)$-width Toffoli gates in constant depth ($\QACZ \subseteq \QNC_f = \QACZF$) \cite{takahashi2012collapse} . However whether the converse, $\FANOUT_n \in \QACZ$, holds is a major open problem in quantum circuit complexity. 

Beyond specific state families, \cite{rosenthal2021query} showed that every $m$-qubit state can be prepared exactly by a constant-depth $\QACZF$ circuit (unbounded $\FANOUT$ gates) using $\widetilde{O}(2^m)$ ancillae. In particular, every $O(\log n)$-qubit state admits such a construction using $\FANOUT_n$ and $\poly(n)$ ancillae.


The work on Dicke states in the setting of $\QACZ$, which allows arbitrary-width gates, is fairly new. Meanwhile, there is extensive work on preparing Dicke states in the more restricted setting where the gates are subject to locality constraints that limit their width to $O(1)$. Upper bounds in this setting include \cite{bartschi2022detdicke,bartschi2022shortdicke,yuan2025depth,cruz2019w,liu2024low, raveh2024dicke}, all of which require $\Omega(\log n)$ depth as dictated by light-cone based lower-bounds \cite{moore2001parallel}. Achieving constant depth requires bypassing this lower bound and some prior works do so by allowing for (long-range) classical communication and post-processing. These result in non-unitary constructions that are incomparable to ours and include \cite{Buhrman2024statepreparation,piroli2024approximate,farrell2025digital, yu2025efficient}.

Light-cone lower-bounds break down in the $\QACZ$ setting in which Dicke states have recently been of interest for both algorithms \cite{grier2026tc0, cleandicke2026} and circuit complexity, specifically in understanding the power of Toffoli gates vs $\FANOUT$ ($\QACZ$ vs $\QACZF$) \cite{gretta2026paritynotinqac0iff}.
The first constant-depth unitary construction of a nontrivial Dicke state is due to \cite{grier2026tc0}, who gave a $\QACZ$ circuit for preparing $\ket{D^n_1}$, also known as the $W$ state. Following,  \cite{cleandicke2026} showed that for any $k = O(1)$, $\dkn$ can be prepared in $\QACZ$. However for $k = \omega(1)$, the only known constant depth unitary construction is that of \cite{cleandicke2026} and uses $\FANOUT_n$ to prepare $\ket{D^n_k}$, producing a $\QACZF$ circuit. As referenced earlier, $\QACZF$ is a very powerful class and is in fact the quantum analogue of $\TCZ$, i.e. $\QACZF = \QTCZ$ \cite{grier2026tc0}. It is unknown if $\QACZ$, the quantum analogue of $\ACZ$, can simulate $\QACZF$, or even $\ACZ$. This naturally raises the question of whether Dicke states can be prepared without relying on $\FANOUT_n$ gates. 
Recently, \cite{gretta2026paritynotinqac0iff} showed that $\FANOUT_k$ is necessary to prepare $\ket{D^n_k}$. However, for $k = \omega(1)$, known constructions rely on $\FANOUT_n$, leaving a large gap between the upper and lower bounds on the required fanout.

\section{Overview}\label{sec:overview}


In this section, we describe the main ideas behind the proof of our first main theorem and briefly touch on the extension to arbitrary symmetric states.

\begin{restatable*}[$\ket{D^n_k}$ with $\FANOUT_k$]{theorem}{kdicke}\label{thm:kdicke}
There exist constants $d \in \mathbb{N}$ and $c > 0$ such that for every $n \in \mathbb{N}$ and $k \in [n]$, there is a clean depth-$d$ $\QACZ[\FANOUT_k]$ circuit of size at most $n^c$ that takes as input $\ket{0^n}$ and outputs $\ket{D^n_k}$.
\end{restatable*}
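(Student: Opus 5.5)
We reduce to $k\le n/2$, since $\ket{D^n_{n-k}}=X^{\otimes n}\ket{D^n_k}$ and the $X$ layer costs depth one. The plan is to prepare the state in two stages. First we build a ``compressed'' description of a uniformly random $k$-subset of $[n]$, living on only $O(k\log n)$ qubits. Then we expand it into the $n$-qubit indicator string. Every step that copies information should be confined to $k$ targets, so that only $\FANOUT_k$ gates are needed.

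\textbf{Stage 1 (sorted index register).} We aim to prepare the uniform superposition over sorted $k$-tuples $i_1<i_2<\dots<i_k$ of elements of $[n]$, each stored in its own $\lceil\log n\rceil$-qubit register. Equivalently, we prepare the ``gap'' encoding of the subset. We intend to obtain this from the symmetric-subspace/small-state toolkit of the paper. The idea is to first put each of $k$ registers in the uniform superposition over $[n]$, which is trivial with Hadamards when $n$ is a power of two and otherwise uses a standard $\QACZ$ construction. We then coherently test distinctness of all pairs; there are $O(k^2)$ comparisons, each an $\ACZ$ predicate on $O(\log n)$ bits, and copying each register into $k$ comparisons needs only $\FANOUT_k$. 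Finally we sort the registers via a constant-depth comparison network: the rank of $i_j$ is the count of $\ell$ with $i_\ell<i_j$, and $\FANOUT_k$ gives the counting power of threshold on $k$ bits.

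Exactness is the difficulty. The distinctness postselection succeeds with probability only $\prod_j(1-j/n)$, and the sorting step leaves $k!$ garbage permutations that must be uncomputed. To avoid amplification, I would instead prepare the sorted tuple directly, as a $k$-qubit-block state whose amplitudes are uniform over increasing sequences. This could be done by sequential conditional sampling: $i_1$ has marginal weight $\binom{n-i_1}{k-1}$, and each later $i_j$ is conditioned on the previous one. That is inherently sequential, however. So the key step is to write the uniform distribution over $k$-subsets as an exact product of \emph{independent} local choices, for instance by splitting $[n]$ into $k$ blocks and drawing block counts from a multivariate hypergeometric distribution. The block counts form an $O(k\log n)$-qubit object that can be prepared by the QRAM/$O(\log n)$-qubit state synthesis, applied recursively with $O(1)$ levels once block sizes shrink polynomially. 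This stage is the main obstacle: producing an exact, cleanly uncomputable compressed representation in constant depth without $\FANOUT_n$.

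\textbf{Stage 2 (expansion and cleanup).} Given the registers $\ket{i_1,\dots,i_k}$, we write $x_m=\bigvee_j[i_j=m]$ into each of the $n$ output qubits. Each output qubit reads all $k$ registers, so each register must be copied to $n$ places, which naively requires $\FANOUT_n$. To avoid this, output $m$ instead checks whether the rank-$r$ element equals $m$, using the block structure from Stage 1: each block index register is needed only by the outputs in its block. We also use $\QRAM$-like indexing, which lies in $\QACZ$ up to the needed width when the fanout is limited to $k$ copies. Finally, the index registers must be uncomputed from $x$. The sorted positions of ones in $x$ are a function of $x$: $i_j$ is the unique $m$ with $x_m=1$ and $\sum_{m'<m}x_{m'}=j-1$. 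Here the prefix counts are thresholds capped at $k$, computable in $\QACZ[\FANOUT_k]$. Computing this function into a fresh copy and XORing against the registers cleans them, leaving exactly $\ket{D^n_k}$.
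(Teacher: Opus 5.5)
There is a genuine gap, and the decisive failure is in Stage 2, not only in the Stage 1 issue you flag.

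As written, Stage 2 computes $x_m=\bigvee_j[i_j=m]$ for every $m\in[n]$ from $\lceil\log n\rceil$-bit addresses. That is, it applies binary-to-unary decoders $\ket{i}\ket{0^n}\mapsto\ket{i}\ket{e_i}$. Such a decoder is equivalent to $\QRAM_n$ under $\QACZ$ reductions: it is exactly what \cref{cl:onehot} extracts from $\QRAM_n$, and conversely a decoder followed by an $\OR$ of $\AND$s implements $\QRAM_n$. $\QRAM_n$ is only known to lie in $\QACZF$. Confining each address to a block of size $n/k$ (or $n/k^3$) does not help, because $\QACZ[\QRAM_s]=\QACZ[\QRAM_{\poly(s)}]$. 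So for $k=\polylog(n)$, the case behind the $\QACZ$ corollary, you would need $\QRAM_{n^{1-o(1)}}$ in plain $\QACZ$, which is open. Your claim that QRAM-like indexing lies in $\QACZ$ ``up to the needed width'' is therefore unsupported.

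The cleanup has the same defect. There are $n$ prefix counts $\sum_{m'<m}x_{m'}$, and each $x_{m'}$ enters $\Theta(n)$ of them, so computing them needs fanout of width $\Theta(n)$. Capping the counts at $k$ bounds the threshold, not the fanout.

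Stage 1 as you finally propose it is also unresolved. The multivariate-hypergeometric block counts form a state with support $\binom{2k-1}{k}=2^{\Theta(k)}$ and a global constraint $\sum_i c_i=k$. This is outside the reach of the $O(\log n)$-qubit synthesis once $k=\omega(\log n)$, and recursing does not remove the global constraint.

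Your first Stage 1 idea is more salvageable than you suggest. For $k\ge n^{1/3}$, \cref{fact:kthresh} gives $\FANOUT_n$ and \cref{fact:fanoutdicke} applies directly. For $k<n^{1/3}$, distinctness holds with probability $1-O(k^2/n)$, and the uniform-permutation garbage can be prepared, hence uncomputed, at scale $\poly(k)$. The real obstruction is the conversion between $\log n$-bit addresses and the $n$-qubit unary string.

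The missing idea is to never represent positions in binary at scale $n$. The paper splits $[n]$ into $\ell=k^3$ buckets of size $m$ and compresses only the occupancy pattern. This is a superposition $\sum_j\sqrt{t(j)}\ket{e_j}\ket{D^{\ell}_j}$ on $\poly(k)$ qubits, which \cref{col:dickesuperpos} prepares with $\FANOUT_{\poly(k)}$.

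Inside each occupied bucket, the ones are written directly in unary on the output qubits by the controlled damped-binomial preparation $\ket{1}\ket{0^m}\mapsto\ket{1}\ket{\S^m_k}$ (\cref{cor:ctrlhybrid}). This is built from $\ket{1/m}^{\otimes m}$ using only weight information capped at $k$ (\cref{lem:hamgate}, \cref{lem:adjustamp}), and made controllable by the reflection trick of \cref{lem:ctrlstate}.

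The weights $s(i)\propto\binom{m}{i}(mk)^{-i}$ make each string's amplitude depend only on its total weight once the occupancy is fixed (\cref{cl:shiftedoccuniform}). Taking $t(j)\propto p(j)/q(j)$ with $\sum_j p(j)/q(j)=O(1)$ (\cref{cl:hybridtoocc}) then lets a single final weight-$k$ amplification yield the occupancy state exactly. The occupancy register is cleared with one $\OR$ per bucket and $\HAM^{\ell}_k$ on $k^3$ bits. Every fanout in this construction has width $\poly(k)$, which is what makes $\FANOUT_k$ sufficient.
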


We prove this theorem in \cref{sec:dicke} and provide an overview below. In \cref{sec:primitives}, we introduce a toolkit of primitives for constant-depth state synthesis, with particular emphasis on limited fanout. The first set of primitives we introduce enables controlled operations for certain structured circuits. Note that implementing a controlled version of an otherwise arbitrary $\QACZ$ circuit cannot necessarily be done in $\QACZ$ (See \cref{sec:ctrlunitaries}). We also introduce additional primitives to manipulate amplitudes by rescaling branches (\cref{lem:adjustamp}). Together, these tools allow us to construct more complex states. In particular, we show that every $O(\log n)$-qubit state can be prepared cleanly in $\QACZ[\QRAM_n]$ using $\poly(n)$ ancillae (\cref{thm:anylognqram}).





\paragraph{Existing tools and their limitations.} We start by explaining the approach used in \cite{cleandicke2026} to construct constant-weight Dicke states. 

\cite{grier2026tc0} proved that Grover's amplitude amplification procedure \cite{Grover} can be performed in $\QACZ$. This can be used to amplify any state of the form $\sqrt{\alpha} \ket{\psi^*}\ket{1} + \sqrt{1-\alpha} \ket{\bad}\ket{0}$ with at least a constant $\alpha$ weight on the marked part, to the marked state $\ket{\psi^*}$ in constant depth. \cite{grier2026tc0} also showed that $\THRESHOLD^n_k$, and therefore $\EXACT^n_k$, can be implemented using $\FANOUT_k$ in $\QACZ$. 

One approach to constructing a Dicke state is to start with the state $\ket{k/n}^{\tens n}\ket{0}$ ($\ket{p}:=\sqrt{1-p}\ket{0}+\sqrt{p}\ket{1}$.) Using the $\EXACT^n_k$ gate, we mark the component with exactly $k$ 1's to produce a state of the form $\sqrt{\alpha} \dkn\ket{1} + \sqrt{1-\alpha} \ket{\bad}\ket{0}$. From here, if $\alpha$ is constant, we can apply Grover's amplitude amplification procedure a constant number of times to produce $\dkn$ exactly. Since $\alpha$ corresponds exactly to the probability of getting exactly $k$ ones in $n$ independent $\text{Bernoulli}(k/n)$ samples, $\alpha$ is indeed constant when $k$ is constant. 

However, this idea falls short as soon as $k = \omega(1)$. Indeed, there is no similar $O(1)$-depth amplification procedure anymore, and that is the central challenge that we must overcome.


\subsection{Our construction}

In order to use the $\EXACT^n_k$ gate and Grover's amplitude amplification, we need to construct some state of the form $\sqrt{\alpha} \dkn \ket{1} + \sqrt{1-\alpha} \ket{\bad}\ket{0}$ where $\alpha$ is constant and $\ket{\bad}$ only has amplitude on strings of weight not equal to $k$.

\noindent It suffices to construct some $\ket{\psi}$ that meets the following two criteria:

\begin{enumerate}[label=(\arabic*), leftmargin=3em]
    \item \label{itm:const-fid} $\ket{\psi}$ has constant fidelity with $\dkn$.
    \item $\ket{\psi}$ has equal amplitude on all strings with $k$ ones.
\end{enumerate}

Note the necessity of the second constraint. If $\ket{\psi}$ had different amplitudes on the different weight-$k$ strings, then applying the $\EXACT^n_k$ gate to $\ket{\psi}$ would not result in a marked branch exactly equal to $\dkn$, which is needed for the amplitude amplification step. In particular, it is worth noting that \ref{itm:const-fid} may be satisfied even if some weight-$k$ strings are entirely missing from the support of $\ket{\psi}$.

\subsubsection{Constructing a state with constant fidelity with \safedkn.}
For simplicity, assume that $n \geq k^3$ and that $k^3$ divides $n$; the general parameter regime is handled in the formal proof.
We will start with the simpler goal of just constructing a state $\ket{\psi}$ that has constant fidelity with $\dkn$, i.e. a state that satisfies \ref{itm:const-fid}.
Although known constructions of $\dkn$ require $\FANOUT_n$, we can apply them to prepare $\ket{D^{k^3}_k}$ using $\FANOUT_k$ \cite{cleandicke2026}.
Starting with this state, we will ``expand'' each qubit into $m=n/k^3$ qubits, so that ultimately the state will be on $n$ qubits.

More formally, we will make use of the following controlled-W map, which is known to be in $\QACZ$, for example as a consequence of \cref{lem:ctrlstate} and \cref{fact:zerow}. 
\begin{itemize}
    \item $\ket{0}\ket{0^m} \to \ket{0}\ket{0^m}$
    \item $\ket{1}\ket{0^m} \to \ket{0}\ket{W_m}$
\end{itemize}

First, partition the $n$ target qubits into $\ell = k^3$ buckets of size $m$. On $\ell$ ancilla qubits, construct $\ket{D^{\ell}_k}$, where each qubit corresponds to one of the $\ell$ buckets. Then, for each bucket, whenever the corresponding ancilla qubit is a $1$, map it to $\ket{W_m}$ on the target qubits within that bucket, and otherwise to $\ket{0^m}$. This procedure is illustrated below in \Cref{fig:bucket_pic}.

\usetikzlibrary{matrix,decorations.pathreplacing,positioning,calc}
\newcommand{\mycircle}{\tikz[baseline=-0.6ex]\draw[line width=0.08ex] (0,0) circle (0.9ex);}
\newcommand{\myfilledcircle}{\tikz[baseline=-0.6ex]\fill (0,0) circle (0.9ex);}
\begin{figure}[ht]
\centering
\begin{tikzpicture}[every node/.style={anchor=base}]
\matrix (m) [matrix of math nodes,row sep=1em,column sep=0.8em,row 4/.style={nodes={yshift=-1.8em}}]{
\myfilledcircle & \longrightarrow & \mycircle & \myfilledcircle & \mycircle & \dots & \mycircle \\
\mycircle & \longrightarrow & \mycircle & \mycircle & \mycircle & \dots & \mycircle \\
\vdots & & & & & & \\
\myfilledcircle & \longrightarrow & \mycircle & \mycircle & \myfilledcircle & \dots & \mycircle \\
};
\draw[decorate,decoration={brace,amplitude=4pt,mirror}]
($(m-1-1.north west)+(-0.1em,0.1em)$) -- ($(m-4-1.south west)+(-0.1em,-0.1em)$)
node[midway,left=6pt] {$\ket{D^{k^3}_k}$};
\draw[decorate,decoration={brace,amplitude=4pt}] ($(m-1-3.north west)+(0,0.1em)$) -- ($(m-1-7.north east)+(0,0.1em)$) node[midway,above=2pt] {$\ket{W_m}$};
\draw[decorate,decoration={brace,mirror,amplitude=4pt}] ($(m-2-3.south west)+(0,-0.1em)$) -- ($(m-2-7.south east)+(0,-0.1em)$) node[midway,below=2pt] {$\ket{0^m}$};
\draw[decorate,decoration={brace,amplitude=4pt}] ($(m-4-3.north west)+(0,0.1em)$) -- ($(m-4-7.north east)+(0,0.1em)$) node[midway,above=2pt] {$\ket{W_m}$};
\end{tikzpicture}
\caption{Visualization of the bucket occupancy of $\ell=k^3$ buckets (left) and the corresponding $n=mk^3$-qubit state on each bucket obtained via a controlled-W preparation on each group of $m$ qubits (right).}\label{fig:bucket_pic}
\end{figure}
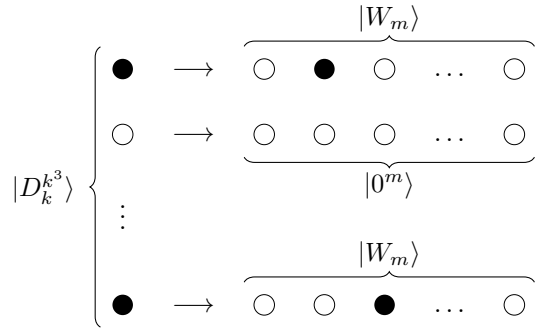

This gives us an equal superposition over all strings of weight $k$ \emph{where each bucket has at most one 1}. In order to show that this state has constant fidelity with $\dkn$, it suffices to show that a uniformly random weight-$k$ string has constant probability of having at most one $1$ in every bucket. This is true by a simple birthday paradox argument: as long as $\ell=\omega(k^2)$, it is unlikely that two 1s will be in the same bucket. In fact, since a collision occurs with $o(1)$ probability, only an $o(1)$-fraction of weight-$k$ strings are missing from our superposition.

At a high level, this method overcomes the challenge of making strings of weight exactly $k$ appear with constant probability by using the $\ket{D^{\ell}_k}$ construction. However, in the process, we lose the small fraction of strings that don't adhere to the induced bucket structure, which violates our second criterion.

\subsubsection{Adding back the missing strings.}

If there are $j<k$ non-empty buckets in a given string, it will not appear in our superposition. Let $p(j)$ denote the probability that a uniformly random weight-$k$ string occupies exactly $j$ of the $\ell$ buckets. More precisely, we can write $\dkn$ as follows:

\begin{align}
\dkn &= \sum_{j = 1}^k \sqrt{p(j)} \dmkj
\end{align}

Here, $\dmkj$ is the superposition of all weight $k$ strings where the ones fall into $j$ of the $\ell$ buckets. The previous section approximated the full state by constructing only the $j=k$ case $\ket{D^{m,\ell}_{k,k}}$. 

In the next section, we will describe a more general construction that also works for each $j<k$. We remark that this case was especially nice because all non-empty buckets have exactly one 1, but this property does not hold in general. Moreover, the construction we described for the $j=k$ case was exact, i.e. only had strings of weight $k$, but the general argument described below will produce strings with a $\ket{\bad}$ component that we will need to remove with amplitude amplification.

\paragraph{Constructing strings for a specific $j<k$.}

For a given $j<k$, we similarly start by constructing $\ket{D^{\ell}_j}$ on ancilla qubits, where the $1$'s correspond to the nonempty buckets. We need to modify our construction so that a non-empty bucket can have more than one 1. Instead of using a map that takes $\ket{1} \to \ket{1}\ket{W_m}$, we will use a map which allows each bucket to have more than one 1 with some small probability. More specifically, define $\ket{\varphi(s(1),\ldots,s(k))} := \sum_i \sqrt{s(i)}\ket{D^m_i}$. As an example, the W state has $s(1)=1$ and all other $s(i)=0$. Then, we construct the following map for some specific probabilities $s(1),\ldots,s(k)$ summing to 1 that we'll solve for later: 
\begin{itemize}
    \item $\ket{0}\ket{0^m} \to \ket{0}\ket{0^m}$
    \item $\ket{1}\ket{0^m} \to \ket{0}\ket{\varphi(s(1),\ldots,s(k))}$
\end{itemize}
We do not know how to construct this state for arbitrary values of $s(1),\ldots,s(k)$ (in fact, this is already more general than \Cref{thm:kdicke_intro} for $m$ instead of $n$), so we will have to pick them specifically such that we can construct this state. Intuitively, the state we will require can be constructed because the distribution we will end up needing is very close to that of the W state where $s(1)=1$ and all other $s(i)=0$.

Unlike the $j=k$ case, in this construction, it will not always hold that the resulting state has exactly $k$ ones, and we will need to post-select on such strings. We can mark all such strings, but we can only use Grover's amplitude amplification to post-select if the marked branch has constant amplitude. We might hope that we can achieve this by setting  $s(1),\ldots,s(k)$ appropriately, but actually, as it turns out, when $j\ll k$, there is actually \emph{no choice} of  $s(1),\ldots,s(k)$ that works, but we will be able to get around this by post-selecting only at the very end, and not for each choice of $j$, since that way we only require that the combined $\ket{\bad}$ parts are constant weight.

\subsection{Parameter choices and putting it together.}

Let us outline the steps of our construction more precisely:
\begin{enumerate}[label=(\arabic*), leftmargin=3em]
    \item For some to-be-determined probability distribution $t(1),\ldots,t(k)$, construct a superposition of weight-$j$ Dicke states on $\ell=k^3$ ancilla qubits $\sum_j\sqrt{t(j)}\ket{D^{\ell}_j}$. (Think of $t(j)$ as the mass assigned to having $j$ nonempty buckets; this is not the same as $q(j)$ for reasons described later.) 
    \item For each such ancilla qubit corresponding to a bucket, map it to $m$ target qubits via the map $\ket{0}\ket{0^m} \to \ket{0}\ket{0^m}$ and $\ket{1}\ket{0^m} \to \ket{0}\ket{\varphi(s(1),\ldots,s(k))}$. (Although the $s(i)$'s could in theory depend on $j$, we won't need this freedom. For the special case of $j=k$, we had previously set $s(1)=1$ and $s(i)=0$ for $i>1$ in the example, but our final construction will set parameters differently.)
    \item Post-select on strings of weight $k$ using Grover's amplitude amplification.
\end{enumerate}

\noindent It remains to choose $t(1),\ldots,t(k)$ and $s(1),\ldots,s(k)$ to satisfy a few conditions:
\begin{itemize}
    \item We can implement the aforementioned map $\ket{0}\ket{0^m} \to \ket{0}\ket{0^m}$ and $\ket{1}\ket{0^m} \to \ket{0}\ket{\varphi(s(1),\ldots,s(k))}$.
    \item The probability of each string with $k$ ones is exactly the same.
    \item The total mass on strings of weight $k$ is constant (so that we can use Grover's amplitude amplification).
\end{itemize}

We will set $s(i):= C \cdot k\binom{m}{i}/(mk)^i$, where $C$ is chosen such that the $s(i)$ add to 1 (see \Cref{def:damped_binomial}). (It turns out that $C\approx 1$ always.) We will hold off on specifying the $t(j)$'s.

\paragraph{Implementing the map.}
We start by constructing the map $\ket{0}\ket{0^m} \to \ket{0}\ket{0^m}$ and $\ket{1}\ket{0^m} \to \ket{0}\ket{1/m}^m$. This gives us some ``approximate'' version of the state we want. Then, we mark each string with some qubit of the form $\sqrt{1-\alpha}\ket{0}+\sqrt{\alpha}\ket{1}$ to ``correct'' the ratios of amplitudes. For example, we mark the all $0$ branch with $\ket{0}$ since that isn't in the support of $\ket{\varphi(s(1),\ldots,s(k))}$, and similarly for all $\ket{D^{m}_{>k}}$ branches. We mark the $\ket{W_m}=\ket{D^{m}_1}$ branch with $\ket{1}$. For every other $\ket{D^{m}_{i}}$ $(2 \leq i \leq k)$, we mark it with some $\sqrt{1-\alpha}\ket{0}+\sqrt{\alpha}\ket{1}$ so that within the marked branch, the ratio of the amplitudes on the $\ket{D^{m}_i}$ and $\ket{D^{m}_1}$ branches is $\sqrt{s(i)/s(1)}$. We show in \cref{cl:hybridtoocc} that a constant fraction of the state will be marked, and so we can use Grover's amplitude amplification to construct $\ket{\varphi(s(1),\ldots,s(k))}$ exactly. (And we can ensure that the $\ket{0}\ket{0^m} \to \ket{0}\ket{0^m}$ part of the map is unaffected.)

\paragraph{Each weight-$k$ string has the same probability.}
Each weight-$k$ string $x$ has some particular number of non-empty buckets $j$, when its indices are partitioned into groups of $m$ as we have been doing.

Checking this condition really has two parts: $x$ and $x'$ with the same $j$ have the same probability, and \emph{after the post-selection step}, the probability of sampling a string with $j$ nonempty buckets is exactly $p(j)$. Essentially, the first condition ensures that the constructed state is some superposition over $\dmkj$'s while the second condition ensures it is the correct superposition. The former property follows directly from the $s(i)$'s being of the form $\lam \binom{m}{i}\beta^i$ (\cref{cl:shiftedoccuniform}). 

The latter property requires setting the $t(j)$'s appropriately. Naively, it would seem that we should just set $t(j):=p(j)$. However, each value of $j$ may have a different probability of post-selecting a weight-$k$ string, so the post-selection ratios may be different from the pre-selection ones. Define $q(j)$ to be the probability of post-selecting a weight-$k$ string when having $j$ non-empty buckets (i.e., the mass of the constructed state on $\dkmj$ when starting with $\ket{D^\ell_j}$ on the ancilla). Then, we set $t(j)\propto p(j)/q(j)$, and indeed the post-selection probabilities will be proportional to the $p(j)$. 

\paragraph{Total weight $k$ mass is constant.}
Finally, we need to check that the total mass on weight $k$ strings is constant, so that the final amplitude amplification step is possible. We discussed in the previous section that we can't just do amplitude amplification on each $j$ separately, because the $q(j)$'s will not always be constant for any choice of the $s(i)$'s and certainly not for the ones we chose. However, when we do amplitude amplification at the end, we only need to ensure that the total mass over all the $j$'s of the weight $k$ branch is constant. Each $j$ contributes $t(j)q(j)$ to this total mass, because its starting mass is $t(j)$ and the conditional probability of the final string being weight $k$ is $q(j)$. Thus, we just need to ensure that $\sum_j t(j)q(j) = \Theta(1)$, which is done in \Cref{cl:hybridtoocc}.

Thus, each step of the outlined algorithm indeed works and we can construct $\ket{D^n_k}$ exactly.

\subsection{Extending to arbitrary symmetric states}
We will briefly describe the ideas involved in the proof of \cref{thm:symmetric_k_intro}. The formal version is stated below.

\begin{restatable*}[Superposition of Dicke states inside Hamming ball $k$ with $\FANOUT_k$]{theorem}{ksymmstate}\label{thm:any_symm}
Let $\ket{\psi}$ be an arbitrary $n$-qubit symmetric state supported only on strings of Hamming weight at most $k$, of the form,
$$\ket{\psi} = \sum_{j = 0}^k \eta_j \ket{D^n_j}.$$ 
Then, there is a constant depth, $\poly(n)$-ancilla $\qaczfkd$ circuit to exactly and cleanly prepare $\ket{\psi}$. 
\end{restatable*}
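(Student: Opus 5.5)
Because $\ket{\psi}$ lies in the span of $\ket{D^n_0},\dots,\ket{D^n_k}$, I will build it in two stages: first prepare the coefficients $\eta_j$ in a small index register, then expand each index $j$ into the corresponding Dicke state. Concretely, with $r=\lceil \log_2(k+1)\rceil = O(\log n)$, I first prepare $\sum_{j=0}^k \eta_j \ket{j}$ on $r$ ancilla qubits. This is an $O(\log n)$-qubit state, so \cref{thm:anylognqram} prepares it cleanly with $\QRAM_n$ gates. Since $\QRAM_n$ is only an $\ACZ$-style indexing gate, I would specialize that construction to $O(\log k)$ qubits, where it needs only indexing into a $\poly(k)$-size table and hence only $\FANOUT_{\poly(k)}$. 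I expect $\FANOUT_k$ to simulate $\FANOUT_{\poly(k)}$ in constant depth by composing fanouts in a constant-depth tree. If $\poly(k)$ exceeds $n$, the cost is still $\poly(n)$ ancillae.

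Next, conditioned on the index register holding $\ket{j}$, I prepare $\ket{D^n_j}$ on the $n$ target qubits. Compute the one-hot encoding $\ket{e_j}$ into $k+1$ flag qubits; this is an $\ACZ$ computation of $O(\log k)$-bit equality tests, and it is clean. For each $j$, the Dicke construction of \cref{thm:kdicke} for weight $j\le k$ is a depth-$d$ circuit with $\FANOUT_j\subseteq\FANOUT_k$ gates. However, running $k+1$ controlled constructions on the same $n$ targets would both serialize and require controlled versions of arbitrary circuits, which \cref{sec:ctrlunitaries} warns is unavailable in general.

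To avoid this, I would not control the full Dicke circuit. Instead, I would rerun the proof of \cref{thm:kdicke} with the weight as quantum data. Step (1) prepares $\sum_{j'} \sqrt{t_j(j')}\ket{D^\ell_{j'}}$ on the $\ell$ bucket ancillae. Step (2) uses the controlled map $\ket{1}\ket{0^m}\mapsto\ket{1}\ket{\varphi(s_j)}$. Step (3) uses an $\EXACT^n_j$ marking and Grover amplification. Each step depends on $j$ only through classical parameters such as rotation angles, or through the threshold $j$ of $\EXACT$.

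These parameter choices can be made controlled by the flag qubits. Single-qubit rotations with an angle chosen by $j$ become multiplexed rotations controlled by the one-hot flags, which are $\QACZ$ via Toffolis. $\EXACT^n_j$ for variable $j\le k$ can be computed by evaluating all $\EXACT^n_{j'}$ in parallel with $\FANOUT_k$ and selecting by flags. The bucket size $m$ and the number $\ell$ of buckets should be chosen uniformly in $j$, for example $\ell=k^3$, and the birthday and fidelity bounds still hold for all $j\le k$. The base case $j=0$ is trivial.

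The main obstacle is Grover amplification. Amplification is exact only if the marked amplitude $\sqrt{\alpha}$ takes a fixed known value, yet it will generally depend on $j$. I plan to first equalize the amplitude across branches using \cref{lem:adjustamp}. Conditioned on the flags, I append a marker qubit rotated so that the total marked mass is exactly a common constant $\alpha^*$, for instance $1/4$, in every $j$-branch. This requires $\alpha^*\le \min_j \alpha_j$, which follows from the constant lower bound in \cref{cl:hybridtoocc} applied uniformly in $j$. One reflection round then maps every branch exactly to $\ket{D^n_j}$ simultaneously. The amplification also reflects about the preparation state, which is itself controlled by the flags, so the whole preparation must remain flag-controlled. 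This works because the flag register is never altered.

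Finally, I uncompute the one-hot flags from the index register. I then need to uncompute the index register $\ket{j}$ from the target, using $\ket{j}=\ket{|x|}$. Computing the Hamming weight $|x|$ for weights at most $k$ is a counting task; it reduces to $\THRESHOLD^n_{j'}$ gates for $j'\le k$, which $\FANOUT_k$ supports. This leaves $\ket{\psi}$ cleanly.
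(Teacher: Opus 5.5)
Your architecture is sound: carry the target weight as a one-hot flag register, run the bucketing construction of \cref{thm:kdicke} conditioned on it, and finish with one global amplification. Your handling of the last Grover step is also correct: a flag-controlled marker equalizes every branch's marked mass to a common $\alpha^*$, and this is legitimate because each $1/R_j$ is bounded below via \cref{cl:hybridtoocc}.

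The gap is the sentence claiming that each step depends on $j$ only through rotation angles or the threshold of $\EXACT$. It fails in two concrete ways.

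First, it is false for the inner steps. The one-hot preparation of $\sum_{j'}\sqrt{t_j(j')}\ket{e_{j'}}$ (\cref{lem:anyonehotqac}) is itself produced by exact amplitude amplification. So is the controlled map $\ket{1}\ket{0^m}\mapsto\ket{1}\ket{\varphi(s_j)}$, via \cref{lem:zeroshft}, \cref{lem:adjustamp} and \cref{cor:constctrl}. In each case the success probability depends on $j$. Since the circuit must use the same number of Grover rounds in every flag branch, these inner amplifications output the wrong state unless you also equalize at every one of these levels. The same issue recurs in the padding step when $\ell\nmid n$.

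Second, the fanout budget is unchecked. Every flag-dependent gate consumes its own copy of a flag qubit, and with $\FANOUT_k$ you can only afford $\poly(k)$ copies. Your per-bucket map runs in parallel on $\ell=k^3$ buckets that together cover all $n$ target qubits. It therefore stays within budget only if the $j$-dependence of $\ket{\varphi(s_j)}$ never touches the $m$ bucket qubits themselves. That holds for the $j$-independent $\ket{1/m}^{\otimes m}$ layer of \cref{lem:zeroshft}, with all corrections pushed onto the $(k+1)$-qubit $Y$ registers. The natural alternative of starting from $\ket{1/(mj)}^{\otimes m}$ would need $\FANOUT_n$, which is exactly the obstruction of \cref{sec:ctrlunitaries} you set out to avoid. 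Saying that multiplexed rotations are in $\QACZ$ via Toffolis does not address this, and your proposal neither specifies nor verifies the implementation that makes it work.

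Both issues are patchable, but the paper avoids them entirely. It uses one damped binomial $\mathcal{S}^m_{k_*}$ in every bucket regardless of the target weight, so the controlled-Dicke, controlled-$\mathcal{S}$ and uncomputation steps are identical across branches and never read the weight register. All weight dependence sits in two places:
\begin{itemize}
\item the initial pair amplitudes $\delta^k_j=\eta_k\sqrt{r_k(j)/R}$ on $\ket{e_k}\ket{e_j}$, prepared by \cref{cor:anylogreal} and \cref{cl:onehot};
\item the final weight-$k$ marking (\cref{lem:customthresh}), which uses only $\poly(k)$ flag copies.
\end{itemize}
These amplitudes pre-compensate the branch-dependent success probabilities $q_k(j)$. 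As a result the marked component is exactly $R^{-1/2}$ times the target occupancy state, and a single amplification suffices with no equalization. The price is \cref{cl:genhybridtoocc}, which shows that $\sum_j p_k(j)/q_k(j)$ stays bounded when the buckets use $\mathcal{S}^m_{k_*}$ instead of $\mathcal{S}^m_k$.
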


Observe that symmetric states have compact $O(\log n)$-qubit descriptions that encode the amplitudes of the corresponding Dicke state superpositions. For example, the amplitudes of an arbitrary symmetric state $\ket{\psi}$ as above can be compacted into a $\log n$ qubit state $\ket{\wh{\psi}} := \sum_{j = 0}^k \eta_j \ket{j}$. In \cref{thm:anylognqram}, we give $\poly(k)$-ancilla preparations of arbitrary $O(\log k)$-qubit states using $\QRAM_k$. Since $\FANOUT_k$ can simulate $\QRAM_k$, this also gives $\poly(k)$-ancilla $\QACZ[\FAN_k]$ circuits for arbitrary $O(\log k)$ qubit states, which we use here. Alternatively, we may also use a scaled-down version of \cite{rosenthal2021query}'s construction of $\exp(n)$-size $\QACZF$ circuits for arbitrary $n$-qubit states.

Our Dicke state construction from \cref{thm:kdicke} also extends to a controlled circuit that prepares $\ket{D^n_k}$ on a fresh ancilla iff the control bit is $1$. Given the corresponding $O(\log k)$-qubit state, $\ket{\wh{\psi}}$, it is possible to construct $\ket{\psi}$ using the controlled-preparation of each $\ket{D^n_j}$ as a black-box. However, this is not sufficient for the fanout-efficient construction we claim in  \cref{thm:any_symm}, which uses $\FANOUT_k$ rather than $\FANOUT_n$. We instead prove \cref{thm:any_symm} by constructing the desired state from scratch using the bucketing strategy previously described. This proof follows much of the same strategy as our construction for $\dkn$ and we provide it in \cref{sec:arbitrarysymm}.

\section{Preliminaries}
The gateset of $\QAC^0$ is canonically defined as multi-qubit Toffoli (Reversible-AND) gates and single-qubit unitaries \cite{moore1999qac0} and the depth is given by the number of multi-qubit layers. Equivalent formulations include any multi-qubit reflection gate of the form $I - 2\kb{\vth}$ for arbitrary product state $\ket{\vth}$ \cite{rosenthal2021qac0}. Each qubit belongs to at most one gate per layer and just as classical reversible circuits, $\QACZ$ does not by default include $\FANOUT$, the ability to make (classical) copies of a qubit. A $\QACZ$ circuit is defined to have constant depth and $\poly(n)$ ancillae, and thus $\poly(n)$ gates. 
 
\begin{definition}[Circuit implementing unitary]
A circuit $C$ \emph{(cleanly) implements} an $n$-qubit unitary $U$ using $m = m(n)$ ancillae, if for all $n$-qubit states $\ket{\varphi}$, 
$$C \ket{\varphi} \ket{0^m} = \lr{U \ket{\varphi}} \ket{0^m}.$$
\end{definition}
\noindent More generally, for linear subspaces $\mathcal{S}, \mathcal{T}$ and a linear map $M : \mathcal{S} \to \mathcal{T}$, a circuit $C$ implements $M$ if it cleanly implements any unitary $U$ that behaves as described on $\mathcal{S}$.

\begin{definition}[State preparation circuit]
A circuit $C$ \emph{(cleanly) prepares} an $n$-qubit state $\ket{\psi}$ using $m = m(n)$ ancillae, if, $C \ket{0^n} \ket{0^m} =  \ket{\psi} \ket{0^m}.$
\end{definition}
\noindent All the upper bounds described in this work are \emph{clean computations}, i.e, the ancillae are returned to $\ket{0^m}$. Thus, any state preparation/unitary implementation circuits we refer to are presumed to be clean unless otherwise specified.


\begin{definition}[$\FANOUT_k$]\label{def:fanout}
   $\FANOUT_k$ is the reversible map satisfying $\ket{0} \ket{0^k} \mapsto \ket{0} \ket{0^k}$ and $\ket{1} \ket{0^k} \mapsto \ket{1} \ket{1^k}$. 
\end{definition}
Some works define $\FANOUT_k$ as a unitary on the entire space by the map $\ket{b} \ket{x} \mapsto \ket{b} \ket{x \oplus b^k}$ for $b \in \bin$ and $x \in \bin^n$, which turns out to be the Hadmard conjugation of $\PARITY_n$. 
The two definitions are equivalent in constant depth under $\QACZ$ reductions and for our purposes, \cref{def:fanout} suffices. Note that definition of $\qaczfkd$ remains unchanged regardless of which convention is used.

\begin{definition}[$\qaczfkd$ circuit]\label{def:qacwithfanout}
For a function $k = k(n)$, a circuit family $\clr{C_n}_n$ is a $\qaczfkd$ circuit family if for constants $c,a,n_0$, each $C_n$ for $n \geq n_0$ can be implemented using at most $n^{a}$ ancillae and $c$ layers of Toffoli gates, single qubit unitaries and $\FANOUT_{k(n)}$ gates.  
\end{definition}

\begin{definition}[$\QRAM_n$]\label{def:qramn}
The $\QRAM_n$ gate is the reversible map satisfying, for every $i\in[n]$, $x\in\bin^n$, and $b\in\bin$,
$$\ket{i}\ket{x}\ket{b} \mapsto \ket{i}\ket{x}\ket{b\oplus x_i}.$$
\end{definition}
\noindent  By recursively partitioning the memory into blocks of size $k$, $\QRAM_{\poly(k)}$ can be implemented using $\QRAM_k$ gates in $O(1)$ depth and $\poly(n)$ ancillae. Thus, analogous to $\FANOUT$, 
\begin{align}
\QACZ[\QRAM_k] = \QACZ[\QRAM_{\poly(k)}]. 
\end{align}

\begin{definition}[$\qaczfkd$ circuit]\label{def:qacwithqram}
For a function $k = k(n)$, a circuit family $\clr{C_n}_n$ is a $\QACZ[\QRAM_k]$ circuit family if for constants $c,a,n_0$, each $C_n$ for $n \geq n_0$ can be implemented using at most $n^{a}$ ancillae and $c$ layers of Toffoli gates, single qubit unitaries and $\QRAM{k(n)}$ gates.  
\end{definition}
\noindent This $\QRAM$ may be equivalently described as a restricted map with a clean target register, $b = 0$, since it can simulate \cref{def:qramn} by computing $x_i$ into a fresh ancilla, copying it to the target, and uncomputing the ancilla.

\paragraph{Circuit depth and ancilla notation.} We focus on state preparation problems for which the circuits $C_n$ are parameterized by the number of qubits $n$ in the \emph{output}. All $\QACZ[\cdots]$ circuits refer to $O(1)$ depth circuits, and the size/number of ancillae are implicitly taken to be $\poly(n)$ unless otherwise specified.  

\subsection{Additional Notation}\label{sec:gatenotation}
All the quantum states referred to are taken to be normalized unless otherwise specified. We use the following conventions.
\begin{enumerate}
    \item For $k \in \mathbb{N}$, $[k] := \clr{i}_{i = 1}^k$. For $i \in [k]$, $\ket{i}$ denotes the $\lceil \log k \rceil$-qubit standard basis vector, the binary representation of $i$ and $e^k_i$ refer to the string $0^{i-1} 1 0^{k-i-1}$, the one-hot encoding of $i$. When $k$, the number of bits, is clear from the context we denote it by $e_i$. Additionally, define $\delta_{ij} := \braket{e_i | e_j}$. 
    \item For $0 \leq \eps \leq 1$, we define $\ket{\eps}$ to be the single-qubit state $\ket{\eps} = \sqrt{1-\eps} \ket{0} + \sqrt{\eps} \ket{1}$. This is consistent with the usual definitions of $\ket{0}, \ket{1}$.
    \item  For $0 \leq \gamma \leq 1$, define $\Rot_{\gamma}$ to be the single-qubit unitary, 
    $\Rot_\gamma = \begin{bmatrix}
        \sqrt{\gamma} & \sqrt{1-\gamma} \\ 
        \sqrt{1-\gamma} & -\sqrt{\gamma}
    \end{bmatrix}$.
Note that any Hermitian unitary, such as $\Rot_{\gamma}$ is a reflection.
\item We will use the notation $\R_{\ket{\psi}}$ to denote the reflection about $\ket{\psi}$ given by $\R_{\ket{\psi}} := I - 2 \kb{\psi}$.
When $\ket{\psi}$ is separable this is a valid $\QACZ$ gate. Thus, for any fixed $\y \in \bin^n$, or any $i \in [n]$, both $\R_{\ket{\y}}$  and $\R_{\ket{i}}$ are valid $\QACZ$ gates.  
\item The $n$-qubit $\AND$ gate (Toffoli) is given by $(I - 2\kb{1^n} \tens \kb{+}_t)$ where $t$ is the output register. The $\OR$ gate is given by $(I - 2\kb{0^n} \tens \kb{+}_t)$, and the $\NOR$ gate by $X_t \cdot  (I - 2\kb{0^n} \tens \kb{+}_t)$
\item For a unitary $U$, $\ctrl{U}$ refers to the controlled-$U$ unitary given by, 
\begin{align}
    \ctrl{U}(x,T) = \kb{0}_x \tens I_T + \kb{1}_x \tens U_T
\end{align}
When $U$ is a valid $\QACZ$ multi-qubit gate, or a \emph{Hermitian} single qubit unitary,  $\ctrl{U}$ is a also a valid $\QACZ$ gate. 
Note that however, we cannot necessarily implement $\ctrl{C}$ for an arbitrary $\QACZ$ \emph{circuit} $C$ in $\QACZ$, as this requires $\FANOUT$. Although, given a $\FANOUT_k$ gate, we can implement $\ctrl{\FANOUT_k}$ by introducing an ancilla that computes the $\AND$ of the control qubit and the fanout source bit and then applying $\FANOUT_k$ on this ancilla.
\item For any binary string $\x \in \bin^n$, we use $\vlr{\x}$ to denote its Hamming weight.  
\item We use $[.]$ to denote the truth value of the expression inside the brackets.  
\end{enumerate}
Using this notation we define $\THRESHOLD^n_k$ and $\EX^n_k$. 
\begin{definition}[$\THRESHOLD^n_k$]\label{def:thresh}
    $\THRESHOLD^n_k$ is the reversible map satisfying for all $x \in \bin^n$, $$\ket{x} \ket{0} \mapsto \ket{x} \ket{\blr{|x| \geq k}}.$$ 
\end{definition}
\begin{definition}[$\EX^n_k$]\label{def:exactk}
    $\EX^n_k$ is the reversible map satisfying for all $x \in \bin^n$, $$\ket{x} \ket{0} \mapsto \ket{x} \ket{\blr{|x| = k}}.$$ 
\end{definition}
\noindent Observe that $\THRESHOLD^n_{n-k}$ can be implemented from $\THRESHOLD^n_k$ and also $\EX^n_k$ can be implemented using $\THRESHOLD^n_k$ and $\THRESHOLD^n_{k-1}$. 

\subsection{Known procedures in \safeqacz}
We list some standard procedures in $\QACZ$ and some simple consequences that we will utilize.

\begin{fact}[Reflection about preparable states]\label{fact:reflstate}
Let $\ket{\psi}$ be an $n$-qubit state, and $U$ be a unitary that cleanly prepares $\ket{\psi}$ using $m$ ancillae.  $\R_{\ket{\psi}}$ can be cleanly implemented with a depth $3$, $m$ ancillae circuit consisting of the gates $U, U^\dag$ and $I - 2\kb{0^{m+n}}$. 
\end{fact}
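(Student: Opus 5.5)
The plan is to use the standard identity
\[
\R_{\ket{\psi}} \;=\; I - 2\kb{\psi} \;=\; U\,\bigl(I - 2\kb{0^{m+n}}\bigr)\,U^\dag,
\]
interpreted on the joint register of the $n$ target qubits and the $m$ ancillae. The only subtlety is that this identity holds exactly on the subspace where the ancillae start (and therefore must end) in $\ket{0^m}$. So I will compute the action of $U (I - 2\kb{0^{m+n}}) U^\dag$ on an arbitrary input $\ket{\varphi}\ket{0^m}$ and show it equals $(\R_{\ket{\psi}}\ket{\varphi})\ket{0^m}$. That is exactly the clean-implementation requirement in the paper's definition.

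The key steps, in order, are as follows.

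First, from the hypothesis $U\ket{0^n}\ket{0^m} = \ket{\psi}\ket{0^m}$, I get $U^\dag\bigl(\ket{\psi}\ket{0^m}\bigr) = \ket{0^{n+m}}$. Conjugating the rank-one projector then gives
\[
U\,\kb{0^{n+m}}\,U^\dag = \kb{\psi}\tens\kb{0^m},
\]
and therefore
\[
U\bigl(I - 2\kb{0^{n+m}}\bigr)U^\dag = I - 2\,\kb{\psi}\tens\kb{0^m}.
\]
Applying this operator to $\ket{\varphi}\ket{0^m}$ yields
\[
\ket{\varphi}\ket{0^m} - 2\braket{\psi|\varphi}\,\ket{\psi}\ket{0^m} = \bigl(\R_{\ket{\psi}}\ket{\varphi}\bigr)\ket{0^m},
\]
which is the desired clean action.

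Second, for the gate count and depth, the circuit is $U^\dag$, then the single multi-qubit reflection $I - 2\kb{0^{m+n}}$, then $U$. The middle gate is a reflection about a product state, so it is a valid $\QACZ$ gate; equivalently, it is an $\OR$/Toffoli conjugated by $X$ gates. The circuit therefore uses only the gates $U$, $U^\dag$, and $I - 2\kb{0^{m+n}}$, in three sequential layers, on the same $m$ ancillae.

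There is no real obstacle here. The one point needing care is that the operator identity must be checked with the ancillae included, not with the target register alone: on inputs whose ancillae are not $\ket{0^m}$, the circuit acts as some other reflection. That causes no harm, because clean implementation only constrains inputs of the form $\ket{\varphi}\ket{0^m}$. The depth count of $3$ treats $U$ and $U^\dag$ each as a single block; if $U$ has depth $d$, the actual depth is $2d+1$. I would state the fact in these block terms.
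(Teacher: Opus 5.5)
Your proposal is correct and follows essentially the same route as the paper. Both arguments establish $U(I - 2\kb{0^{m+n}})U^\dag = I - 2\kb{\psi}\tens\kb{0^m}$ and evaluate it on $\ket{\varphi}\ket{0^m}$; the paper phrases that last step through the vanishing commutator with $I \tens \kb{0^m}$, while you expand it directly, and your reading of ``depth $3$'' as counting $U$ and $U^\dag$ as single layers (true depth $2d+1$) matches how the paper later invokes this fact.
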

\begin{proof}
We will argue that $\R_{\ket{\psi}}$ is implemented using $m$ ancillae by the circuit, 
\begin{align}
C &:= U (I - 2\kb{0^{m+n}}) U^\dag \\
  &=  UU^\dag  - 2 U \kb{0^{m+n}} U^\dag \\
  &=  I - 2 \kb{\psi} \tens \kb{0^m}. 
\end{align}
Observe that the commutator $[I \tens \kb{0^m}, C] = 0$. Thus, for any $n$-qubit state $\ket{\varphi}$, as claimed,
\begin{align}
    C \ket{\varphi}_X \ket{0^m}_A &= \kb{0^m}_A \cdot C \cdot \ket{\varphi}_X \ket{0^m}_A \\
    &=  \kb{0^m}_A \cdot (I - 2 \kb{\psi}_X \tens \kb{0^m}_A) \cdot \ket{\varphi}_X \ket{0^m}_A \\
    &= \lr{(I - 2\kb{\psi}) \cdot \ket{\varphi}}_X \tens \ket{0^m}_A  \\
    &= \lr{\R_{\psi} \ket{\varphi}}_X \tens \ket{0^m}_A.
\end{align}
\end{proof}
\noindent The state $\ket{D^n_1}$ is also known as the $W$-state and can be prepared in $\QACZ$ \cite{grier2026tc0} as well the following states. 
\begin{fact}[Uniform $0^n$ and $W_{n}$ (Claim 5.1 of \cite{gretta2026paritynotinqac0iff})]\label{fact:zerow}
For any $n$, the state $\frac{1}{\sqrt{2}} \ket{0^n} + \frac{1}{\sqrt{2}} \ket{D^n_1}$ can be cleanly prepared in $\QACZ$ using $O(n)$ qubits. 
\end{fact}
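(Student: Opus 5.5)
The plan is to use exact amplitude amplification on a product state whose weight-$0$ and weight-$1$ parts already have equal amplitude. The Grover step is the one the overview attributes to \cite{grier2026tc0}.

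\textbf{Step 1: balanced product state.} Set $p := 1/(n+1)$ and prepare $\ket{p}^{\tens n}$ with single-qubit rotations. Its component on $\ket{0^n}$ has amplitude $(1-p)^{n/2}$. Its component on $\ket{D^n_1}$ has amplitude $\sqrt{n\,p\,(1-p)^{n-1}}$. The choice $np = 1-p$ makes these two amplitudes equal. Hence the projection of $\ket{p}^{\tens n}$ onto weights $\le 1$ is proportional to $\frac{1}{\sqrt 2}\ket{0^n}+\frac{1}{\sqrt 2}\ket{D^n_1}$. Its squared norm is $\alpha = 2(1-p)^{n} = 2(n/(n+1))^n \ge 2/e$, so $\alpha$ is a known constant.

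\textbf{Step 2: marking.} Compute the flag $[\,|x|\le 1\,] = 1-\THRESHOLD^n_2(x)$ into an ancilla. Since the threshold is the constant $2$, this only needs $\FANOUT_2$, which is trivially available. So $\THRESHOLD^n_2$ is in $\QACZ$ by the threshold construction of \cite{grier2026tc0} quoted in the overview. The state is now $\sqrt{\alpha}\,\ket{\mathrm{good}}\ket{1}+\sqrt{1-\alpha}\,\ket{\bad}\ket{0}$, where $\ket{\mathrm{good}}$ is exactly the target state and $\ket{\bad}$ is supported on weights $\ge 2$.

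\textbf{Step 3: exact amplification.} First reduce the marked mass to exactly $1/4$. Attach one extra qubit prepared as $\ket{\gamma}$ with $\gamma = 1/(4\alpha)$, which is valid since $\alpha\ge 2/e > 1/4$. Redefine ``marked'' as flag $=1$ and extra qubit $=1$; the marked mass is then $\alpha\gamma = 1/4$. A single Grover iterate $-\R_{\ket{\mathrm{init}}}\cdot Z_{\mathrm{mark}}$ now rotates the state exactly onto the marked subspace. Here $Z_{\mathrm{mark}}$ is a controlled-$Z$ on the flag and extra qubit. The reflection $\R_{\ket{\mathrm{init}}}$ about the (flagged) initial state is implemented via \cref{fact:reflstate}, using the constant-depth preparation from Steps 1--2. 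On the marked branch, the flag and extra qubit are in the known state $\ket{1}\ket{1}$. I reset them with $X$ gates, and uncompute the threshold ancillae by running the marking circuit in reverse. This leaves $\frac{1}{\sqrt2}(\ket{0^n}+\ket{D^n_1})$ cleanly, in constant depth.

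\textbf{Main obstacle.} The delicate step is the marking in Step 2: computing $[\,|x|\le 1\,]$ cleanly in $\QACZ$. The naive formula $\OR_{i<j}(x_i\wedge x_j)$ needs many copies of each $x_i$, and producing those copies is fanout. My first attempt is to invoke the constant-$k$ threshold result from \cite{grier2026tc0} as a black box. This gives $\poly(n)$ ancillae. Matching the sharper $O(n)$-qubit bound claimed in \cref{fact:zerow} would need the more economical construction of \cite{gretta2026paritynotinqac0iff}, which I would defer to. An alternative worth trying is to start from the known $\QACZ$ preparation of $\ket{W_{2n}}$ and disentangle the second half. However, that requires a controlled inverse-$W$ circuit, and controlled $\QACZ$ circuits are not generally in $\QACZ$, so I would not pursue that route first.
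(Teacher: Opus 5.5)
Your argument is correct, but it proves the statement only with $\poly(n)$ ancillae, not the $O(n)$ qubits claimed.

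\textbf{What works.} With $p=1/(n+1)$, the weight-$0$ and weight-$1$ components of $\ket{p}^{\tens n}$ have equal amplitude. The marked mass $2(n/(n+1))^n \ge 2/e$ can be diluted to exactly $1/4$. One Grover iterate then lands exactly on the target.

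\textbf{Comparison with the paper.} The paper gives no proof of this fact; it imports it as Claim 5.1 of \cite{gretta2026paritynotinqac0iff}. So there is no internal argument to match. Your route is the same mechanism the paper uses in \cref{lem:anyonehotqac}: a tuned product state, a Hamming-weight test, then \cref{cor:constamptoexact}. That lemma gives the statement even more directly. Apply it on $n+1$ qubits with weights $\tfrac12,\tfrac1{2n},\dots,\tfrac1{2n}$ to get $\tfrac{1}{\sqrt2}\ket{1}\ket{0^n}+\tfrac{1}{\sqrt2}\ket{0}\ket{D^n_1}$. Then a single $\NOR$ of the last $n$ qubits onto the first clears it. This rests on the same black box as your weight test ($\EXACT_1$, equivalently $\THRESHOLD^n_2$).

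\textbf{The gap: the $O(n)$-qubit count.} Your weight test uses $\THRESHOLD^n_2\in\QACZ$ via \cref{fact:kthresh}, which only guarantees $\poly(n)$ ancillae. You explicitly defer the linear bound to the cited source. So that clause is not proven by your argument; the paper also rests on the same citation for it. This is harmless for the paper. \cref{fact:zerow} is only invoked, in the overview, to obtain a controlled-$W$ map inside $\poly(n)$-ancilla $\QACZ$. You should still state your result as the $\poly(n)$-ancilla version. Otherwise, supply a genuinely linear-ancilla clean test for $|x|\le 1$, or a route that avoids it.

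\textbf{A small slip in the cleanup.} The threshold gadget is already clean, so the $X$ gates on the flag and the extra qubit finish the job. Do one or the other, not both: if you re-run the marking circuit after resetting the flag, it XORs $[\,|x|\le 1\,]=1$ back into the flag on the good branch and re-dirties it.
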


\begin{fact}[Constant weight Dicke states in $\QACZ$ \cite{cleandicke2026}]\label{fact:constdicke}
For any $n$ and $k = O(1)$, the states $\ket{D^n_k}$ and $\ket{D^{n}_{n-k}}$ can be cleanly prepared in $\QACZ$.
\end{fact}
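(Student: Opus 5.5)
The plan is to mark the weight-$k$ branch of a simple product state and then amplify it exactly, using only $O(1)$ rounds because $k$ is constant. It suffices to prepare $\ket{D^n_k}$. Applying $X$ to every qubit of $\ket{D^n_k}$ yields $\ket{D^n_{n-k}}$, so the second state costs one extra layer of single-qubit gates.

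\emph{Step 1 (product state and marking).} Apply single-qubit rotations to obtain $\ket{k/n}^{\tens n}$, which can be written as $\sqrt{\alpha}\,\ket{D^n_k} + \sqrt{1-\alpha}\,\ket{\bad}$. Here $\ket{\bad}$ is supported only on strings of weight $\ne k$. The amplitude on every weight-$k$ string is the same, so the weight-$k$ component is exactly proportional to $\ket{D^n_k}$. The constant is
\[ \alpha = \binom{n}{k}\left(\tfrac{k}{n}\right)^k\left(1-\tfrac{k}{n}\right)^{n-k}, \]
which is an explicitly computable number and is bounded below by a constant (about $k^k e^{-k}/k! = \Theta(1/\sqrt{k})$) when $k=O(1)$. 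Next, compute $\EX^n_k$ into a flag qubit. This uses the result of \cite{grier2026tc0} that $\THRESHOLD^n_k$, and hence $\EX^n_k$ via $\THRESHOLD^n_k$ and $\THRESHOLD^n_{k-1}$, is cleanly implementable in $\QACZ[\FANOUT_k]$. For constant $k$, a $\FANOUT_k$ gate is just $k$ CNOTs, which has constant depth, so this step lies in $\QACZ$.

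\emph{Step 2 (exact amplification).} To make Grover exact, dilute the marked amplitude to a value of the form $\sin^2(\pi/(2(2r+1)))$ for a constant $r$. Pick $r$ so that this value is at most $\alpha$. Add one ancilla in state $\ket{\gamma}$ with $\gamma\alpha=\sin^2(\pi/(2(2r+1)))$, and declare the good branch to be (flag $=1$) $\wedge$ (ancilla $=1$). Let $U$ be the resulting clean $\QACZ$ circuit that prepares this starting state. Apply $r$ Grover iterates. Each iterate consists of a reflection about the good subspace, which is a controlled-$Z$ on the two marker qubits, followed by $\R_{U\ket{0}}$, implemented by \cref{fact:reflstate}. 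Standard Grover analysis then gives exactly $\ket{D^n_k}\ket{1}_{\text{flag}}\ket{1}_{\text{anc}}$. Finally, uncompute the flag by applying $\EX^n_k$ again, which is valid since every remaining string has weight $k$, and apply $X$ to the dilution ancilla. The circuit is clean, and its depth is $O(r)\cdot O(\text{depth}(U)) = O(1)$.

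\emph{Main obstacle.} The step needing care is that $\EX^n_k$ must be computed \emph{cleanly} in constant depth without $\FANOUT_n$. Every reflection $\R_{U\ket{0}}$ invokes $U$ and $U^\dagger$, so any garbage left by the threshold circuit would break the interference. I would rely on the cited $\THRESHOLD$ construction of \cite{grier2026tc0} and check that its only non-$\QACZ$ ingredient is $\FANOUT_k$, which is free for $k=O(1)$.
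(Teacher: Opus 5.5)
Your proposal is correct and is essentially the construction the paper attributes to \cite{cleandicke2026} in its overview; the paper itself only cites the fact. That construction prepares $\ket{k/n}^{\tens n}$, marks the weight-$k$ branch with $\EX^n_k$ via the $\FANOUT_k$-based threshold of \cite{grier2026tc0} (free for constant $k$), and finishes with constant-round exact Grover as in \cref{cor:constamptoexact}. One harmless nit, which also appears in the paper: $\EX^n_k$ is obtained as $\THRESHOLD^n_k \wedge \neg\THRESHOLD^n_{k+1}$, not from $\THRESHOLD^n_{k-1}$.
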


\noindent $\QACZ$ is not defined to include any $\FANOUT$ or $\THRESHOLD$ gates by default, but we can implement certain regimes in $\QACZ$.
\begin{fact}[Fanout and Thresholds in $\QACZF$ \cite{rosenthal2021qac0, grier2026tc0}]\label{fact:kthresh}
For any $m = \poly(k)$, $\FANOUT_m$ can be implemented using $O(1)$ $\FANOUT_k$ gates and, for any $n \geq m$, 
\[\THRESHOLD^{n}_m \in \qaczfkd.\]
Moreover, for any $n$ there is a $\poly(n)$-ancillae $\QACZ$ circuit for $\FANOUT_{\polylog(n)}$, i.e, \[ \qaczfk{\polylog(n)} = \QACZ .\]
\end{fact}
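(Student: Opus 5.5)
The plan is to treat the three clauses separately, in increasing order of difficulty.

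\emph{Amplifying fanout width.} For $m = k^c$ with $c$ a constant, I would implement $\FANOUT_m$ as a fanout tree of depth $c$. In layer $1$ a single $\FANOUT_k$ copies the source bit $b$ onto $k$ fresh qubits. In layer $r$, each of the $k^{r-1}$ current copies is the source of its own $\FANOUT_k$ onto $k$ new fresh qubits; these gates act on disjoint qubits, so they fit in one layer. Each gate maps $\ket{0}\ket{0^k}\mapsto\ket{0}\ket{0^k}$ and $\ket{1}\ket{0^k}\mapsto\ket{1}\ket{1^k}$, so by induction on $r$ the whole circuit maps $\ket{b}\ket{0^{m}}\mapsto\ket{b}\ket{b^{m}}$, with no extra ancillae and depth $c$. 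The same tree run in reverse uncomputes the copies, which is what makes later clauses clean.

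\emph{Thresholds.} Let $m = \poly(k)$ and $n\ge m$. I would first note that $\QACZ[\FANOUT_{\poly(m)}]$ is the quantum analogue of $\TCZ$ on $\poly(m)$ bits, so it computes $\THRESHOLD^{M}_t$ exactly for any $M=\poly(m)$ (citing \cite{grier2026tc0} for that regime). It then remains to compress $n$ inputs to $\poly(m)$ inputs without changing the answer. The plan is as follows.
\begin{enumerate}
\item Fix an explicit family $\{h\}$ of $\poly(n)$ hash functions $[n]\to[m^3]$ that is perfect for $m$-subsets.
\item For each $h$ and each bucket $j$, compute $\OR_{i\in h^{-1}(j)} x_i$ with one Toffoli.
\item Apply the $\TCZ$-type threshold to the $m^3$ bucket bits.
\item Take the $\OR$ over all $h$.
\end{enumerate}
Correctness holds because the number of occupied buckets is always at most $|x|$, and if $|x|\ge m$ some $h$ is injective on an $m$-subset of the support, so some $h$ sees at least $m$ occupied buckets. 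At the end I would uncompute everything except the output, then use the remark after \cref{def:exactk} to obtain $\EX^n_m$ as well.

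\emph{Main obstacle.} Step (ii) of the threshold construction needs $\poly(n)$ copies of each $x_i$, one per hash function, which is exactly the $\FANOUT_n$ we must avoid. My first attempt to remove it is to place the choice of $h$ in superposition on an $O(\log n)$-qubit register instead of evaluating all $h$ in parallel. I would then mark the branch in which ``$h$ witnesses $\ge m$'' and use the constant-depth Grover step of \cite{grier2026tc0} to turn ``some $h$ witnesses'' into a deterministic bit. Each bucket $\OR$ would read $x_i$ only once, as the single Toffoli $\OR_i (x_i\wedge[h(i)=j])$. This still fans out the short $h$ register, but that is a fanout of width $\polylog(n)$, which the final clause supplies. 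The weak point is that the marked fraction over $h$ need not be constant, so the amplification step may have to be replaced by the exact construction of \cite{rosenthal2021qac0}.

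\emph{$\FANOUT_{\polylog(n)}$ in $\QACZ$.} Since $\FANOUT_t$ is Hadamard-conjugate to $\PARITY_t$, it suffices to compute $\PARITY_t$ for $t=\log^{c} n$ exactly and cleanly in constant depth. I would use the Håstad-optimal depth-$d$ $\ACZ$ formulas for parity on $t$ bits, which have size $2^{O(t^{1/(d-1)})}=\poly(n)$ for $d$ a large enough constant. I would simulate them reversibly by replacing each $\AND$/$\OR$ with a Toffoli (respectively a conjugated Toffoli) into a fresh ancilla. I would then copy the output and run the circuit backwards to clean up. The only nonreversible feature of the formula is that a wire feeds several gates; those internal copies are again fanouts of width at most $\poly(n)$ over $\polylog(n)$-size subformulas. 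I would handle them by recursion on depth, following \cite{rosenthal2021qac0}, and this recursion is the step I expect to require the most care.
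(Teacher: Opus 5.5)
The paper does not prove this Fact; it imports it from \cite{rosenthal2021qac0, grier2026tc0}. Judged on its own terms, your fanout tree for the first clause is correct; it uses $O(1)$ \emph{layers} of $\FANOUT_k$ gates, which is the intended reading. The threshold clause and the $\polylog(n)$-fanout clause, however, both have genuine gaps.

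\emph{Threshold.} The obstruction you found is real, but your workaround does not remove it. To form $x_i\wedge[h(i)=j]$ for all $i\in[n]$, each qubit of the $h$-register must enter $\Omega(n)$ gates. That is a $\FANOUT$ of width $\Omega(n)$: the width of a fanout is the number of copies, not the length of the register being copied. In addition, the fraction of witnessing $h$ depends on $x$, and it is $0$ when $|x|<m$. Exact Grover (\cref{fact:ampamp}) needs a fixed, known angle, so this route cannot output $[|x|\ge m]$ exactly and cleanly.

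What is actually missing is a \emph{small} explicit perfect hash family; with it, your original deterministic scheme works. Take $h_p(i)=i\bmod p$ for the first $\binom{m}{2}\lceil\log_2 n\rceil+1$ primes $p$. For an $m$-set $S\subseteq[n]$, the product of its pairwise differences is at most $n^{\binom{m}{2}}$. It therefore has at most $\binom{m}{2}\log_2 n$ prime divisors, so some $h_p$ is injective on $S$. With this family:
\begin{itemize}
\item each $x_i$ needs only $O(m^2\log n)$ copies;
\item each $h_p$ has $\tilde O(m^2\log n)$ buckets;
\item the inner threshold is on $\poly(k\log n)$ bits.
\end{itemize}
All the fanout involved has width $\poly(k)\cdot\polylog(n)$. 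You get it by composing your $\FANOUT_k$ tree with $\FANOUT_{\polylog(n)}\in\QACZ$, and your correctness argument then goes through unchanged. Note that this makes the threshold clause depend on the third clause.

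\emph{$\FANOUT_{\polylog(n)}$ in $\QACZ$.} Here the plan is circular, not just incomplete. Any depth-$d$ $\ACZ$ circuit for $\PARITY_t$ has some wire of fanout $F\ge 2^{\Omega(t^{1/(d-1)})}$. To see this, suppose every wire, inputs included, had fanout at most $F$. Unfolding the circuit gives a depth-$d$ formula with at most $2tF^{d}$ leaves, and H{\aa}stad's size bound forces $2tF^{d}\ge 2^{\Omega(t^{1/(d-1)})}$. For $t=\log^c n$ and constant $d$, this fanout is $2^{\log^{\Omega(1)} n}$, which is not $\polylog(n)$.

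So any gate-by-gate reversible simulation of an $\ACZ$ parity circuit needs a fanout strictly wider than the one you are trying to build. Recursion on depth cannot close the loop, because fanout trees only raise width polynomially: from $\polylog(n)$ you never reach $2^{\log^{\epsilon} n}$. The cited works obtain this clause by a genuinely quantum construction, building cat-like states directly as in \cite{rosenthal2021qac0}, not by simulating an $\ACZ$ parity circuit. Your proposal does not supply that idea. This clause, and through it the threshold clause, therefore rests on citing those results, exactly as the paper does.
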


\subsubsection{Applications of \safefanout}

\begin{corollary}[Parallel W-controlled-swap Fanout (from Cor 3 of \cite{cleandicke2026})]\label{fact:ctrlswap}
Let $Q_1 \dots Q_m, Q_*$ be a set of $m+1 = \poly(k)$ $k$-qubit registers and $A$ be an $m$-qubit register. Then, the following map defined for every $i \in [m]$ and any state $\ket{\psi}$ on the $Q_1 \dots Q_m, Q_*$ registers, can be implemented in $\qaczfkd$.
$$\ket{e_i}_{A} \ket{\psi} \mapsto \ket{e_i}_A \swap(Q_i, Q_*) \ket{\psi},$$
where $e_i = 0^{i-1} 1 0^{m-i}$. In other words, this map swaps the registers $Q_i$ and $Q_*$ controlled on the $i$th qubit bit of $A$, under the promise that the state on $A$ has hamming weight $1$. 
\end{corollary}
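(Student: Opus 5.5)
The idea is to reduce the $W$-controlled swap to a linear number of ordinary controlled swaps and parallelize them. The fanout comes from the source qubits of $A$, each of which must act on $O(k)$ qubits at once.

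Concretely, I would first fan out $\ket{e_i}_A$. Using $\FANOUT_{2k}$, which is $O(1)$ $\FANOUT_k$ gates by \cref{fact:kthresh}, each bit $A_j$ is copied into a fresh $2k$-qubit ancilla block $B_j$, with one copy for each qubit of $Q_j$ and one for each qubit of a private copy of the target. The same must also be done ``backwards'' for $Q_*$. Since $Q_*$ is shared across all $m$ branches, I cannot swap every $Q_j$ into $Q_*$ in parallel. Instead I would first route $Q_*$ into a one-hot location as follows.

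\begin{enumerate}
\item Prepare $m$ fresh $k$-qubit registers $R_1,\dots,R_m$ in $\ket{0^k}$.
\item Controlled on $A_j$, swap $Q_*$ into $R_j$. Because the promise says exactly one $A_j$ is $1$, this operation moves the data out of $Q_*$ into a single $R_i$.
\item Swap $Q_j\leftrightarrow R_j$ controlled on the copies of $A_j$. These $m$ operations act on disjoint qubits and are done in parallel, $k$ Fredkin gates each, using the $k$ fanned-out copies of $A_j$. This step is depth $O(1)$.
\item Route $R_i$ back into $Q_*$.
\end{enumerate}

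The key step is moving a $k$-qubit register between the shared $Q_*$ and the indexed $R_j$ in constant depth. For each qubit position $t\in[k]$, I would treat the pair $(Q_*^t$, the $R_j^t$'s$)$ with controls given by the $t$-th copies of $A_j$. Moving $Q_*^t$ into $R_i^t$ is done by applying, for each $j$, a CNOT-like Toffoli from $Q_*^t$ (AND $A_j$-copy) into $R_j^t$. This requires fanning out $Q_*^t$ itself to $m$ targets, which is $\FANOUT_m$ with $m=\poly(k)$ and hence allowed by \cref{fact:kthresh}. I would then clear $Q_*^t$ by XORing back the parity/OR of the $R_j^t$'s. Because at most one $R_j^t$ is nonzero, the OR equals that bit, so a single $\OR$ Toffoli over $m$ qubits suffices and no $\PARITY_m$ is needed.

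This must be checked to work on superpositions, not just basis states. It does, since each step is a fixed permutation of basis states on the promise subspace $\{\ket{e_i}\}\otimes(\cdot)$, so linearity extends it to arbitrary $\ket{\psi}$. The reverse routing is the inverse of this circuit. Finally, all copies of $A$ are uncomputed with $\FANOUT$ again, leaving ancillae clean.

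The main obstacle I expect is exactly this routing step (steps 2 and 4): making sure that the fanout of the $Q_*$ qubits and of the $A$ bits stays polynomial in $k$ rather than in $n$, and that the OR-based uncomputation is valid only under the weight-one promise on $A$. That is why the map is only claimed on that subspace.
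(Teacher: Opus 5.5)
Your proof is correct, but it takes a different route from the paper's.

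\textbf{The paper's route.} The paper's proof is a two-line reduction. It cites Corollary 3 of \cite{cleandicke2026} as a black box for the case of single-qubit registers. It then uses $\FANOUT_k$ only to copy each control bit $a_j$ once per qubit position $t\in[k]$. This lets $k$ instances of the single-qubit map run in parallel, one for each position $t$.

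\textbf{Your route.} You build the single-qubit W-controlled swap yourself:
\begin{enumerate}
\item Fan out each $Q_*^t$ to $m$ copies.
\item Apply a Toffoli of each copy with a copy of $A_j$ into a fresh $R_j^t$.
\item Clear $Q_*^t$ with a single $\OR$ over $R_1^t,\dots,R_m^t$. This is exactly where the weight-one promise is needed, since it guarantees at most one $R_j^t$ is nonzero.
\item Swap $Q_j\leftrightarrow R_j$ locally under the copies of $A_j$.
\item Run the routing circuit in reverse.
\end{enumerate}
Running the reversed routing on the post-swap state does return $Q_*$ to the old contents of $Q_i$ and every $R_j$ to zero. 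The copies of $A_j$ are only ever used as controls, so $k$ of them per $j$ suffice and can be reused across steps.

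\textbf{What each approach buys.} Your argument is self-contained and relies only on \cref{fact:kthresh}, so it does not depend on the external construction. However, the fanout you apply to the target qubits has width $m$, so your proof genuinely uses the hypothesis $m+1=\poly(k)$. In the paper's route, the only fanout the paper itself adds is the width-$k$ copying of controls. Any dependence on $m$ is hidden inside the cited single-qubit construction.
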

\begin{proof}
Corollary 3 of \cite{cleandicke2026} states that the above map can be implemented in $O(1)$ depth whenever $Q_i$ is a single-qubit register. Then, using $\FANOUT_k$ we can make $m$ copies of each control $a_i$ to implement the map in parallel for each qubit of $T$.  
\end{proof}
\noindent Although not explicitly stated in their result, \cite{cleandicke2026} only use \emph{Hermitian} single qubit unitaries for all their constructions. We require this subtle point as stated below.
\begin{fact}[$\ket{D^n_k}$ with $\FANOUT_n$ \cite{cleandicke2026}]\label{fact:fanoutdicke}
There exists a $\QACZF=\qaczfk{n}$ circuit using $\poly(n)$ Hermitian gates that cleanly prepares the $n$-qubit state $\ket{D^n_k}$ for any $k \leq n$. 
\end{fact}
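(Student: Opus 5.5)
The last statement is Fact~\ref{fact:fanoutdicke}: $\ket{D^n_k}$ can be prepared cleanly by a $\qaczfk{n}$ circuit made only of Hermitian gates. The plan is to re-derive the construction of \cite{cleandicke2026} and check that every single-qubit gate it uses can be taken Hermitian. The idea is to build a state with constant fidelity with $\dkn$ whose marked branch is exactly $\dkn$, and then clean it up with constant-depth amplitude amplification.

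\textbf{Step 1 (initial state and marking).} Prepare $\ket{k/n}^{\tens n}$ by applying $\Rot_{1-k/n}$ to each qubit, since $\Rot_{1-k/n}\ket{0}=\ket{k/n}$. Each $\Rot_\gamma$ is a Hermitian reflection. Then compute $\EX^n_k$ into a flag qubit. With $\FANOUT_n$, both $\THRESHOLD^n_k$ and $\THRESHOLD^n_{k-1}$ lie in $\QACZF$ by Fact~\ref{fact:kthresh}, and their composition gives $\EX^n_k$. The marked branch has probability $\alpha=\binom nk (k/n)^k(1-k/n)^{n-k}$. Because this amplitude is uniform over strings of weight $k$, the marked branch is exactly $\dkn$.

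\textbf{Step 2 (boosting $\alpha$).} When $k=\omega(1)$, $\alpha=\Theta(1/\sqrt{k})$, which is not a constant. This is the main obstacle, because constant-depth Grover needs constant marked weight. I would fix it with parallel repetition aggregated by fanout. Run $r=O(\sqrt{k}\log k)\le\poly(n)$ independent copies of Step 1. Use a $\QACZF$ circuit (threshold and $\OR$ style logic, available with $\FANOUT_n$) to find the first successful copy and write its index one-hot into a register $A$. Then use the controlled-swap map of Corollary~\ref{fact:ctrlswap}, built with $\FANOUT_n$, to move that copy into the output register.

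The difficulty is that the leftover garbage (the failed copies and the index) stays entangled with the output. This can be handled because the selected copy is always exactly $\dkn$, independent of which index won. So the garbage factorizes as $\dkn\tens\ket{\mathrm{junk}}$ on the success branch. The failure branch (no copy succeeds) has weight $(1-\alpha)^r=o(1)$.

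\textbf{Step 3 (exact Grover and uncomputation).} Step 2 gives a state of the form $\sqrt{\beta}\,\dkn\ket{\mathrm{junk}}\ket{1}+\sqrt{1-\beta}\ket{\bad}\ket{0}$ with $\beta=\Theta(1)$. Exact Grover needs $\beta$ to be a specific known constant, such as $1/4$, so that a single iteration rotates fully onto the marked branch. To arrange this, first adjust $\beta$ with an extra ancilla rotated by $\Rot_\gamma$, whose value is computable since $\beta$ is known. Then perform one Grover iteration, built from reflections via Fact~\ref{fact:reflstate}; these are Hermitian. After that, uncompute the junk register. This is possible because the junk is a fixed known state produced by the preparation circuit, so its inverse can be applied, and the inverse of a Hermitian gate is the gate itself. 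Every gate used is a Toffoli, a fanout, a reflection, or a $\Rot_\gamma$, so the whole circuit is Hermitian, as the statement requires.
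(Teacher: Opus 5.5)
Steps 1 and 2 are sound. The flagged branch is exactly $\ket{D^n_k}$. First-success selection followed by the controlled swap leaves $\ket{D^n_k}_T\otimes\ket{\mathrm{junk}}$ on the success branch, and exact Grover recovers that product.

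The gap is the final cleanup. On the branch where copy $i$ succeeds first, the junk consists of:
\begin{itemize}
\item the index register in $\ket{e_i}_A$, with amplitude proportional to $(1-\alpha)^{(i-1)/2}$;
\item copies $1,\dots,i-1$ in $\ket{\mathrm{bad}}$ (the normalized weight-$\neq k$ part of $\ket{k/n}^{\otimes n}$), with flags $0$;
\item copy $i$ in $\ket{0^n}$, with flag $1$;
\item copies $i+1,\dots,r$ in fresh Step~1 states.
\end{itemize}
To reset these registers you need a circuit $V$ acting on the junk registers alone with $V\ket{0}=\ket{\mathrm{junk}}$. The inverse of your preparation circuit is not such a circuit. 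It acts jointly on $T$ and the junk, since the controlled swap is exactly what moved the Dicke state out of the junk registers. Running it backwards therefore un-prepares the output too. A state being fixed and known does not by itself give it a separate constant-depth preparation, so as written you have not produced a clean circuit. This is the central issue for any parallel-repetition scheme. The paper's own parallel-amplification claim avoids it by requiring each run's failure branch to be exactly $\ket{0^n}$ and post-selecting on exactly one success, so the only leftover is $\ket{W_t}$, which has its own preparation.

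Here the gap can be filled, but it needs an argument.
\begin{itemize}
\item The index register is a one-hot superposition with known amplitudes, so Lemma~\ref{lem:anyonehotqac} prepares it.
\item In the regime where you use repetition, $\alpha=\Theta(1/\sqrt{k})$, so $\ket{\mathrm{bad}}$ has weight $1-\alpha=\Omega(1)$ in $\ket{k/n}^{\otimes n}$. Flagging weight $\neq k$ and applying Corollary~\ref{cor:constamptoexact} therefore prepares it exactly in constant depth.
\item With $\FANOUT_n$, compute from $A$ the bits ``copy $j$ precedes the first success'' and ``copy $j$ follows it''. Controlled on these, apply the $\ket{\mathrm{bad}}$ preparation or Step~1 to copy $j$. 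Then set $f_i$ from $a_i$ and uncompute the auxiliary bits.
\end{itemize}
Running this in reverse after Grover cleans the junk. With that added, your route is a legitimate self-contained alternative. The paper itself does not reprove the fact: it cites \cite{cleandicke2026} and only observes that the single-qubit gates there are Hermitian.

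Relatedly, your Hermiticity claim treats $\EXACT^n_k$ and the controlled swap as if they were single gates. Their internal single-qubit gates must be checked as well.
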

\noindent Note that \cref{fact:kthresh} and \cref{fact:fanoutdicke} together for $k = \polylog(n)$ only implies $\QACZ$ circuits for Dicke states on $\polylog(n)$ qubits, $\ket{D^{\poly(k)}_k}$, rather than $\ket{D^n_k}$.

\subsubsection{Amplitude amplification}
\begin{fact}[Exact Grover in $\QACZ$ (Thm 7 of \cite{grier2026tc0})]\label{fact:ampamp}
For any $m$-qubit circuit $C$ that  produces, 
\[C \ket{0^m} = \cos \theta \ket{\psi_0} \ket{0} + \sin \theta \ket{\psi_1} \ket{1},\]
for $k$ odd and $\theta = \frac{\pi}{2k}$,
the state $\ket{\psi_1} \ket{1}$ can be prepared from $\ket{0^m}$ using $O(k)$ applications of $C, C^\dag$ and single qubit unitaries.
\end{fact}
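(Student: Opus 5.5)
The plan is the textbook Grover iteration, made exact by the choice $\theta = \frac{\pi}{2k}$ with $k$ odd. Write $\ket{\Phi} := C\ket{0^m} = \cos\theta \ket{G_0} + \sin\theta\ket{G_1}$, where $\ket{G_0} := \ket{\psi_0}\ket{0}$ and $\ket{G_1} := \ket{\psi_1}\ket{1}$. These two vectors are orthonormal because the flag qubit differs.

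I will use two reflections.

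\emph{Flag reflection.} The first is $S := Z$ applied to the flag qubit. It is a single-qubit unitary and acts as $\ket{G_0} \mapsto \ket{G_0}$, $\ket{G_1} \mapsto -\ket{G_1}$.

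\emph{Reflection about $\ket{\Phi}$.} The second is $R_\Phi := I - 2\kb{\Phi}$, implemented as in \cref{fact:reflstate}: $R_\Phi = C\,(I - 2\kb{0^m})\,C^\dag$. The middle gate $I - 2\kb{0^m}$ is a reflection about a product state, hence a single $\QACZ$ gate. Equivalently, it is a Toffoli conjugated by $X$'s and Hadamards on an ancilla, so it adds only $O(1)$ depth.

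The Grover iterate is $G := -R_\Phi S$. It preserves the real plane $\mathrm{span}\{\ket{G_0},\ket{G_1}\}$, and on that plane it is a composition of two reflections whose axes are separated by angle $\theta$. Hence $G$ is a rotation by $2\theta$ toward $\ket{G_1}$, and by induction
\begin{align}
G^r\ket{\Phi} = \cos((2r+1)\theta)\ket{G_0} + \sin((2r+1)\theta)\ket{G_1}.
\end{align}

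Now take $r = (k-1)/2$, which is an integer precisely because $k$ is odd. Then $(2r+1)\theta = k\cdot\frac{\pi}{2k} = \frac{\pi}{2}$, so $G^r C\ket{0^m} = \ket{\psi_1}\ket{1}$ exactly. The global sign $-1$ in $G$ contributes at most an overall phase, which can be fixed by a single-qubit phase gate if desired.

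The resulting circuit is $G^{(k-1)/2}C$. It uses $k$ applications of $C$, $(k-1)/2$ applications of $C^\dag$, and $O(k)$ reflection and single-qubit gates. Its depth is $O(k\cdot \mathrm{depth}(C))$, which is constant whenever $k = O(1)$ and $C$ has constant depth.

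Cleanliness follows from \cref{fact:reflstate}: if $C$ uses clean ancillae, its ancillae are returned to $\ket{0}$ after each $R_\Phi$, since $R_\Phi$ acts as $I - 2\kb{\Phi}\tens\kb{0}$ on the ancilla-zero subspace.

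The only delicate points are verifying that the two-dimensional subspace is invariant and that the rotation lands exactly on $\ket{G_1}$ rather than merely near it. The first is the standard two-reflection computation. The second is exactly what the hypothesis $\theta = \pi/(2k)$ with $k$ odd is designed to guarantee.
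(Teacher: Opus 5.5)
Your proof is correct and is the standard exact amplitude-amplification argument behind Thm 7 of \cite{grier2026tc0}, which the paper cites without reproducing; your implementation of $R_\Phi$ as $C(I-2\kb{0^m})C^\dag$ is exactly \cref{fact:reflstate}. One small slip: $G^{(k-1)/2}C$ uses $(k+1)/2$ applications of $C$, not $k$, together with $(k-1)/2$ applications of $C^\dag$, which is still $O(k)$.
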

\noindent We will use a more convenient form of \cite{grier2026tc0}, which applies to any constant amplitude, stated below.
\begin{corollary}[Constant amplitude to exact]\label{cor:constamptoexact}
Let $\ket{\psi}$ be any $n$ qubit state. Suppose there is a depth $d$, $m$-ancilla $\qaczfkd$ circuit $C$ that, for some $\alpha \in [0,1]$, cleanly prepares the following $n+1$-qubit state $\ket{\varphi}$,
$$\ket{\varphi}_{T,t} = \sqrt{\alpha} \ket{\psi}_T \ket{1}_t + \sqrt{1-\alpha} \ket{\bad}_T \ket{0}_t,$$ 
Then, there exists a depth $d' = O(d/\alpha)$, $m' = m+2$-ancilla circuit $C'$ to cleanly prepare $\ket{\psi}$.
\end{corollary}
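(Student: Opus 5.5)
We prove \cref{cor:constamptoexact} by reducing to \cref{fact:ampamp}: we first lower the marked amplitude from $\sqrt{\alpha}$ to exactly $\sin\frac{\pi}{2r}$ for a suitable odd integer $r = O(1/\sqrt{\alpha}) \le O(1/\alpha)$, and then run exact Grover. Choose the smallest odd $r$ with $\sin^2\frac{\pi}{2r} \le \alpha$. Since $\sin^2\frac{\pi}{2r} = \Theta(1/r^2)$, we have $r = O(1/\sqrt{\alpha})$. (The degenerate case $\alpha = 1$ is trivial: $C$ followed by an $X$ on $t$ already gives $\ket{\psi}\ket{0}$.)

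\emph{Damping step.} Introduce one fresh ancilla $a$ and apply the single-qubit rotation $\ket{0}_a \mapsto \sqrt{1-\gamma}\ket{0}_a + \sqrt{\gamma}\ket{1}_a$, with $\gamma := \sin^2(\frac{\pi}{2r})/\alpha \in [0,1]$. Then compute $t \wedge a$ into a second fresh ancilla $f$ using a single Toffoli. Call the resulting depth-$(d+O(1))$ circuit $\tilde C$ and regard $f$ as the flag qubit. Then
\[
\tilde C\ket{0} = \sqrt{\alpha\gamma}\,\ket{\psi}_T\ket{1}_t\ket{1}_a\ket{1}_f + \sqrt{1-\alpha\gamma}\,\ket{\mathrm{junk}}\ket{0}_f ,
\]
and $\alpha\gamma = \sin^2\frac{\pi}{2r}$ exactly. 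This accounts for the two extra ancillae in $m' = m+2$.

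\emph{Grover step.} Apply \cref{fact:ampamp} to $\tilde C$ with flag $f$ and $\theta = \frac{\pi}{2r}$. This yields $\ket{\psi}_T\ket{1}_t\ket{1}_a\ket{1}_f$ using $O(r)$ applications of $\tilde C$, $\tilde C^\dagger$ and single-qubit unitaries. The reflections used inside Grover are reflections about $\ket{0\cdots0}$ and about $\ket{1}_f$. Both are valid $\QACZ$ gates (multi-qubit reflections about product states, as in \cref{fact:reflstate}), so the depth is $O(r\cdot d) = O(d/\sqrt{\alpha}) \subseteq O(d/\alpha)$.

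\emph{Cleanup.} Finish with $X$ gates on $t$, $a$, $f$. Since these qubits are now deterministically $\ket{1}$, this returns all ancillae to $\ket{0}$, and the original $m$ ancillae of $C$ are already clean because $C$ is clean on each branch.

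The main obstacle is the exactness requirement of \cref{fact:ampamp}: Grover needs $\theta = \pi/(2r)$ with $r$ odd, not an arbitrary angle. The damping rotation resolves this, and the only point to verify carefully is that it does not disturb the $\ket{\psi}$ component. It does not, because the rotation acts on a fresh qubit and is conditioned into the flag by a Toffoli, so the marked branch is still exactly $\ket{\psi}$ tensored with fixed basis states.
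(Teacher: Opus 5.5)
Your proposal is correct and follows the paper's route: damp the marked amplitude to exactly $\sin\frac{\pi}{2r}$ for an odd $r$ using a freshly rotated ancilla, invoke exact Grover (\cref{fact:ampamp}), and clean up with $X$ gates. Your parametrization by the amplitude $\sqrt{\alpha}$ is the right one (the paper's write-up sets $\alpha=\sin\phi$ rather than $\sqrt{\alpha}=\sin\phi$) and yields the sharper depth $O(d/\sqrt{\alpha})$; one bookkeeping slip is that $t$ itself becomes an ancilla of $C'$, so with $a$ and $f$ you use $m+3$ ancillae, whereas the paper reaches $m+2$ by rotating $a$ controlled on $t$ (a valid gate since the rotation is Hermitian), which lets $a$ serve as the flag and removes the need for $f$.
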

\begin{proof}
Let $\alpha = \sin \phi$ for some $\phi \in [0,\pi/2]$. Let $\theta \leq \phi$ be the largest value such that $\theta$ has the form  $\pi/(2k)$ for odd $k \in \mathbb{N}$. Then $\theta = \Theta(\phi)$. Let $\alpha' = \sin \theta$.
Using a fresh ancilla $a$, we can prepare the following state cleanly in depth $2d+1$, 
\begin{align}
    \ket{\varphi'} &= \ctrl{\R_{\alpha'/\alpha}}(t,a) \cdot \ket{\varphi}_{T,t} \ket{0}_a \\
&=  \sqrt{\alpha} \ket{\psi}_{T} \ket{1}_a \R_{\alpha'/\alpha} \ket{1}_t + \sqrt{1-\alpha} \ket{\bad}_T \ket{0}_a \ket{0}_t \\
    &= \sqrt{\alpha'} \ket{\psi}_{T} \ket{1}_a \ket{1}_t + \sqrt{1-\alpha'} \ket{\bad'}_{T,t} \ket{0}_a.
\end{align}
Then, since both $C, C^\dag$ are $\qaczfkd$ circuits, applying \cref{fact:ampamp} to $\ket{\varphi'} \ket{0^{m+2}}$ implies a depth $d' = O(d k)$, $m'$ ancilla $\qaczfkd$ circuit to prepare $\ket{\psi} \ket{11}_{t,a}$. The preparation can be made clean in the same depth using single qubit $X$ gates to flip $t,a$ registers. 
Note that $\sin \theta$ is concave on the interval $[0,\pi]$. This makes $\sin \theta = \Theta(\theta)$ and hence, $d' = O(d/\theta) = O(d/\phi) = O(d/\alpha)$. 
\end{proof}

\subsection{Useful distributions and associated quantum states}\label{sec:dist_prelim} 
For any distribution $\D$ on $m$-bit strings, we will use $\D^{\tens \ell}$ to refer to the distribution on $\ell \cdot m$ bit strings obtained by $\ell$ independent samples of $\D$. Additionally, for any $\D$, we use $\ham(\D)$ to refer to the distribution over $[m]$ induced by the Hamming weight of samples from $\D$. 

\begin{definition}[Bernoulli product distribution $\bern(\eps)^{\tens n}$ ]\label{def:bernoulli_dist}
The distribution of $n$ independent Bernoulli bits, each equal to $1$ with probability $\eps$.  
The associated state is $\ket{\eps}^{\tens n}$.
\end{definition}
\noindent Note that $\ham(\bern(\eps)^{\tens n})$ is just the binomial distribution $\binomd(n,\eps)$. 

\begin{definition}[Uniform $k$-hamming slice / Dicke $\D^n_k$]\label{def:dicke_dist}
The uniform distribution on all $n$-bit strings of hamming weight $k$. The associated state is $\ket{D^n_k}$, the $n$-qubit $k$-weight Dicke state,
    \[\ket{D^n_k} := \frac{1}{\sqrt{\binom{n}{k}}}\sum_{x \in \{0,1\}^n, |x| = k} \ket{x}.\]
\end{definition}
\noindent  The special case of Dicke states where $k=1$ is also called the $W$ state, and is denoted $\ket{W_n}$.
Finally, below is the reduction of \cite{gretta2026paritynotinqac0iff} that demonstrates tightness of our bounds with respect to the required $\FANOUT$. 
\begin{fact}[Dicke states imply $\FANOUT$ (Thm 4.14 of \cite{gretta2026paritynotinqac0iff})]
Suppose there exists a $\QAC(d,m)$ circuit $C$ that prepares the state $\ket{D^n_k}$, then, $\FANOUT_k \in \QACZ(O(d), m \cdot \poly(n))$.
\end{fact}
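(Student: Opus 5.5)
The statement is the reduction of \cite{gretta2026paritynotinqac0iff}: from a $\QAC(d,m)$ circuit $C$ preparing $\ket{D^n_k}$ we must build a $\QACZ$ circuit for $\FANOUT_k$ of depth $O(d)$ and $m\cdot\poly(n)$ ancillae. The plan is to extract from $\ket{D^n_k}$ a $k$-qubit GHZ-type resource, and then use the standard equivalence between GHZ preparation and fanout \cite{moore1999qac0, rosenthal2021qac0}: with a $k$-qubit cat state and a controlled reflection, $\FANOUT_k$ (equivalently $\PARITY_k$) is implemented in constant depth.

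\textbf{Step 1 (reduce to weights $n$ vs $k$).} Take $n$ fresh copies of $C$, or a suitable block structure. The aim is a circuit whose output is either a superposition over weight-$k$ strings or its complement, so that a single Toffoli-type test can distinguish a ``$0$'' branch from a ``$1$'' branch.

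\textbf{Step 2 (GHZ extraction).} The Dicke state has the useful property that, once restricted to any $k$ fixed positions, it contains the branch $\ket{1^k}$ together with branches of lower weight on those positions. I would partition the $n$ qubits into $n/k$ blocks of size $k$. Apply $\AND$ on each block into an ancilla. The indicator of the event ``block $j$ is all ones'' then yields, coherently, a one-hot superposition over the blocks $j$, supported exactly on the branch where all $k$ ones lie in a single block. This branch has amplitude
\[
\sqrt{(n/k)/\tbinom{n}{k}},
\]
which is tiny. Grover amplification is therefore not constant depth, and this is the main obstacle.

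\textbf{Step 3 (handling the obstacle).} To fix this, I would take $k$ ancilla positions only, i.e.\ apply the hypothesis with $n'$ chosen so that the overlap is $1/\poly$. Alternatively, use the complementary structure: controlled on a bit $b$, run $C$ or $C^\dagger$ and compare. The cleaner route is to use $\ket{D^n_k}$ as a ``shared randomness'' resource. Measuring-free, compute with $\THRESHOLD$-free gates the OR of each of the $k$ positions, and conjugate so that the phase $(-1)^{b\cdot[\text{weight }k]}$ kicks back onto all target qubits at once. Concretely, I would implement $\ket{b}\ket{0^k}\mapsto\ket{b}\ket{b^k}$ by preparing $\ket{D^{2k}_k}$ on a fresh register. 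Controlled on $b$, reflect about it using \cref{fact:reflstate}; this needs only $C$, $C^\dagger$, and one wide Toffoli, with no fanout. Then, since the reflection flips the sign of a state spread over all $2k$ qubits, a Hadamard-test style interference converts the single control into a correlated change on each of the $k$ targets. Verifying that this produces exactly $b^k$ on the targets, cleanly, is the step I expect to require the real work. I would check it by computing the action on the two-dimensional span of $\ket{D^{2k}_k}$ and $\ket{0^{2k}}$. Depth is $O(d)$ and ancillae $m\cdot\poly(n)$ by construction.
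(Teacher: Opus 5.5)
\textbf{There is a genuine gap.} The paper gives no proof of this statement. It imports it as Theorem~4.14 of \cite{gretta2026paritynotinqac0iff}, so your proposal has to stand on its own, and it does not. The missing idea is a mechanism by which $k$ separate single-qubit interactions combine into one globally detectable change of amplitude $\Omega(1)$. In the cat-state-to-parity reduction, this is supplied by the fact that every $Z_i$ toggles the same relative sign of $\ket{0^k}+\ket{1^k}$; you never supply an analogue for $\ket{D^n_k}$.

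Your earlier steps do not get around this. Step~1 has no content, and you abandon Step~2 for the reason you give yourself. In Step~3 the hypothesis fixes $n$, so you cannot ``choose $n'$'', and you never say how to obtain $\ket{D^{2k}_k}$ from $C$. The obvious route is to project $n-2k$ qubits of $\ket{D^n_k}$ onto $\ket{0}$. This succeeds with probability $\binom{2k}{k}/\binom{n}{k}$, which is $2^{-\Omega(k)}$ once $n\ge 4k$, the same obstacle as in Step~2. A $1/\mathrm{poly}$ overlap would not rescue this either. \cref{cor:constamptoexact} costs depth $O(d/\alpha)$. Aggregating polynomially many parallel attempts instead requires controlled swaps of $k$-qubit registers, which already consumes $\FANOUT_k$ (\cref{fact:ctrlswap}).

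\textbf{The mechanism you do propose also fails.} A controlled reflection about $\ket{D^{2k}_k}$ is indeed fanout-free, since only the middle $I-2\kb{0}$ needs the control. That holds when $C$ is clean and exact; the statement allows neither, so even this step needs a robustness argument. The real problem is that $\ket{0^{2k}}$ is orthogonal to $\ket{D^{2k}_k}$. On the span you plan to examine, the reflection is therefore just $\mathrm{diag}(-1,1)$. It kicks $(-1)^b$ back onto the control when the register holds the Dicke state, and acts trivially otherwise. No computational-basis bit of any register changes. You also never specify how the $k$ targets couple to the Dicke register, so nothing here can output $\ket{b^k}$. The ``Hadamard-test style interference'' you defer is precisely the content of the theorem.

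The natural overlap tests with $C^\dagger$ also compute OR-type rather than parity-type functions. We have $\langle D^n_k|Z_S|D^n_k\rangle\ge 1-2|S|k/n$, so phase flips on a few fixed qubits leave $\ket{D^n_k}$ almost unchanged. A single $X_i$, by contrast, changes the weight and makes the state orthogonal. To close the gap you would need either to reproduce the actual reduction of \cite{gretta2026paritynotinqac0iff}, or to exhibit a different property of $\ket{D^n_k}$ that yields parity with constant amplitude.
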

\noindent Note that $C$ is not necessarily clean nor exact, while our matching Dicke states upper bounds are both clean and exact, as is the ideal. 

\section{Limited-fanout state-synthesis toolkit}\label{sec:primitives}

In this section, we develop our constant-depth state-synthesis toolkit underlying our limited-fanout constructions. These are organized as,
\begin{enumerate}
\item $\QRAM$-based constructions for arbitrary $O(\log n)$ qubit states and arbitrary superpositions over one-hot vectors (\cref{sec:qramconst}). 
\item Amplitude manipulations tools beyond Grover (\cref{sec:ampmanip}).
\item Controlled-unitaries without general fanout (\cref{sec:ctrlunitaries}). 
\end{enumerate}

\subsection{Arbitrary quantum states in constant depth}\label{sec:qramconst} 
We will first show that arbitrary superpositions over $n$-qubit one-hot vectors can be constructed in plain $\QACZ$.
\begin{lemma}[Arbitrary one-hot superpositions in $\QACZ$]\label{lem:anyonehotqac}
Let $\D$ be any distribution over $[n]$ with pmf $p$. Then, the $n$-qubit state,
$$\ket{\wh{\D}} := \sum_{i\in[n]}\sqrt{p(i)}\ket{e_i},$$
where $e_i=0^{i-1}10^{n-i}$, can be constructed cleanly using $\poly(n)$ ancillae in $\QACZ$.
\end{lemma}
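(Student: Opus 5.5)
The plan is to build a product state whose weight-one branch already has exactly the right amplitude ratios, mark that branch with an exact-weight-one test, and then remove the unmarked part with \cref{cor:constamptoexact}.

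\textbf{Product state.} Fix a constant $c>0$; I will take $c=1$. Set
$$\eps_i := \frac{c\,p(i)}{1+c\,p(i)}, \qquad \text{so that} \qquad \frac{\eps_i}{1-\eps_i} = c\,p(i).$$
Applying a single-qubit rotation to each qubit prepares $\ket{\eps_1}\tens\cdots\tens\ket{\eps_n}$ in depth one. Indices with $p(i)=0$ simply stay $\ket{0}$.

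\textbf{Amplitude ratios on the weight-one branch.} The amplitude on $\ket{e_i}$ is
$$\sqrt{\eps_i}\prod_{j\neq i}\sqrt{1-\eps_j} \;=\; \sqrt{P}\,\sqrt{\eps_i/(1-\eps_i)} \;=\; \sqrt{P}\,\sqrt{c\,p(i)}, \qquad P:=\prod_j (1-\eps_j).$$
Thus the (unnormalized) projection onto Hamming weight exactly one is proportional to $\ket{\wh{\D}}$, with the correct ratios and no extraneous terms. Its total mass is
$$\alpha \;=\; c\,P \;=\; c\prod_j \frac{1}{1+c\,p(j)} \;\ge\; c\,e^{-c\sum_j p(j)} \;=\; c\,e^{-c},$$
using $1+x\le e^x$ and $\sum_j p(j)=1$. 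With $c=1$ this gives $\alpha\ge 1/e$, a constant independent of $n$ and of $\D$.

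\textbf{Marking the weight-one branch.} Next I compute $[\,|x|=1\,]$ into a fresh qubit $t$ using $\EX^n_1$. By the remark after \cref{def:exactk}, this gate is built from $\THRESHOLD^n_1$ (an $\OR$) and $\THRESHOLD^n_2$. Applying \cref{fact:kthresh} with constant $k$ puts $\THRESHOLD^n_2$ in $\qaczfk{O(1)}=\QACZ$ with $\poly(n)$ clean ancillae. (Directly, one can compute the $\OR$ over all pairs of $\AND$s, which needs only constant fanout.) This produces
$$\sqrt{\alpha}\,\ket{\wh{\D}}\ket{1}_t+\sqrt{1-\alpha}\,\ket{\bad}\ket{0}_t,$$
where $\ket{\bad}$ is supported on strings of weight $\neq 1$. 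The whole preparation is a clean constant-depth $\QACZ$ circuit.

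\textbf{Exact extraction.} Finally I apply \cref{cor:constamptoexact}. Since $\alpha\ge 1/e$, the resulting depth is $O(d/\alpha)=O(1)$ with two extra ancillae, and the output is $\ket{\wh{\D}}$ cleanly and exactly.

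\textbf{Main obstacle.} The step needing care is the choice of the $\eps_i$. Both the exact proportionality of the weight-one amplitudes to $\sqrt{p(i)}$ and the $n$-independent lower bound on $\alpha$ rely on the identity $\eps_i/(1-\eps_i)=c\,p(i)$. A naive choice such as $\eps_i=p(i)$ would give distorted ratios. The one other point to verify is that threshold-$2$ is cleanly available in plain $\QACZ$ with $\poly(n)$ ancillae, which \cref{fact:kthresh} supplies.
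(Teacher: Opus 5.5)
Your proposal is correct and is essentially the paper's proof: with $c=1$, your $\eps_i = p(i)/(1+p(i))$ is exactly the paper's $q_i$, giving weight-one mass $\alpha=\prod_j (1+p(j))^{-1}\ge 1/e$, and you then mark with $\EX^n_1$ and finish with \cref{cor:constamptoexact}, as the paper does. One minor correction: your parenthetical claim that the $\OR$ of all pairwise $\AND$s "needs only constant fanout" is wrong, since each $x_i$ would feed $n-1$ gates, but nothing depends on it, because \cref{fact:kthresh} already puts $\THRESHOLD^n_2$ in $\QACZ$.
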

\begin{proof}
For each $i\in[n]$, let $q_i:= \frac{p(i)}{1+p(i)}$, and note $1-q_i = \frac{1}{1+p(i)}$. Recall that we define $\ket{\eps} := \sqrt{1-\eps} \ket{0} + \sqrt{\eps} \ket{1}$ for $\eps \in [0,1]$. 
First prepare the following product state using a layer of single qubit unitaries. 
\begin{align}
\ket{\psi_0}:=\bigotimes_{i\in[n]} \ket{q_i}
\end{align}
Now, the amplitude of each $\ket{e_i}$ in $\ket{\psi_0}$ is
\begin{align}
\braket{e_i | \psi_0} &=  \sqrt{q_i}\prod_{j\neq i}\sqrt{1-q_j} \\
&=\lr{\prod_{j\in[n]}\sqrt{1-q_j}}\sqrt{\frac{q_i}{1-q_i}}\\
&=\lr{\prod_{j\in[n]}\frac{1}{\sqrt{1+p(j)}}}\sqrt{p(i)}.
\end{align}
Letting,
\begin{align}
\alpha:=\prod_{j\in[n]}\frac{1}{1+p(j)},
\end{align}
observe that $\braket{e_i | \psi_0} = \sqrt{\alpha} \cdot \sqrt{p(i)}$. Therefore, the component of $\ket{\psi_0}$ inside the Hamming-weight-$1$ subspace is precisely $\sqrt{\alpha}\ket{\wh{\D}}$. 

Since $\sum_j p(j)=1$,
\begin{align}
2\leq 1/\alpha = \prod_{j\in[n]}(1+p(j))\leq \exp\lr{\sum_{j\in[n]}p(j)}=e,
\end{align}
and hence $\alpha = \Theta(1)$.  Now, using the $\EXACT_1$ gate, compute into a fresh ancilla $a$ whether the first register has Hamming weight exactly $1$, obtaining
\begin{align}
\ket{\psi_1}:=\sqrt{\alpha}\ket{\wh{\D}}\ket{1}_a+\sqrt{1-\alpha}\ket{\bad}\ket{0}_a.
\end{align}
Therefore, applying \cref{cor:constamptoexact} cleanly produces $\ket{\wh{\D}}$ in $O(1)$ depth using $\poly(n)$ ancillae in $\QACZ$.
\end{proof}

Then, using an indexing, this allows us to compress this one-hot vectors and obtain arbitrary $\log n$ qubit states. 

\begin{lemma}[One-hot encodings with QRAM]\label{cl:onehot}
 For any $n$, and $\ell = \lceil \log n \rceil$, the following reversible map on $i \in [n]$ can be implemented in $\QACZ[\QRAM_n]$,
 $$\ket{i} \ket{0^n} \mapsto \ket{0^\ell} \ket{e_i}$$
 where $e_i = 0^{i-1} 1 0^{n-i}$ for $i \in [k]$. 
\end{lemma}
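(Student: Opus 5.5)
We build the map in two stages: first compute the one-hot encoding into the empty register, then erase the binary index using the one-hot register together with $\QRAM_n$.

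\textbf{Stage 1: compute $e_i$.} For each $j\in[n]$ in parallel, compute into the $j$-th qubit of the second register the indicator $[i=j]$. This is a Toffoli on the $\ell$ index qubits, with $X$ gates conjugating the positions where $j$ has a $0$ bit. These $n$ gates all read the same $\ell$ index qubits, so they cannot act in one layer without fanout.

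We remove that bottleneck with $\QRAM_n$. Take a fixed classical database register $y$ in which we write the bit string $1^n$ (or, bit by bit, $0^n$ and $1^n$) using single-qubit $X$ gates. Querying $\ket{i}\ket{y}\ket{0}$ then copies the bits of $i$ onto fresh ancillae. More concretely, with $\poly(n)$ copies of the database $D_b = $ the $n$-bit string whose $t$-th bit is $[t=\ldots]$, we can produce $n$ copies of each bit of $i$:
\begin{itemize}
\item For a bit position $b$, the database $x^{(b)}_t := $ the $b$-th bit of $t$ gives $\QRAM_n\ket{i}\ket{x^{(b)}}\ket{0} = \ket{i}\ket{x^{(b)}}\ket{i_b}$.
\item This still needs $\ket{i}$ as input to each query, so the copying problem recurs.
\end{itemize}
A better route avoids copying $i$ altogether. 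With a database $x = e_j$ (prepared by one $X$ gate), a single query returns $[i=j]$. Unfortunately each query still consumes the register $\ket{i}$.

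The workable approach is to alternate computation and uncomputation, as follows.

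\textbf{Stage 2: the actual plan.} We treat the second register, initially $0^n$, as the database and $\ket{i}$ as the address. The key observation is that $\QRAM_n$ with database $z$ and address $i$ flips the target by $z_i$.

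\begin{enumerate}
\item Use Toffolis on copies of $\ket{i}$ to get the bits of $i$ fanned out to $n$ ancilla registers. This is available because copying $\ell = O(\log n)$ qubits $n$ times is exactly $\FANOUT_n$ of $\log n$ bits, which is not free.
\item To avoid that, use instead the controlled-swap/W-style trick: prepare $e_i$ directly via the indicator computation, implementing each $[i=j]$ by the $\QRAM$ query $\ket{j}\ket{e\text{-}database}\ldots$, with the roles of address and data exchanged. Put the constant string $j$ in the address slot (prepared by $X$ gates) and use as database a register holding the unary/thermometer encoding of $i$.
\item The thermometer encoding $u_t=[t\le i]$ is itself computable from $i$ by comparison, which is an $\ACZ$-style circuit of depth $O(1)$ on $\ell$ bits, but again with large fan-out.
\end{enumerate}

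\textbf{Main obstacle.} The crux is fanning $\ell$ bits of $i$ out to $n$ consumers. I would try to resolve it by noting that each consumer only needs a constant-size lookup, and that $\QRAM_n$ on a database of size $n$ addressed by $i$ naturally produces $x_i$. Writing the target one-hot register as $n$ separate queries with databases $e_j$ fails for the same reason. So the plan is to argue recursively: split $i$ into halves of $\ell/2$ bits, build the one-hot encodings of size $\sqrt n$ for each half, and combine them with $\QRAM$ queries whose addresses are one-hot-derived. Once $e_i$ is written, the binary $\ket{i}$ is uncomputed by $\ell$ parity-free queries: bit $b$ of $i$ equals the OR over $t$ with $t_b=1$ of $e_{i,t}$, which a single Toffoli/OR computes in parallel for all $b$, and XOR-ing it back into the index register clears it.
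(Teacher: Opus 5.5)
\textbf{Verdict: genuine gap.} Your proposal never constructs the first half of the map.

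Your uncomputation step is the same as the paper's: bit $b$ of $i$ is the OR of the one-hot positions $t$ with $t_b=1$. One detail is missing there. Each one-hot qubit feeds up to $\ell$ of these ORs, so running them in one layer requires copying it $\ell$ times with $\FANOUT_{\ell}$. This is available in $\QACZ$ because $\ell=O(\log n)$.

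The gap is that you never produce $\ket{i}\ket{e_i}$ from $\ket{i}\ket{0^n}$. You correctly abandon every route you list, and the final recursive plan also fails. Combining $e_{i_H},e_{i_L}\in\{0,1\}^{\sqrt n}$ into $e_i$ means computing all $n$ ANDs $(e_{i_H})_a\wedge(e_{i_L})_b$. Each qubit then feeds $\sqrt n$ gates, so you need $\FANOUT_{\sqrt n}$. That is as strong as $\FANOUT_n$ up to polynomial rescaling, and it is not known to follow from $\QRAM_n$. In addition, ``$\QRAM$ queries whose addresses are one-hot-derived'' is not a defined operation: $\QRAM_n$ takes a binary address and only \emph{reads} a single bit into its target.

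The missing idea is that $\QRAM_n$ becomes a coherent \emph{write} under Hadamard conjugation of the data and target registers.
\begin{itemize}
\item Write $\QRAM_n=\sum_j \kb{j}_A\otimes\cnot(q_j,t)$ and use $(H_{q_j}\otimes H_t)\,\cnot(q_j,t)\,(H_{q_j}\otimes H_t)=\cnot(t,q_j)$.
\item The gate $(I_A\otimes H^{\otimes(n+1)}_{Q,t})\,\QRAM_n\,(I_A\otimes H^{\otimes(n+1)}_{Q,t})$ then maps $\ket{i}\ket{x}\ket{b}\mapsto\ket{i}\ket{x\oplus b\,e_i}\ket{b}$.
\item Applying it with $x=0^n$ and $b=1$ gives $\ket{i}\ket{e_i}\ket{1}$ in one layer, with no fanout of the address at all.
\item Your OR-based uncomputation of $\ket{i}$ and an $X$ gate on the target then finish the proof exactly as in the paper.
\end{itemize}
Equivalently, by phase kickback: put the target in $\ket{-}$ and the database in $\ket{+}^{\otimes n}$, make the query, then apply Hadamards to the database to obtain $\ket{e_i}$.
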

\begin{proof}
First we will argue that the map $\ket{0^{\ell}}_A \ket{e_i}_Q \mapsto \ket{i}_A \ket{e_i}_Q$, is in $\QACZ$ and we will refer to the corresponding unitary as $U_0$, 
This follows by a known binary-indexing argument used in the exact $\EXACT_1$ construction of \cite{cleandicke2026}. For each $k\in[\ell]$, let
\begin{align}
S_k:=\{j\in[n]:\text{the $k$th bit of the binary representation of $j$ is $1$}\}.
\end{align}
Since $\ket{e_i}$ has Hamming weight at most $1$, the $k$th bit of $i$ is
\begin{align}
i_k=\bigvee_{j\in S_k}(e_i)_j.
\end{align}
Each qubit of $Q$ participates in at most $\ell= \lceil \log n\rceil$ such OR gates. Thus, using $\FANOUT_{\ell}$, we can compute the $\ell$ OR to produce each $\ket{i_k}$ in parallel.
We then reverse the fanout operations to clean the copied registers. Since $\FANOUT_\polylog$ can be done in $\QACZ$ (\cref{fact:kthresh}), this gives the claimed $\QACZ$ implementation.

Now, recall that $\QRAM_n$, acting on the index register $A$, memory register $Q = \clr{q_1, q_2 \dots q_n}$ and target $t$ produces, 
\begin{align}
\ket{i}_A\ket{x}_Q\ket{b}_t \mapsto \ket{i}_A\ket{x}_Q\ket{b\oplus x_i}_t.
\end{align}
Let $\wh{\QRAM}_n := (I_A \tens H^{\tens n+1}_{Q,t}) \cdot \QRAM_n \cdot (I_A \tens H^{\tens n+1}_{Q,t})$. Then, observe that, 
\begin{align}
 \QRAM_n \ket{i}_A\ket{x}_Q\ket{b}_t &= \sum_{j \in [n]} \delta_{ij} \cdot  \cnot(q_j, t) \ket{i}_A \ket{x}_Q \ket{b}_t  
\end{align}
and therefore, 
\begin{align}
    \wh{\QRAM}_n \ket{i}_A\ket{x}_Q\ket{b}_t &= \sum_{j \in [n]} \delta_{ij} \cdot  \lr{H^{\tens n+1}_{Q,t}\cnot(q_j, t) H^{\tens n+1}_{Q,t}} \ket{i}_A \ket{x}_Q \ket{b}_t  \\
    &= \sum_{j \in [n]} \delta_{ij} \cdot  \lr{H_{q_j}  \tens H_t \cdot \cnot(q_j, t) \cdot H_{q_j} \tens H_t} \ket{i}_A \ket{x}_Q \ket{b}_t  \\
    &= \sum_{j \in [n]} \delta_{ij} \cdot   \cnot(t, q_j)  \ket{i}_A \ket{x}_Q \ket{b}_t  \\
    &= \cnot(t,q_i) \ket{i}_A \ket{x}_Q \ket{b}_t
\end{align}
Therefore, $\wh{\QRAM}_n$ implements the map $\ket{i}_A\ket{x}_Q \ket{b}_t
\mapsto \ket{i}_A\ket{x\oplus b \cdot e_i}_Q\ket{b}_t$.

Putting it together, we now describe the claimed map. On input $\ket{i}_A \ket{0^n}_Q$ prepare a new ancilla $t$ in the $\ket{1}$ state to obtain, 
\begin{align}
    \ket{\psi_1(i)} &:= \wh{\QRAM}_n \ket{i}_A \ket{0^n}_Q \ket{1}_t \\
    &= \ket{i}_A \ket{e_i}_Q \ket{1}_t 
\end{align}
Then, apply the circuit $U_0$ in reverse to produce, 
\begin{align}
    \ket{\psi_2(i)} &:= (U_0)^\dag \ket{i}_A \ket{e_i}_Q \ket{1}_t \\
    &=  \ket{0^\ell}_A \ket{e_i}_Q \ket{1}_t
\end{align}
Finally flip the $t$ register with an $X$ gate to produce the claimed map cleanly.  
\end{proof}

\begin{corollary}\label{cor:anylogreal} 
 Let $\D$ be any distribution over $[n]$ with pmf $p$. Then, the $n$-qubit state, $\ket{\D} := \sum_{i \in [n]} \sqrt{p_i} \ket{i}$ can be prepared cleanly in $\QACZ[\QRAM_n]$ using $\poly(n)$ ancillae. 
\end{corollary}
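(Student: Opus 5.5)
The plan is to combine \cref{lem:anyonehotqac} with the inverse of the map in \cref{cl:onehot}. Write $\ell = \lceil \log n \rceil$ and note that $\ket{\D}$ is an $\ell$-qubit state, the binary-encoded counterpart of the one-hot state $\ket{\wh{\D}}$.

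First, on an $n$-qubit ancilla register $Q$, we prepare
\[\ket{\wh{\D}}_Q = \sum_{i\in[n]} \sqrt{p(i)}\ket{e_i}_Q.\]
By \cref{lem:anyonehotqac} this takes $O(1)$ depth and $\poly(n)$ ancillae in plain $\QACZ$. We also initialize the $\ell$-qubit output register $A$ in $\ket{0^\ell}_A$, so the joint state is $\sum_i \sqrt{p(i)}\ket{0^\ell}_A\ket{e_i}_Q$.

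Second, we apply the inverse of the unitary $V$ from \cref{cl:onehot}. That unitary is a $\QACZ[\QRAM_n]$ circuit with $V\ket{i}_A\ket{0^n}_Q = \ket{0^\ell}_A\ket{e_i}_Q$. Its inverse $V^\dagger$ is obtained by reversing the gate order and inverting each gate, which keeps us in $\QACZ[\QRAM_n]$ because $\QRAM_n$, Toffoli, and $\FANOUT_{\polylog}$ gates are all self-inverse or have inverses in the class. So $V^\dagger$ maps $\ket{0^\ell}_A\ket{e_i}_Q \mapsto \ket{i}_A\ket{0^n}_Q$ for every $i$. By linearity it sends the superposition to
\[\sum_i \sqrt{p(i)}\ket{i}_A\ket{0^n}_Q = \ket{\D}_A\ket{0^n}_Q.\]
The $Q$ register is then clean, and all internal ancillae of both subcircuits are clean by construction.

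The only point needing care is that $V^\dagger$ must act correctly on the coherent superposition. This holds because \cref{cl:onehot} gives a genuine unitary whose action on every basis input $\ket{i}\ket{0^n}$ is specified, with no garbage left behind. Hence the branches recombine with no residual entanglement, and the total depth is the sum of two constant depths.
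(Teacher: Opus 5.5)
Your proposal is correct and is essentially the paper's own proof: prepare the one-hot state $\ket{\wh{\D}}$ with \cref{lem:anyonehotqac}, then run the \cref{cl:onehot} circuit in reverse on a fresh $\ell$-qubit register, which maps each $\ket{0^\ell}\ket{e_i}$ to $\ket{i}\ket{0^n}$ and hence, by linearity, the superposition to $\ket{\D}\ket{0^n}$. You also correctly note that the target is an $\ell = \lceil \log n \rceil$-qubit state rather than the $n$-qubit state the statement literally names, which matches how the paper uses the result.
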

\begin{proof}
First obtain the $n$-qubit state $\ket{\wh{D}}_Q$ from \cref{lem:anyonehotqac} with a plain $\QACZ$ circuit of $\poly(n)$ ancillae. Let $U$ be the $\QACZ[\QRAM_n]$ circuit to produce the map  \cref{cl:onehot}. Then, using a fresh ancilla register $\ket{0^{\ell}}_A$ apply $U$ in reverse to produce, 
\begin{align}\label{eq:here2654}
  U^\dag  \ket{0^{\ell}}_A \ket{\wh{D}}_Q &= \sum_{i \in [n]} \sqrt{p_i} \cdot U^\dag \ket{0^\ell}_A \ket{e_i}_Q \\
  &= \sum_{i \in [n]} \sqrt{p_i}] \ket{i}_A \ket{0^n}_Q \\
  &= \ket{D}_A \ket{0^n}_Q
\end{align}
as claimed.
\end{proof}

This gives us the following construction for general states. 

\begin{theorem}[Any $O(\log n)$-qubit state with QRAM]\label{thm:anylognqram}
Any $\ell = O(\log n)$-qubit state $\ket{\psi}$ can be cleanly prepared in $\QACZ[\QRAM_n]$ using $\poly(n)$ ancillae. Furthermore, any superposition $\ket{\psi_*}$ over $\clr{ \ket{e_i} }_{i \in [n]}$ can be prepared in $\QACZF$. 
\end{theorem}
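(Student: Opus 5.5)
The plan is to reduce to \cref{cor:anylogreal}, with one change: I will apply the complex phases while the state is still in the one-hot encoding, where they act as single-qubit gates. Write $\ket{\psi} = \sum_{i \in [N]} \sqrt{p_i}\, e^{\mathrm{i}\theta_i} \ket{i}$, where $N = 2^\ell$, $p$ is a pmf on $[N]$ and $\theta_i \in [0,2\pi)$. Since $\ell = O(\log n)$, we have $N = \poly(n)$.

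\textbf{Step 1 (one-hot superposition with arbitrary amplitudes).} I would apply \cref{lem:anyonehotqac} to the distribution $p$ on $[N]$. This prepares $\sum_{i} \sqrt{p_i}\ket{e_i}$ cleanly on $N$ qubits in plain $\QACZ$ with $\poly(N) = \poly(n)$ ancillae. I would then apply, in one extra layer, the single-qubit diagonal gate $\mathrm{diag}(1, e^{\mathrm{i}\theta_i})$ to the $i$th qubit for every $i$. On the basis vector $\ket{e_i}$ only the $i$th qubit is $1$, so this layer maps $\sqrt{p_i}\ket{e_i} \mapsto \sqrt{p_i}e^{\mathrm{i}\theta_i}\ket{e_i}$. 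The result is the one-hot encoding $\ket{\psi_*} = \sum_i \sqrt{p_i}e^{\mathrm{i}\theta_i}\ket{e_i}$. This already settles the ``furthermore'' part: every superposition over $\{\ket{e_i}\}_{i\in[n]}$ is preparable in plain $\QACZ$ with $\poly(n)$ ancillae, and hence also in $\QACZF \supseteq \QACZ$.

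\textbf{Step 2 (compression).} Let $U$ be the circuit from \cref{cl:onehot} for memory size $N$, which realises $\ket{i}\ket{0^N} \mapsto \ket{0^\ell}\ket{e_i}$. Exactly as in the proof of \cref{cor:anylogreal}, I would apply $U^\dagger$ to $\ket{0^\ell}_A\ket{\psi_*}_Q$. By linearity this gives $\sum_i \sqrt{p_i}e^{\mathrm{i}\theta_i}\ket{i}_A\ket{0^N}_Q = \ket{\psi}_A\ket{0^N}_Q$, which is a clean preparation. The phases cause no difficulty here because $U^\dagger$ is linear.

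\textbf{Main obstacle.} The only delicate point is resource accounting. \cref{cl:onehot} at memory size $N$ needs $\QRAM_N$ rather than $\QRAM_n$, together with $\FANOUT_{\lceil\log N\rceil}$ for the binary-indexing map $U_0$.
\begin{enumerate}
\item For the first, $N = \poly(n)$, so the recursive-blocking remark after \cref{def:qramn}, namely $\QACZ[\QRAM_k] = \QACZ[\QRAM_{\poly(k)}]$, gives $\QRAM_N$ in constant depth from $\QRAM_n$ gates with $\poly(n)$ ancillae.
\item For the second, $\lceil\log N\rceil = O(\log n)$, so \cref{fact:kthresh} implements $\FANOUT_{\lceil\log N\rceil}$ in $\QACZ$.
\end{enumerate}
I would also check that the inverse circuit $U^\dagger$ stays in $\QACZ[\QRAM_n]$. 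This holds because $\QRAM_n$ is self-inverse and the remaining gates are invertible $\QACZ$ gates. Composing constantly many constant-depth pieces then gives the claimed constant-depth, $\poly(n)$-ancilla circuit.
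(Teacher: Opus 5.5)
Your proposal is correct and follows the paper's proof essentially verbatim: prepare the real-amplitude one-hot superposition via \cref{lem:anyonehotqac}, fix the phases with a layer of single-qubit gates $\mathrm{diag}(1,e^{\mathrm{i}\theta_i})$, and then compress with the inverse of the \cref{cl:onehot} circuit, exactly as in \cref{cor:anylogreal}. Your explicit resource accounting, obtaining $\QRAM_{2^\ell}$ from $\QRAM_n$ by recursive blocking and $\FANOUT_{O(\log n)}$ via \cref{fact:kthresh}, is consistent with the paper and slightly more careful than its write-up, as is your remark that the one-hot part already lies in plain $\QACZ$.
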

\begin{proof}
It suffices to describe the preparation $\ket{\psi_*}$, for all $m$ using $\FANOUT_m$. Then, for $m= 2^{\ell} = \poly(n)$, the corresponding $\ell$-qubit state can be obtained by applying \cref{cl:onehot} exactly as in \cref{eq:here2654}.
Any such $\ket{\psi_*}$ can be described as follows for nonnegative and real $\alpha_i$, and $\phi_i \in [0, 2\pi)$. 
\begin{align}
    \ket{\psi_*} = \sum_{i \in [m]} e^{-i \phi_i} \sqrt{\alpha_i} \ket{e_i}
\end{align}
Let $\D$ be the distribution over $[n]$ given by probability masses $p(i) := \alpha_i$.  Using \cref{lem:anyonehotqac}, first prepare $\ket{\wh{D}}$ in $\QACZ$. 
Then, for each qubit, apply the single qubit unitary $U_i := \mathrm{diag}(1,e^{-i \phi_i})$ to correct the phases and obtain, 
\begin{align}
\lr{\bigotimes_{j} U_j} \ket{\wh{\D}} &=   \lr{\bigotimes_{j} U_j}  \sum_{i \in [m]} \sqrt{\alpha_i}  \ket{e_i} \\
    &= \sum_{i \in [m]} \sqrt{\alpha_i} U_i \ket{e_i} \\
    &= \sum_{i \in [m]} e^{-i \phi_i} \sqrt{\alpha_i} \ket{e_i}
\end{align}
which is exactly the state $\ket{\psi_*}$ as claimed.
\end{proof}

\subsection{Amplitude manipulations}\label{sec:ampmanip}
\noindent We first describe a useful primitive which enables the reasonable adjustment of amplitudes on $n$ marked branches of a state.  

\begin{lemma}[Adjust amplitudes of $n$ branches]\label{lem:adjustamp}
For any $n,k$, let $C$ be a depth-$d$ $m = \poly(n)$-ancilla $\qaczfkd$ circuit to produce the state $C \ket{0^m} = \ket{\psi}$ of the following form, 
\begin{align}
 \ket{\psi} = \sum_{i \in [n]} \sqrt{\alpha_i} \ket{e_i}_X \ket{\varphi_i}_T, 
\end{align}
where $X$ is an $n$-qubit register and  $\ket{\varphi_i}$ are arbitrary states on the (possibly empty) $T$ register. 
Let $\beta_1 \dots \beta_n$ be coefficients with each $0 \leq  \beta_i \leq 1$. Let $Z := \sum_{i} \alpha_i \beta_i$. 
Then, there exists a depth $d' = O(d/Z)$, $m' = \poly(n) \cdot m$ ancilla $\qaczfkd$ circuit $C'$ that cleanly prepares,
\begin{align}
\ket{\psi_*} := \frac{1}{\sqrt{Z}} \sum_{i \in [n]} \sqrt{\alpha_i \beta_i}  \ket{e_i}_X \ket{\varphi_i}_{T}.
\end{align}
In other words, $C' \ket{0^{m'}} = \ket{\psi_*} \ket{0^{m'-|T|-|X|}}$.
\end{lemma}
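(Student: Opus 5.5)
The plan is to attach a flag qubit whose amplitude depends on the one-hot index $i$, and then use \cref{cor:constamptoexact} to post-select on the flag. Concretely, run $C$ to obtain $\ket{\psi}$, and introduce a fresh flag ancilla $t$ in state $\ket{0}$. For each $i\in[n]$, we want to rotate $t$ to $\ket{\beta_i}=\sqrt{1-\beta_i}\ket{0}+\sqrt{\beta_i}\ket{1}$, controlled on the $i$-th qubit of $X$ being $1$.

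Because $X$ has Hamming weight exactly one on every branch, exactly one of these controlled rotations fires. However, they all act on the same target $t$, so they cannot simply be applied in parallel. To handle this, use $n$ fresh ancillae $a_1,\dots,a_n$. Apply $\ctrl{\Rot}$-type gates $\ket{1}_{x_i}\ket{0}_{a_i}\mapsto \ket{1}_{x_i}\ket{\beta_i}_{a_i}$ in parallel, one per pair $(x_i,a_i)$. A controlled single-qubit preparation is valid in $\QACZ$ when the rotation is Hermitian, and a reflection such as $\Rot_{1-\beta_i}$ maps $\ket{0}$ to $\ket{\beta_i}$. On branch $i$ the ancilla register then holds $\ket{0^{i-1}}\ket{\beta_i}\ket{0^{n-i}}$.

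Next compute $t := \bigvee_j a_j$ with a single $\OR$ gate. On branch $i$ this yields
\[\sqrt{\beta_i}\ket{e_i}_A\ket{1}_t+\sqrt{1-\beta_i}\ket{0^n}_A\ket{0}_t.\]
Then uncompute $A$ on the marked part. Controlled on $t=1$ and $x_i=1$, the pair $(x_i,a_i)$ is in state $\ket{1}\ket{1}$, so a Toffoli controlled on $x_i$ and $t$ targeting $a_i$ resets it. Each $x_i$ participates in only $O(1)$ gates, but $t$ would participate in $n$ Toffolis. That is exactly where fanout would be needed, and it is the main obstacle.

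To avoid fanning out $t$, reorder the steps. Before uncomputing, observe that on the marked branch $a_i=x_i$ for all $i$. The marked component is therefore $\sqrt{\alpha_i\beta_i}\ket{e_i}_X\ket{e_i}_A\ket{\varphi_i}\ket{1}_t$ and the unmarked one is $\ket{e_i}_X\ket{0^n}_A\ket{0}_t$. In both cases $a_i = x_i\wedge t$ holds branch by branch. So apply $\cnot(x_i,a_i)$ controlled on $t$ only if necessary. Better still, keep $A$ as garbage through amplification and clean it afterwards: the final state is $\ket{\psi_*}$ tensored with $A=X$ copy, and $\cnot(x_i,a_i)$ in parallel for all $i$ clears $A$ using only two-qubit gates. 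So the plan is as follows.

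First, form the state $\sqrt{Z}\,\ket{\psi_*'}\ket{1}_t+\sqrt{1-Z}\ket{\bad}\ket{0}_t$, where $\ket{\psi_*'}$ is $\ket{\psi_*}$ with $A$ holding a copy of $X$. Preparing it costs depth $d+O(1)$, since the $\OR$ and the controlled rotations are constant depth.

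Second, invoke \cref{cor:constamptoexact} with marked amplitude $Z$. This yields $\ket{\psi_*'}\ket{1}_t$ exactly and cleanly in depth $O(d/Z)$. The corollary's circuit uses $C$ and $C^\dagger$, which are $\qaczfkd$, plus $O(1)$-depth $\QACZ$ gadgets.

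Third, apply parallel $\cnot(x_i,a_i)$ to reset $A$, and apply $X$ to $t$. The total ancilla count is $m+n+O(1)\le \poly(n)\cdot m$.

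The step needing the most care is checking that the garbage register $A$ is a deterministic function of $X$ on the marked branch. Only then is the post-selected state exactly of product form, so that Grover's exactness and the final uncomputation are valid.
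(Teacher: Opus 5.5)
Your proposal is correct and is essentially the paper's argument: per-index ancillae $a_i$ rotated by controlled Hermitian single-qubit gates, a single wide gate aggregating them into one flag of marked weight $Z$, then \cref{cor:constamptoexact} and a constant-depth cleanup. The only difference is how the flag is built. The paper starts the $a_i$ in $\ket{1}$, rotates the rejected part toward $\ket{0}$, and flags with an $\AND$, so the accepted branch carries the fixed string $1^n$ on $A$ and is cleared by $X$ gates. Your $\OR$-based version instead leaves $A$ holding a copy of $X$, which you correctly clear after amplification with parallel $\cnot(x_i,a_i)$.
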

\begin{proof}
Label the qubits of $X$ as $x_1, x_2 \dots x_{n}$. 
Prepare $n$ new ancillae labeled $a_1 \dots a_{n}$ in the $\ket{1}$ state. The overall state is,
\begin{align}
    \ket{\psi_0} &= \sum_{i \in [n]} \sqrt{\alpha_i} \ket{0^{n-1}}_{X \setminus x_i} \ket{1}_{x_i} \ket{\varphi_i}_T \ket{1^{n}}_{a_1\dots a_{n}} 
\end{align}
For every $j \in [n]$, let $G_j := \ctrl{R_{(-\beta_j)}} (x_j,a_j)$. This is a valid $\QACZ$ gate that effects,
\begin{align}
    G_j \ket{e_i}_{X} \ket{1}_{a_j} &= \begin{cases}
       \ket{e_i}_{X} \ket{1}_{a_j} \qquad &\text{ if $i \neq  j$} \\ 
       \ket{e_i}_{X} \lr{\sqrt{\beta_j} \ket{1}_{a_j} + \sqrt{1-\beta_j} \ket{0}_{a_j}} & \text{ if $i = j$ }
    \end{cases} 
\end{align}
Then, in a single layer obtain, 
\begin{align}
        \ket{\psi_1} &:= \lr{\bigotimes_j G_j(x_j,a_j)} \cdot \ket{\psi_0} \\
        &= \sum_{i \in [n]} \sqrt{\alpha_i} \ket{e_i}_X \ket{\varphi_i}_T  \lr{ \sqrt{\beta_i} \ket{1^n}_{a_1 \dots a_{n}}  + \sqrt{1-\beta_i} \ket{\ov{e_i}}_{a_1 \dots a_{n}}}
\end{align}
where $\ov{e_i}$ denotes the $n$-bit string with all but the $i$th bit being $1$. Then, apply an $\AND$ gate to the $a_1 \dots a_n$ with a fresh ancilla $q$, to produce, 
\begin{align}
    \ket{\psi_2} &:= \sum_{i \in [n]} \sqrt{\alpha_i} \ket{e_i}_X \ket{\varphi_i}_T  \lr{\sqrt{\beta_i} \ket{1^n}_{a_1 \dots a_{n}} \ket{1}_q  + \sqrt{1-\beta_i} \ket{\ov{e_i}}_{a_1 \dots a_{n}} \ket{0}_q } \\
        &= \sqrt{Z} \cdot \ket{\psi_*} \ket{1^{n}} \ket{1}_q + \sqrt{1-Z} \ket{\bad} \ket{0}_q
\end{align}
where, recall that $Z = \sum_{i} \alpha_i \beta_i$. Now applying \cref{cor:constamptoexact}, produces in $O(d/Z)$ depth,
\begin{align}
    \ket{\psi_3} &:=  \ket{\psi_*} \ket{1^{n}}_{a_1 \dots a_n} \ket{0}_q 
\end{align}
and finally applying $X$ gates on each $a_j$ ancilla cleanly produces the desired state.  
\end{proof}

Next, we will need the following gadget derived from $\THR$ gates.

\begin{lemma}[Hamming weight computation gadget]\label{lem:hamgate}
For any $n,k \leq n$ and $x \in \bin^n$, let $h_k(x) := \min(k+1, |x|)$. Then, the following map $\HAM^n_k$ can be cleanly implemented by using $\poly(n)$ ancillae in $\qaczfkd$. 
   $$\ket{\x}_X \ket{0^{k+1}}_Y \mapsto \ket{\x}_X \ket{e_{h_k(x)}}_Y.$$
where we define $e_0 := 0^{k+1}$ and $e_j := 0^{j-1} 1 0^{k+1-j}$ for $j \in [k]$. 
\end{lemma}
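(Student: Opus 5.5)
We build $\HAM^n_k$ from a layer of threshold gates followed by a local "differencing" step that turns a unary thermometer code into a one-hot code. By \cref{fact:kthresh}, for every $j \in [k+1]$ the gate $\THRESHOLD^n_j$ lies in $\qaczfkd$ with $\poly(n)$ ancillae, since $j \le k+1 = \poly(k)$. The difficulty is that all $k+1$ threshold gates read the same register $X$, and a qubit can sit in only one gate per layer. So we first make $k+1$ classical copies of $X$: apply $\FANOUT_{k+1}$ to each qubit of $X$ in parallel, which costs $O(1)$ $\FANOUT_k$ gates per qubit by \cref{fact:kthresh}. On input $\ket{x}$ these copies are computational-basis copies of $x$.

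In the next layer, apply $\THRESHOLD^n_j$ on the $j$th copy into a fresh ancilla $z_j$, for every $j \in [k+1]$ in parallel. This yields the thermometer string $z = 1^{h}0^{k+1-h}$ with $h = h_k(x) = \min(k+1,|x|)$. We then want $y_j = z_j \wedge \neg z_{j+1}$ for $j \le k$ and $y_{k+1} = z_{k+1}$. Each $z_j$ is read by at most two such gates, so we compute them in two layers. First apply a Toffoli on $(z_j,\neg z_{j+1})$ into $Y_j$, using $X$-conjugation for the negation; do this for odd $j$, then for even $j$. Finally apply a $\cnot$ from $z_{k+1}$ into $Y_{k+1}$. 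Then $Y = e_{h}$ by construction, with $Y=0^{k+1}$ exactly when $|x|=0$.

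To make the circuit clean, we uncompute in reverse order. Each $z_j$ is a deterministic function of the copies of $x$ and is not modified by writing $Y$, so we apply $(\THRESHOLD^n_j)^\dagger$ to return $z_j$ to $\ket 0$, then undo the fanouts to reset the copies. Because every intermediate register is a classical function of $x$, the whole map acts correctly on superpositions and leaves only $\ket{x}\ket{e_{h_k(x)}}$. Alternatively, $Y$ can be computed into $z$ in place, by noting that $e_h$ is the XOR of $z$ with its shift, and then uncomputed the same way.

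The main obstacle is the fan-in/fan-out bookkeeping: ensuring that no qubit appears in two gates per layer, and that every $\THRESHOLD^n_j$ with $j\le k+1$ is genuinely implementable with $\FANOUT_k$ rather than $\FANOUT_n$. Both points are handled by \cref{fact:kthresh}, which gives $\FANOUT_{k+1}$ and $\THRESHOLD^n_{j}$ from $\FANOUT_k$, together with the constant-layer odd/even scheduling above. Everything else is routine.
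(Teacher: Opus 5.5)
Your proposal is correct and follows essentially the same route as the paper: fan out $k+1$ copies of $X$, evaluate a Hamming-weight predicate on each copy using \cref{fact:kthresh}, and then uncompute the thresholds and the copies. The only difference is cosmetic. The paper applies $\EXACT^n_j$ directly to the $j$th copy (and the predicate $[|x|>k]$ to the last copy), whereas you compute the thermometer code with $\THRESHOLD^n_j$ and take differences of adjacent bits with an odd/even Toffoli schedule, which is just an explicit construction of $\EXACT^n_j$ from thresholds.
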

\begin{proof}
First use $\FANOUT_k$ to obtain $k+1$ copies of the input register, labeled $X_1 \dots X_{k}, X_{k+1}$. Label the $k+1$ output qubits $Y$ as $y_1 \dots y_{k+1}$. 
For each $j \in [k]$ apply and $\EXACT^n_j(X_j, y_j)$ 
which is available to using threshold gates due to \cref{fact:kthresh} and for $j = {k+1}$, apply $\neg \THR_k^n(X_j,y_j)$ gate. 
This has the effect,
\begin{align}
 \ket{x}_{X_j} \ket{0} \mapsto \begin{cases} \ket{x}_{X_j} \ket{\blr{|x| = j}}_{y_j} & \text{(if $j \in [k]$)} \\
 \ket{x}_{X_j} \ket{\blr{|x| > k}}_{y_j} &\text{(if $j = k+1$)}\end{cases}.
\end{align}
The state on the $Y$ register is now precisely the objective $\ket{e_{h_k(x)}}$. Then, uncompute the registers $X_1 \dots X_{k+1}$ with another $\FANOUT$ application to cleanly produces the claimed map. 
\end{proof}

For completeness, we also include the following $\QACZF$ generalization of the parallel amplification technique of \cite{rosenthal2021qac0}, which appeared in an earlier version of this paper.

\begin{claim}[Parallel amplification with Fanout]
For any $n$-qubit state $\ket{\psi_*}$, suppose there exists a depth $d$, $m$ ancilla $\QACZF$ circuit $C$ to cleanly construct the state 
$$\ket{\varphi} := \sqrt{\alpha} \ket{\psi_*} \ket{1} + \sqrt{1-\alpha} \ket{0^n} \ket{0},$$
for some $\alpha \in (0,1]$ such that $\alpha^{-1} \leq \poly(n)$. Then there exists a $\QACZF$ circuit $C'$ of depth $d' = O(d)$ using $m' = O((m+n) /\alpha)$ ancillae that cleanly constructs $\ket{\psi_*}$.  
\end{claim}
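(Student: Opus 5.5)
The plan is to run $N = \lceil c\,\ln(n)/\alpha \rceil$ independent copies of $C$ in parallel, in the spirit of \cite{rosenthal2021qac0}. Copy $r$ has target register $T_r$ (with $n$ qubits), a flag qubit $f_r$, and its own $m$ ancillae, so after one layer of depth $d$ the joint state is $\ket{\varphi}^{\tens N}$. Each copy is either in the "good" branch $\ket{\psi_*}\ket{1}$ or the "bad" branch $\ket{0^n}\ket{0}$. The point of the special form $\ket{0^n}\ket{0}$ for the bad branch is that a failed run leaves nothing to clean up. On the flags $f_1\dots f_N$ the state is $\ket{\alpha}^{\tens N}$, and this is entangled only with the corresponding targets.

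The next step is to select the \emph{first} successful run. Take $N$ fresh qubits $g_1 \dots g_N$ and set $g_r = f_r \wedge \neg f_1 \wedge \dots \wedge \neg f_{r-1}$. The catch is that every $f_s$ appears in up to $N$ of these ANDs, so each flag must first be fanned out $N = \poly(n)$ times. This is exactly where $\QACZF$ is needed; $\FANOUT_n$ (and hence $\FANOUT_{\poly(n)}$, by \cref{fact:kthresh}) makes it constant depth. Afterward the $g$-register holds $\ket{e_r}$, or all zeros when no run succeeded. With $N = \Theta(\ln n/\alpha)$ the all-zero branch has weight $(1-\alpha)^N \le 1/\poly(n)$, so this step alone is not exact. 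I address exactness below.

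Given $g = e_r$, I swap $T_r$ into a fresh output register $O$ of $n$ qubits using the parallel controlled-swap of \cref{fact:ctrlswap}, with fanout width $n$. The output then holds $\ket{\psi_*}$, and copy $r$ now holds $\ket{0^n}$, as do all failed copies. The flags stay entangled: the pattern is $0^{r-1}1\ast$, and the later copies $s>r$ remain in the superposition $\ket{\varphi}$. To clean up, I do three things in order:
\begin{enumerate}
\item Uncompute $g$, which is a function of the flags.
\item Reset copy $r$'s flag, controlled on the same selection.
\item Apply $C^\dag$ to every copy $s$. Copies $s<r$ are $\ket{0^n}\ket{0}$, and for these I apply the inverse of the "bad" branch.
\end{enumerate}
The cleanest way to organize this is to avoid a collapsing argument altogether. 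Instead, I would replace copy $r$'s contribution by $\ket{0^n}\ket{0}$ and then rotate. The simpler alternative is to redefine the target: first prepare $\ket{\varphi}$ and apply $\ctrl{}$-free single-qubit flag corrections, after which everything except the selected output is an explicit product state, $\ket{\varphi}$ for $s>r$ and $\ket{0^n}\ket{0}$ for $s<r$.

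The main obstacle is making the construction exact and clean at once, because the "no success" branch and the dependence of the leftover copies on $r$ both break cleanliness. What I would try first is to drop the union-bound approach and do exact amplification instead. With $N$ parallel copies, the event "some copy succeeded" has amplitude $\sqrt{1-(1-\alpha)^N}$, which is now a constant close to $1$. I would then apply \cref{cor:constamptoexact} to the whole parallel construction, whose marked part is the selected $\ket{\psi_*}$ in $O$ with a flag equal to $\OR(f)$. Grover reflections only use the circuit and its inverse, so the leftover $r$-dependent junk is handled automatically, provided that after the swap I restore each copy $s \ne r$ by applying $C^\dag$ controlled via fanned-out copies of $g$. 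The total depth is $O(d)$, and the ancilla count is $O(N(m+n))$. Tightening the ancilla count to $O((m+n)/\alpha)$ requires $N = O(1/\alpha)$, which the constant-amplitude amplification permits.
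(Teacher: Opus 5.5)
\textbf{There is a genuine gap in the cleanup, and the amplification step does not close it.} After the first-success selection and the swap, the success branch is $\ket{\psi_*}_O\otimes\sum_r \sqrt{\alpha(1-\alpha)^{r-1}}\,\ket{J_r}$, where the junk $\ket{J_r}$ records $r$. In $\ket{J_r}$ you have $g=e_r$ and $f_r=1$; copies $s<r$ are already all-zero, and copies $s>r$ are still in $\ket{\varphi}\ket{0^m}$.

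Your listed steps do not remove this junk. Once $g$ is uncomputed, nothing is left to control the reset of $f_r$. Recomputing the selection from the flags involves $f_r$ itself, so that reset would not be unitary. If you reset $f_r$ first, $g$ is no longer a function of the flags. Also, $C^\dagger$ must be applied only to copies $s>r$. On the all-zero copies $s<r$ it produces a state with overlap only $\sqrt{1-\alpha}$ with $\ket{0}$, and there is no ``inverse of the bad branch'' gate.

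The key false step is the claim that Grover handles the $r$-dependent junk automatically. \cref{cor:constamptoexact} returns exactly the marked component, junk included. Suppose you repair the steps: controlled-$C^\dagger$ only on copies $s>r$, $f_r$ reset via $g_r$, marking with $\OR(g)$. The marked component is then $\ket{\chi}_g\ket{\psi_*}_O$ with $\ket{\chi}\propto\sum_r\sqrt{\alpha(1-\alpha)^{r-1}}\ket{e_r}$. You still need an explicit step that uncomputes this fixed one-hot state, e.g.\ running the \cref{lem:anyonehotqac} preparation of $\ket{\chi}$ in reverse after amplification. Without it the construction is not clean.

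\textbf{How the paper avoids this.} It uses a different selection event, so neither controlled-$C^\dagger$ nor index junk ever arises. With $t=\lceil 1/\alpha\rceil$ copies, it marks ``exactly one success'' using $\EXACT_1$ on the flags and amplifies onto that branch \emph{before} swapping. This branch has constant weight (about $1/e$ for small $\alpha$), and every index $i$ carries the same amplitude. In that branch all other copies are already $\ket{0}$; this is exactly where the $\ket{0^n}\ket{0}$ form of the bad branch and the cleanliness of $C$ are used. After the fanout-based controlled swaps, the only residue is $\ket{W_t}$ on the flags, which is removed by running the $\QACZ$ $W$-state circuit backwards.

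\textbf{Ancilla count.} Separately, your prefix-AND selection, and the controls for the $C^\dagger$'s, need $\Theta(N^2)$ fanned-out flag copies. So your count is $O(N(m+n)+N^2)$, not $O(N(m+n))$. This exceeds the claimed $O((m+n)/\alpha)$ when $1/\alpha\gg m+n$.
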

\begin{proof}
Let $t = \lceil \alpha^{-1} \rceil$. First, starting with $(m + n) \cdot t$ ancillae, construct $t$ copies of $\ket{\varphi}$ in parallel, each on a $n$ qubit target register labeled $Q_i$ and a corresponding ancilla $a_i$. Let $A := \clr{a_1, a_2 \dots a_t}$ and $Q_{\all} := Q_1 \cup Q_2 \dots Q_t$. The resulting state can be written as, 
\begin{align}
    \ket{\psi_0}_{Q_{\all}, A} &:= \bigotimes_{i \in [t]} \lr{\sqrt{\alpha} \ket{\psi_*}_{Q_i} \ket{1}_{a_i} + \sqrt{1-\alpha} \ket{0^n}_{Q_i} \ket{0}} \\
    &= \sqrt{p_*} \lr{\frac{1}{\sqrt{t}} \cdot \sum_{i \in [t]} \ket{\psi_*}_{Q_i} \ket{1}_{a_i} \ket{0^{n\cdot (t-1)}}_{Q_{\all} \setminus Q_i} \ket{0^{t-1}}_{A \setminus a_i}} +  \sqrt{1-p_*} \ket{\bad} .
\end{align}
where $p_* = \Pr[\binomd(t,\alpha) = 1] = \Theta(\Pr[\binomd(t,1/t)]) = \Theta(1)$. 
Recall that we have the $\EXACT_1$ gate in $\QACZ$, apply this to $A$ with a fresh ancilla $a_0$, obtain, 
\begin{align}
\ket{\psi_1}_{Q_{\all}, A, a_0} := \sqrt{p_*} \lr{\frac{1}{\sqrt{t}} \sum_{i \in [t]} \ket{\psi_*}_{Q_i} \ket{1}_{a_i} \ket{0^{n(t-1)}}_{Q_{\all} \setminus Q_i} \ket{0^{t-1}}_{A \setminus a_i}} \ket{1}_{a_0} + \sqrt{1-p_*} \ket{\bad} \ket{0}_{a_0}.
\end{align}
Then apply \cref{cor:constamptoexact} to obtain in $O(d/p_*)$ depth, the state, 
\begin{align}
\ket{\psi_2}_{Q_{\all},A} :=  \frac{1}{\sqrt{t}} \sum_{i \in [t]} \ket{\psi_*}_{Q_i} \ket{1}_{a_i} \ket{0^{n \cdot (t-1)}}_{Q_{\all} \setminus Q_i} \ket{0^{t-1}}_{A \setminus a_i}.
\end{align}
Now prepare a fresh $n$-qubit target register $T$, and apply $t$ $\ctrl{\swap(a_i, Q_i, T)}$ gates, each swapping the registers $Q_i, T$ controlled on $a_i$ to obtain,  
\begin{align}
\ket{\psi_3}_{Q_{\all},A,T} &:= \lr{\prod_{i \in [t]}  \ctrl{\swap(a_i, Q_i, T)}} \cdot \ket{\psi_2} \ket{0^n}_T \\
&= \frac{1}{\sqrt{t}} \sum_{i \in [t]} \swap(Q_i,T) \ket{\psi_*}_{Q_i} \ket{1}_{a_i} \ket{0^{n \cdot (t-1)}}_{Q_{\all} \setminus Q_i} \ket{0^{t-1}}_{A \setminus a_i} \ket{0^n}_T\\
&= \ket{0^{nt}}_{Q_{\all}} \ket{W_t}_{A} \ket{\psi_*}_T.
\end{align}
Due to \cref{fact:ctrlswap} and $t = \poly(n)$, this transformation can be implemented in parallel in $O(1)$ depth using $\FANOUT_n$. 
Finally, since $\ket{W_t}$ can be cleanly computed in $\QACZ$ (\cref{fact:constdicke}), we run its circuit in reverse to uncompute the $A$ register and produce the state  $\ket{\psi_*}$ cleanly as claimed.
\end{proof}

\subsection{Controlled-unitaries}\label{sec:ctrlunitaries}
Recall that we cannot implement the controlled circuit operator, $\ctrl{C}$ in $\QACZ$ for arbitrary $\QACZ$ circuit $C$. However, this operator is available in $\QACZF$ for any $\QACZF$ circuit $C$, and hence we can scale down to obtain the following claim. 
\begin{claim}[Controlled-circuit in {$\QACZ[\FANOUT_\ell]$}]\label{cl:ctrlqac}
Let $C$ be a depth $d$, $m$ ancilla $\QACZ[\FANOUT_\ell]$ circuit that consists of at most $\poly(\ell)$ gates for some $\ell$. Then, the controlled version of $C$,
$\ctrl{C}$ can be implemented by a depth $O(d)$, $O(m + \poly(\ell))$ ancilla and $\poly(\ell)$ size $\QACZ[\FANOUT_\ell]$ circuit $C'$.
\end{claim}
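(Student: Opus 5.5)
We would follow the standard $\QACZF$ recipe for controlling a circuit: make enough classical copies of the control bit and use one copy per gate. Each gate is then replaced by its controlled version, and the copies are uncomputed at the end. The only change from the usual argument is that $C$ has $\poly(\ell)$ gates, so we need only $\poly(\ell)$ copies of the control. By \cref{fact:kthresh}, $\FANOUT_{\poly(\ell)}$ is implementable with $O(1)$ $\FANOUT_\ell$ gates, so this fanout stays inside $\QACZ[\FANOUT_\ell]$.

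\textbf{Step 1: copy the control.} Let $c$ be the control qubit and $G_1,\dots,G_s$ the gates of $C$, with $s=\poly(\ell)$. We apply $\FANOUT_s$ (built from $O(1)$ layers of $\FANOUT_\ell$) from $c$ into fresh ancillae $c_1,\dots,c_s$. This produces $\ket{b}\ket{b^s}$ for each basis value $b\in\bin$ of the control.

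\textbf{Step 2: control each gate, layer by layer.} Within each layer the gates act on disjoint qubits, and each gate $G_g$ has its own copy $c_g$, so all controlled gates in a layer can be applied in parallel. We handle each gate type as follows.
\begin{itemize}
\item \emph{Toffoli gates.} A controlled Toffoli is a Toffoli with one more input, which is a valid $\QACZ$ gate.
\item \emph{$\FANOUT_\ell$ gates.} As noted in \cref{sec:gatenotation}, we compute $\AND(c_g,\text{source})$ into a fresh ancilla, fan that ancilla out, and then uncompute the $\AND$.
\item \emph{Single-qubit unitaries.} This is the main obstacle, because a controlled arbitrary single-qubit unitary is not a native gate. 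We first decompose $U=e^{i\theta}AXBXC'$ with $ABC'=I$ (the standard ABC decomposition). This turns controlled-$U$ into a constant-depth circuit of CNOTs (which are Toffolis) and uncontrolled single-qubit gates, plus a controlled phase. The controlled phase $\mathrm{diag}(1,e^{i\theta})$ acts on the control copy $c_g$ alone, which is a single-qubit gate.
\end{itemize}
Each original layer therefore becomes $O(1)$ layers, giving total depth $O(d)$.

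\textbf{Step 3: clean up.} We apply $\FANOUT_s$ again to uncompute $c_1,\dots,c_s$, returning them to $\ket{0}$. The ancillae used are the $m$ ancillae of $C$ (left untouched when $b=0$ and used as in $C$ when $b=1$, in both cases returned clean), plus $\poly(\ell)$ copy ancillae and $\AND$ workspace. The total is therefore $O(m+\poly(\ell))$ ancillae and $\poly(\ell)$ size.

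We must still confirm two points. First, the control copies must be exact classical copies, so that controlling each gate on its own copy is equivalent to controlling on $c$; this holds because the control is a basis state on each branch, and the construction is linear. Second, the phase corrections must be placed on the copies without disturbing the uncomputation. They are diagonal in the computational basis, so they commute with the fanout.
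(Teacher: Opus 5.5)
Your proof is correct, and its skeleton matches the paper's: copy the control to $\poly(\ell)$ ancillae with nested $\FANOUT_\ell$ gates, control each gate of $C$ on its own copy, then uncompute the copies.

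The one real difference is how you handle single-qubit gates. The paper only points to its notation section. There, $\ctrl{U}$ is a native $\QACZ$ gate only for multi-qubit gates and for \emph{Hermitian} single-qubit unitaries, because a controlled reflection is again a reflection about a product state. That is why the paper stresses that the cited $\FANOUT_n$ Dicke construction uses only Hermitian single-qubit gates before applying this claim. Your decomposition $U=e^{i\theta}AXBXC'$ removes that side condition. It controls an arbitrary single-qubit unitary using two CNOT layers plus a phase $\mathrm{diag}(1,e^{i\theta})$ on the copy. So your argument covers the claim as literally stated, at the cost of only a constant-factor depth increase.

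One small correction: the phase on $c_g$ is not correct because it commutes with the fanout; as an operator it does not. It is correct because $c_g=c$ on every branch the circuit reaches. It therefore acts exactly as $\mathrm{diag}(1,e^{i\theta})$ on the control, and that phase is unaffected when the copies are uncomputed.
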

\begin{proof}

Using nested $\FANOUT_\ell$ gates, copy the control to $\poly(\ell)$ ancillae. Then, each of these qubits can be used as controls for each of the $\poly(\ell)$ gates, without blowing up the depth (see \cref{sec:gatenotation}). Uncomputing the ancillae gives a clean implementation of $\ctrl{C}$. 
\end{proof}

\noindent This provides the following corollary for the construction of \cite{cleandicke2026}.
\begin{corollary}[Controlled-Dicke]\label{col:dickesuperpos}
For any $\ell$ there exists a $\qaczfk{\ell}$ circuit $C$ that cleanly implements the map for all $k \in [\ell]$,
$$\ket{e_k} \ket{0^{\ell}} \mapsto \ket{e_k} \ket{D^{\ell}_{k-1}},$$
where $e_k$ is the $\ell$-bit one-hot encoding of $k$.
\end{corollary}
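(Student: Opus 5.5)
For each $j \in \{0,1,\dots,\ell-1\}$ we take the circuit $C_j$ of \cref{fact:fanoutdicke}, instantiated on $\ell$ qubits. It cleanly prepares $\ket{D^{\ell}_{j}}$ from $\ket{0^\ell}$ using $\FANOUT_\ell$ gates. It has constant depth $d$, $\poly(\ell)$ ancillae and $\poly(\ell)$ gates, all Hermitian. By \cref{cl:ctrlqac}, each controlled circuit $\ctrl{C_j}$ is then a $\qaczfk{\ell}$ circuit of depth $O(d)$ and size $\poly(\ell)$. For each $k\in[\ell]$ we use qubit $k$ of the one-hot register as the control of $\ctrl{C_{k-1}}$, with target the shared register $\ket{0^\ell}$.

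The issue is that these $\ell$ controlled circuits all act on the same target register, so running them in parallel is not immediate. Running them sequentially is also too expensive, since that costs depth $\ell$. We fix this by giving each $j$ its own private target block $Q_j$ of $\ell$ fresh ancillae, together with its own private ancillae. Then the controlled circuits act on disjoint qubits and fit into a single $O(d)$-depth layer. On input $\ket{e_k}$ exactly one control is on, so the state becomes $\ket{e_k}\ket{D^\ell_{k-1}}_{Q_{k-1}}$, with every other $Q_j$ equal to $\ket{0^\ell}$ and the original target still $\ket{0^\ell}$.

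Next we move the prepared state into the output register. We apply the parallel controlled swaps of \cref{fact:ctrlswap}: control $e_k$ swaps $Q_{k-1}$ with the output register $Q_*$. This is valid because the control register has Hamming weight exactly $1$, the registers have $\ell$ qubits, and there are $\ell+1=\poly(\ell)$ of them, so the map lies in $\qaczfk{\ell}$. After this step every $Q_j$ is $\ket{0^\ell}$ and the output holds $\ket{D^\ell_{k-1}}$, so the ancillae are clean.

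Everything is linear, so the construction works on each basis input $\ket{e_k}$, as the statement requires. The fanout used for the controls (copying each control bit to $\poly(\ell)$ gates) is also within $\FANOUT_\ell$ by \cref{fact:kthresh}.

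The main obstacle I expect is justifying \cref{cl:ctrlqac} for $C_j$. Controlling a single-qubit gate is only free when the gate is Hermitian; the paper stresses exactly this point in \cref{fact:fanoutdicke}, so we rely on it there. Controlled Toffoli and controlled $\FANOUT_\ell$ gates are handled as described in \cref{sec:gatenotation}. The $j=0$ case is trivial, since $C_0$ is the identity.
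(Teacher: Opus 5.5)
Your proposal is correct and follows essentially the same route as the paper. You take controlled copies (via \cref{cl:ctrlqac}, using the Hermitian gate set of \cref{fact:fanoutdicke}) of the $\FANOUT_\ell$-based Dicke circuits, run them in parallel on private $\ell$-qubit blocks, and then move the prepared state into the output register with the weight-one-promise controlled swaps of \cref{fact:ctrlswap}; the only difference is cosmetic, namely that you index the blocks by $j=k-1$ rather than by $k$.
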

\begin{proof}
Label the input qubits as $\ket{e_k}_{a_1 \dots a_{\ell}}$.
For each  $j \in \ell$, let $C^{\ell}_j$ be the circuit for constructing $\ket{D^{\ell}_{j-1}}$ in $\qaczfk{\ell}$ using $\poly(\ell)$ gates from \cref{fact:fanoutdicke}.
Then, from \cref{cl:ctrlqac}, we can implement $\ctrl{C^{\ell}_j}$ in $\qaczfk{\ell}$. Create $\ell$ registers $Q_1 \dots Q_{\ell}$ of $\ell$-qubits each and let $Q_{\all} = \bigcup_{j \in [\ell]} Q_j$. Apply a layer of $\ctrl{C^{\ell}_j}$ gates as below, 
\begin{align}
    \ket{\psi_0(k)} &:= \lr{\bigotimes_{j} \ctrl{C^{\ell}_j}(a_j, Q_j) } \cdot \ket{e_k}_{a_1 \dots a_{\ell}}  \ket{0^{\ell^2}}_{Q_{\all}} \\
   &=  \ket{e_k}_{a_1 \dots a_{\ell}} \ket{D^{\ell}_{k-1}}_{Q_k} \ket{0^{\ell (\ell-1)}}_{Q_{\all} \setminus Q_k}
\end{align}
Next, prepare another fresh $\ell$-qubit target register $T$, and apply  $\ctrl{\swap(Q_j,T)}$ gates for each $j$. This produces, 
\begin{align}
    \ket{\psi_1(k)} &:= \lr{\prod_{j \in [\ell]} I \tens \ctrl{\swap(Q_j,T)}} \ket{\psi_0(k)} \ket{0^{\ell}}_T \\
    &= \ket{e_k}_{a_1 \dots a_{\ell}} \ket{0^{\ell^2}}_{Q_{\all}} \ket{D^{\ell}_{k-1}}_T.
\end{align}
Although each gate has the same target $T$, the map only needs to work for inputs of hamming weight $1$. Thus this step can also be implemented in $\qaczfk{\ell}$ using \cref{fact:ctrlswap}, and $\ket{\psi_1(k)}$ is as claimed.
\end{proof}

\noindent We cannot simply apply \cref{cl:ctrlqac} to arbitrary $\QACZ$ circuits, even those with $\polylog(n)$ multi-qubit gates, such as the $W_n$ state constructions of \cite{grier2026tc0, cleandicke2026}, because they require $n$ single-qubit unitaries. We can, however implement a controlled version of the $W_n$ state and similar states using the following primitive. 

\begin{lemma}[Controlled state constructions]\label{lem:ctrlstate}
For any arbitrary $|T|$ qubit states, $\ket{\varphi_0}, \ket{\varphi_1}$, if the state $\ket{\psi}_{T,t} = \frac{1}{\sqrt{2}} \ket{\varphi_0}_T \ket{0}_t + \frac{1}{\sqrt{2}} \ket{\varphi_1}_T \ket{1}_t$ can be constructed by a depth $d$,  $\qaczfkd$ circuit $C_0$ then there exists a depth $3d+5$, $\qaczfkd$ circuit $C$, taking a single input bit $x$, that (cleanly) implements the following map:
$$\ket{0}_x\ket{0^m} \mapsto \ket{0}_x \ket{\varphi_0}, \quad \ket{1}_x \ket{0^m} \mapsto \ket{1}_x \ket{\varphi_1}.$$
\end{lemma}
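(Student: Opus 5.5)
The plan is to use the single-bit control $x$ only to decide which branch of $\ket{\psi}$ is marked, and then recover that branch exactly with one round of Grover. The control never needs to be fanned out into $C_0$.

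\textbf{Step 1 (mark the branch $t = x$).} Run $C_0$ on fresh registers $T,t$ to get $\ket{\psi}_{T,t}$. Compute the flag $b = [t = x]$ into a fresh ancilla, using a $\cnot$ from $x$, a $\cnot$ from $t$, and an $X$. For a basis input $x$ the state is
\[
\tfrac{1}{\sqrt2}\ket{\varphi_x}_T\ket{x}_t\ket{1}_b+\tfrac{1}{\sqrt2}\ket{\varphi_{1-x}}_T\ket{1-x}_t\ket{0}_b .
\]
The marked amplitude is $\sin(\pi/4)$, which does not fit \cref{fact:ampamp}, because $\pi/4=\pi/(2k)$ would need $k=2$, and $k$ must be odd.

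\textbf{Step 2 (rescale to $\sin\theta=\tfrac12$).} As in the proof of \cref{cor:constamptoexact}, add a fresh ancilla $a$ and apply a controlled single-qubit rotation from $b$ to $a$. This leaves the marked part with amplitude exactly $1/2=\sin(\pi/6)$ on $\ket{1}_b\ket{1}_a$. Let $A_x$ denote the whole preparation so far. It has depth $d+O(1)$, and its only dependence on $x$ is through the $\cnot$ out of $x$, which is a valid gate.

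\textbf{Step 3 (one Grover iterate).} With $\theta=\pi/6$, a single iterate rotates the state by $2\theta$, taking it to angle $3\theta=\pi/2$, which is exactly the marked state. Concretely, apply:
\begin{itemize}
\item the phase reflection about the marked subspace, a $CZ$ on $b,a$;
\item $A_x^\dagger$;
\item the reflection $I-2\kb{0\cdots0}$ on every register except $x$, which is a valid $\QACZ$ reflection gate;
\item $A_x$ again, up to a global sign.
\end{itemize}
The reflection about the initial state is correct simultaneously for both values of $x$. This holds because, for each fixed $x$, $A_x$ is a unitary acting on the all-zero workspace and $x$ stays a classical control throughout. By linearity the construction then also works on superpositions of $x$. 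This gives $\ket{x}\ket{\varphi_x}_T\ket{x}_t\ket{1}_b\ket{1}_a$, with the ancillas of $C_0$ returned to zero because $C_0$ is clean.

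\textbf{Step 4 (clean up).} Apply $\cnot(x,t)$ to reset $t$, and $X$ gates to reset $b$ and $a$. The depth is three applications of $C_0$ (or its inverse) plus $O(1)$ layers. Careful bookkeeping should give $3d+5$.

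The step I expect to need the most care is Step 3. I would verify that the $x$-dependence of $A_x$ commutes appropriately with the reflection, which it does since the reflection excludes $x$. I would also confirm the exact angle arithmetic, since that is what makes the output exact rather than approximate.
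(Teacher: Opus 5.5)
Your proof is correct, but it takes a different route from the paper.

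\textbf{The paper's route.} The paper does no amplitude amplification here. It builds the antisymmetric state $\ket{\alpha}=\tfrac{1}{\sqrt2}\bigl(\ket{\varphi_0}\ket{0}\ket{1}-\ket{\varphi_1}\ket{1}\ket{0}\bigr)$, which is preparable from $C_0$ in depth $d+1$. It then applies the single reflection $I-2\kb{\alpha}$, of depth $2d+3$, to $\ket{\psi}_{T,a}\ket{b}_x$. Because the two branches of $\ket{\psi}$ have exactly equal weight, this reflection swaps $\ket{\varphi_{1-b}}\ket{1-b}_a\ket{b}_x$ with $\ket{\varphi_b}\ket{b}_a\ket{1-b}_x$ and fixes the branch with $a=x$. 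The result is $\ket{\varphi_b}\ket{b}_a\ket{+}_x$. An $H$ on $x$ and two CNOTs then move $b$ back into $x$. This trick uses $x$ as a data qubit inside the reflection and needs no angle arithmetic, but it depends on the exact equal-weight balance.

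\textbf{Your route.} You use the standard exact-Grover template: rescale to $\sin\theta=1/2$, apply one iterate, and keep $x$ as a pure classical control throughout. This buys two things. First, linearity in $x$ is transparent, since every gate is block-diagonal in $x$. Second, it extends directly to unbalanced branches: make the rescaling rotation controlled on $x$ as well, so that both blocks reach a common $\sin(\pi/(2k))$ with $k$ odd. You could then go straight from $\sqrt{\alpha}\ket{0^n}+\sqrt{1-\alpha}\ket{\perp_n}$ to the controlled map, skipping the balancing step the paper performs in its subsequent corollary. Both constructions cost three calls to $C_0$ or $C_0^\dagger$ plus one all-zero reflection.

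\textbf{Two points to make explicit.}
\begin{itemize}
\item The sign left by the iterate is $-1$ for both $x=0$ and $x=1$, because $\theta$ is identical in the two blocks. That is why it is a global phase and not a relative phase on superposed inputs.
\item Your literal gate list costs a few layers more than $3d+5$; counted naively it comes to roughly $3d+8$. To meet the stated constant, merge the gates around $S_\chi$. Controlled-$H$, then $CZ$, then controlled-$H$ is a single CNOT. After conjugating by the flag computation, the block $A_x^\dagger S_\chi A_x$ acts on inputs with $b=0$ as $C_0^\dagger\,(\text{flip } a \text{ iff } t=x)\,C_0$, which has depth $2d+1$. Running the final cleanup CNOT in parallel with the last controlled-$H$ then gives at most $3d+5$. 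Only $3d+O(1)$ is used later in the paper anyway.
\end{itemize}
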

\begin{proof}
Let $\ket{\alpha}$ be the following state, 
\begin{align}
   \ket{\alpha}_{T,a_0,a_1} = \frac{1}{\sqrt{2}} \ket{\varphi_0}_{T} \ket{0}_{a_0} \ket{1}_{a_1} - \frac{1}{\sqrt{2}} \ket{\varphi_1}_T \ket{1}_{a_0} \ket{0}_{a_1} 
\end{align}
This state can be prepared in depth $d+1$ circuit by applying a single $\cnot(a_0,a_1)$ gate to $\ket{\psi}_{T,a_0}\ket{1}_{a_1}$ constructed using $C_0$. Due to \cref{fact:reflstate} we can implement the reflection about $\ket{\alpha}$, $\R_{\alpha}(T, a_0, a_1)$, in depth $2d+3$. $\R_{\alpha}$ effectively swaps the $\ket{\varphi_0} \ket{01}$ and $\ket{\varphi_1} \ket{10}$ branches, i.e, for on any $b \in \bin$,
\begin{align}
    \R_{\alpha} \cdot \ket{\varphi_{b\oplus 1}} \ket{b \oplus 1} \ket{b} &=  (I - 2\kb{\alpha}) \cdot \ket{\varphi_{b \oplus 1}} \ket{b \oplus 1} \ket{b} \\
    &= \ket{\varphi_{b}} \ket{b} \ket{b \oplus 1}. \label{eq:swap99}
\end{align}
Now, to implement $C$, on input $b \in \bin$, first using $C_0$ on $|T|+1$ ancillae prepare, 
\begin{align}
    \ket{\psi_1(b)} &:= C_0 \ket{0^{|T|+1}} \ket{b}_x  \\
    &=  \frac{1}{\sqrt{2}} \ket{\varphi_0}_{T} \ket{0}_a \ket{b}_x  + \frac{1}{\sqrt{2}} \ket{\varphi_1}_{T} \ket{1}_a \ket{b}_x \\
    &= \frac{1}{\sqrt{2}} \ket{\varphi_{b\oplus 1}}_T \ket{b\oplus 1}_a \ket{b}_x + \frac{1}{\sqrt{2}} \ket{\varphi_b}_T \ket{b}_a \ket{b}_x.
\end{align}
The $\ket{b}_a\ket{b}_x$ branch is orthogonal to $\ket{\alpha}$, and thus unaffected by $\R_{\alpha}$. Next, apply $\R_{\alpha}(T,a,x)$, 
\begin{align}
    \ket{\psi_2(b)} &:= \R_{\alpha}(T,a,x) \ket{\psi_1(b)} \\
    &= \frac{1}{\sqrt{2}} \ket{\varphi_b}_T \ket{b}_a \ket{b \oplus 1}_x + \frac{1}{\sqrt{2}} \ket{\varphi_b}_T \ket{b}_a \ket{b}_x \tag{from \cref{eq:swap99}} \\
    &= \ket{\varphi_b}_T \ket{b}_a  \ket{+}_x.
\end{align}
Finally, the state on $b$ and $x$ can be cleaned up as,
\begin{align}
    \ket{\psi_3(b)} &:=  \swap(x,a) H_x \ket{\psi_2(b)} \\
    &= \swap(x,a) \ket{\varphi_b}_T \ket{b}_a  \ket{0}_x \\
    &= \ket{\varphi_b}_T \ket{0}_a \ket{b}_x.
\end{align}
$\swap(x,a) = \cnot(x,a) \cnot(a,x)$, thus the described $C$ has depth $3d+5$.
\end{proof}

\begin{corollary}\label{cor:constctrl}
For any $n$-qubit state $\ket{\perp_n}$ satisfying $\braket{\perp_n | 0^n} = 0$, and $\alpha \in (0,1)$ satisfying both $\alpha, 1-\alpha = \Theta(1)$, if the state $\ket{\psi} = \sqrt{\alpha} \ket{0^n} + \sqrt{1-\alpha} \ket{\perp_n}$ can be prepared in $\qaczfkd$, then the following map is in $\qaczfkd$:

$$\ket{0}_x \ket{0^n} \mapsto \ket{0}_x \ket{0^n}, \quad \ket{1}_x \ket{0^n} \mapsto \ket{1}_x \ket{\perp_n}.$$
\end{corollary}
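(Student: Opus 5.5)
The plan is to reduce to \cref{lem:ctrlstate} by building the state $\frac{1}{\sqrt{2}}\ket{0^n}_T\ket{0}_t + \frac{1}{\sqrt{2}}\ket{\perp_n}_T\ket{1}_t$ in $\qaczfkd$. \cref{lem:ctrlstate} with $\ket{\varphi_0}=\ket{0^n}$ and $\ket{\varphi_1}=\ket{\perp_n}$ then gives exactly the claimed controlled map in depth $3d'+5$, where $d'$ is the depth used to build that state.

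To build it, start from $\ket{\psi}_T=\sqrt{\alpha}\ket{0^n}+\sqrt{1-\alpha}\ket{\perp_n}$, prepared by the assumed circuit. Apply an $\OR$ gate from $T$ into a fresh qubit $t$. Since $\ket{\perp_n}$ is orthogonal to $\ket{0^n}$, it is supported only on nonzero strings, so this yields
\[
\sqrt{\alpha}\ket{0^n}\ket{0}_t+\sqrt{1-\alpha}\ket{\perp_n}\ket{1}_t .
\]
This already has the form $\sum_{i}\sqrt{\alpha_i}\ket{e_i}_X\ket{\varphi_i}$ with a two-qubit one-hot register. To get one, append an extra qubit $t'$ holding $\neg t$ (a CNOT plus an $X$). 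Now $X=(t',t)$ is in $\ket{e_1}$ on the $\ket{0^n}$ branch and in $\ket{e_2}$ on the $\ket{\perp_n}$ branch.

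Next, apply \cref{lem:adjustamp} to equalize the branches. Take $\beta_1=\min(1,(1-\alpha)/\alpha)$ and $\beta_2=\min(1,\alpha/(1-\alpha))$, so that $\alpha\beta_1=(1-\alpha)\beta_2=\min(\alpha,1-\alpha)$. Then $Z=2\min(\alpha,1-\alpha)=\Theta(1)$, because both $\alpha$ and $1-\alpha$ are $\Theta(1)$. Hence the resulting circuit still has constant depth and $\poly(n)$ ancillae. It outputs $\frac{1}{\sqrt2}\ket{e_1}\ket{0^n}+\frac{1}{\sqrt2}\ket{e_2}\ket{\perp_n}$. Finally, uncompute $t'$ (CNOT from $t$, then $X$), which leaves the desired equal-weight state on $(T,t)$.

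The only delicate point is that \cref{lem:ctrlstate} requires its resource state to be prepared cleanly. All ancillae are returned to $\ket{0}$: those of \cref{lem:adjustamp} by that lemma, and $t'$ by the explicit uncomputation. So cleanliness holds, and the main obstacle reduces to checking that $Z$ is bounded below by a constant, which is exactly where the hypothesis $\alpha,1-\alpha=\Theta(1)$ is used.
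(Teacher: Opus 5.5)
Your proposal is correct and takes the same route as the paper. You mark the two branches with an OR gate, damp the heavier branch by a controlled rotation so both carry equal weight, amplify with \cref{cor:constamptoexact}, and feed the result into \cref{lem:ctrlstate}. The only difference is packaging: you invoke \cref{lem:adjustamp}, whose proof is exactly that rotate-and-amplify step, with $\beta_i$ chosen via a $\min$. This handles $\alpha \le 1/2$ and $\alpha > 1/2$ uniformly and directly yields the $\frac{1}{\sqrt2}\ket{0^n}\ket{0}_t+\frac{1}{\sqrt2}\ket{\perp_n}\ket{1}_t$ form that \cref{lem:ctrlstate} takes as input, whereas the paper splits into OR/NOR cases.
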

\begin{proof}
Let $\ket{\varphi} = \frac{1}{\sqrt{2}} \ket{0^n} + \frac{1}{\sqrt{2}} \ket{\perp_n}$. Due to \cref{lem:ctrlstate}, it suffices to prepare $\ket{\varphi}$. Let $\gamma = \alpha/(1-\alpha)$.
We will describe the case when $\gamma \leq 1$ (i.e $\alpha \leq 1/2$), the other case is symmetric and follows by replacing the $\OR$ gates with $\NOR$ and $\R_{\gamma}$ with $\R_{1/\gamma}$ below.  

First, prepare the state $\ket{\psi_0} := \ket{\psi}_T \ket{01}_{a',a}$ with a new ancillae $a,a'$ and compute the $\OR$ of the $T$ register onto $a'$ to produce, 
$\ket{\psi_1} := \sqrt{\alpha} \ket{0^n}_T \ket{0}_{a'} \ket{0}_a + \sqrt{1-\alpha} \ket{\perp_n}_T \ket{1}_{a'} \ket{0}$. 
Then, apply a $\ctrl{\R_{\gamma}}(a',a)$ gate and uncompute the $\OR$, obtain, $\ket{\psi_2} := \sqrt{2\alpha} \ket{\varphi} \ket{01}_{a',a} + \sqrt{1-2\alpha} \ket{\perp_n} \ket{00}_{a',a}$.
This marks the equal superposition of $\ket{0^n}, \ket{\perp_n}$ and amplifying using \cref{cor:constamptoexact} produces $\ket{\varphi}$ cleanly in $O(1/\alpha) = O(1)$ depth. 
\end{proof}

\section{Constant-depth symmetric states} \label{sec:dicke}

In this section, we prove the main theorem of our paper, that one can synthesize $\ket{D^n_k}$ in $\QACZ[\FANOUT_k]$. Throughout this section, assume that $k$ is at least a sufficiently large constant. For ease of exposition, we do not optimize the constants in our analysis.

\subsection{Intermediate distributional states}
\noindent Recall that for a distribution $\D$ over $\bin^n$, its associated state $\ket{\D}$ is an $n$-qubit state whose amplitudes in the standard basis are given by,
\begin{align}
   \braket{\x|\D} = \sqrt{\Pr_{\y \sim \D} [\y = \x] }.
\end{align}
We will frequently use the notion from \cref{sec:dist_prelim} for the rest of this section. 
Any symmetric distribution $\D$ is given by a convex combination of Dicke distributions thus can be uniquely defined by the probability masses of $\ham(\D)$. Analogously, the corresponding state $\ket{\D}$, can be described as a superposition of Dicke states.  
Next we will define an intermediate distribution that forms the key piece of our main construction. 

\begin{definition}[Damped Binomial Distribution ($\S^m_k$)]\label{def:damped_binomial}
For $m,k \in \mathbb{N}$, $k \leq m$, let $\S^m_k$ be the symmetric distribution over $\bin^m$ defined by probability masses $s(j)$ of $\ham(\S^m_k)$ supported on $j \in [k]$, as, 
$$s(j) = \lam \cdot \frac{1}{(mk)^j} \cdot \binom{m}{j}$$
where $\lam$ is chosen to normalize $s(j)$ correctly. 
The associated state is, 
\[\ket{\S^m_k} := \sum_{j \in [k]} \sqrt{s(j)} \ket{D^m_j}.\]
\end{definition}

\noindent First we will show that $\S^m_k$ has the following useful properties.
\begin{claim}[Bounds on the normalization constant]\label{cl:hybridnorm}
For any $m, 1 \leq k\leq m$, let $\lam$ be the normalization constant from \cref{def:damped_binomial}. 
$$\frac{k}{k+1} \cdot k \le \lam \le k.$$
\end{claim}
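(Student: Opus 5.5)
Since $\sum_{j\in[k]} s(j)=1$, we have the identity
\[
\frac{1}{\lam}=\sum_{j=1}^{k}\binom{m}{j}\frac{1}{(mk)^j}.
\]
So the claim is equivalent to
\[
\frac{1}{k}\le \sum_{j=1}^{k}\binom{m}{j}(mk)^{-j}\le \frac{k+1}{k}\cdot\frac1k .
\]
The plan is to bound this sum from below by its first term and from above by a geometric series.

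\textbf{Lower bound on the sum (upper bound on $\lam$).} Every term of the sum is nonnegative, so the sum is at least its $j=1$ term. That term is $\binom{m}{1}(mk)^{-1}=m/(mk)=1/k$. Hence $1/\lam\ge 1/k$, which gives $\lam\le k$.

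\textbf{Upper bound on the sum (lower bound on $\lam$).}
\begin{itemize}
\item Use $\binom{m}{j}\le m^j/j!\le m^j$ to get $\binom{m}{j}(mk)^{-j}\le k^{-j}$.
\item Extend the finite sum to infinity:
\[
\sum_{j=1}^k k^{-j}\le \sum_{j\ge1}k^{-j}=\frac{1/k}{1-1/k}=\frac{1}{k-1}.
\]
\end{itemize}
This yields $\lam\ge k-1$. The claimed bound $\lam\ge k^2/(k+1)=k-1+\tfrac{1}{k+1}$ is slightly stronger, so the crude bound needs to be sharpened.

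\textbf{Sharpening the upper bound (main obstacle).} The obstacle is pinning down the exact constant $\frac{k}{k+1}$; any $\lam=k(1-O(1/k))$ bound would suffice for later use. The remedy is to keep the factorial.
\begin{itemize}
\item Keeping $1/j!$ gives
\[
\frac1\lam\le \sum_{j\ge1}\frac{k^{-j}}{j!}=e^{1/k}-1.
\]
\item We then need $e^{1/k}-1\le \frac{k+1}{k^2}=\frac1k+\frac1{k^2}$.
\item Writing $x=1/k\le 1$, this becomes $e^x-1\le x+x^2$.
\item This follows from
\[
e^x-1-x=\sum_{j\ge2}\frac{x^j}{j!}\le x^2\sum_{j\ge2}\frac{1}{j!}=x^2(e-2)\le x^2 \qquad (0\le x\le 1).
\]
\end{itemize}
Inverting gives $\lam\ge \frac{k^2}{k+1}=\frac{k}{k+1}\cdot k$. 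Together with the first paragraph's bound $\lam\le k$, this is exactly the claim.
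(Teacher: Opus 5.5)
Your proof is correct and follows essentially the same route as the paper: the $j=1$ term gives $\lambda \le k$, and $\binom{m}{j}\le m^j/j!$ bounds the sum by $e^{1/k}-1\le 1/k+1/k^2$, giving $\lambda \ge k^2/(k+1)$. You also justify $e^x-1-x\le (e-2)x^2$ for $x\in[0,1]$, which the paper leaves implicit; the geometric-series detour is unnecessary but does no harm.
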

\begin{proof}
By definition,
\begin{align}
\lam^{-1} &= \sum_{r=1}^k \binom{m}{r}\left(\frac{1}{mk}\right)^r  \geq \binom{m}{1}\frac{1}{mk} = \frac{1}{k}.
\end{align}
Hence, $\lam \le k$.

For the upper bound, use $\binom{m}{r} \le \frac{m^r}{r!}$ to get
\begin{align}
\lam^{-1} &\le \sum_{r=1}^k \frac{1}{r!k^r} \le \sum_{r=1}^\infty \frac{1}{r!k^r} = e^{1/k} - 1 \leq 1/k + 1/k^2.
\end{align}
Rearranging yields $\lam \ge k \cdot \frac{k}{k+1}$.
\end{proof}

\begin{claim}[pmf of $\S^m_k$ is dominated by a binomial]\label{cl:binomdom}
For any $m \geq 2$, $1 \leq k \leq m/2$ , let $s(j)$ be as defined in \cref{def:damped_binomial}, then, there exists a universal constant $c$ such that  
$$s(j) \leq c \cdot \Pr[\binomd(m,1/m) = j].$$
\end{claim}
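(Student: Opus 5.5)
We compare the two pmfs term by term for each $j \in [k]$; for $j>k$ the bound is trivial because $s(j)=0$. From \cref{def:damped_binomial} and \cref{cl:hybridnorm},
\begin{align}
s(j) = \lam \binom{m}{j} \frac{1}{(mk)^j} \le k \binom{m}{j} \frac{1}{m^j} \cdot \frac{1}{k^j},
\end{align}
while
\begin{align}
\Pr[\binomd(m,1/m)=j] = \binom{m}{j} \frac{1}{m^j}\lr{1-\frac{1}{m}}^{m-j}.
\end{align}
The shared factor $\binom{m}{j} m^{-j}$ cancels in the ratio. This leaves
\begin{align}
\frac{s(j)}{\Pr[\binomd(m,1/m)=j]} \le \frac{k^{1-j}}{(1-1/m)^{m-j}}.
\end{align}

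The numerator satisfies $k^{1-j} \le 1$ for every $j \ge 1$ and $k \ge 1$. For the denominator, note that $m-j \le m-1$ and $1-1/m \in (0,1)$. Hence
\begin{align}
(1-1/m)^{m-j} \ge (1-1/m)^{m-1} \ge e^{-1},
\end{align}
using the standard inequality $(1-1/m)^{m-1} \ge 1/e$ for $m \ge 2$. This inequality follows from $\ln(1-x) \ge -x/(1-x)$ with $x = 1/m$, which gives $(m-1)\ln(1-1/m) \ge -1$. Therefore $c = e$ works.

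The argument has no real obstacle. The only points to check are that $\lam \le k$ from \cref{cl:hybridnorm} absorbs exactly one factor of $k$, and that the exponent $m-j$ is at most $m-1$ because $j \ge 1$. The hypothesis $k \le m/2$ is not needed beyond guaranteeing $j \le m$, so the binomial coefficients are nonzero.
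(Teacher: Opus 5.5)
Your proof is correct and follows essentially the same route as the paper. You use $\lambda \le k$ to reduce to $s(j) \le \binom{m}{j} m^{-j}$, then lower-bound $(1-1/m)^{m-j}$ by a constant. The only difference is that you justify that bound explicitly and get $c = e$, where the paper simply asserts $c = 4$; since $e < 4$, your constant is also compatible with the later choice $\beta_j = s(j)/(4\alpha_j)$.
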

\begin{proof}
For $j \geq 1$,

\[ s(j) = \lambda \cdot (mk)^{-j} \cdot \binom{m}{j} \leq \binom{m}{j}m^{-j}  \leq 4 \cdot \Pr[\binomd(m,1/m) = j].\]

\end{proof}

\begin{claim}[Hamming slice of $\S^m_k$ is uniform over support ]\label{cl:shiftedoccuniform}
For any $k \geq 1$ and $1 \leq j \leq k$  let $\D$ be the distribution over $\bin^{mj}$  defined by $\D := (\S^m_k)^{\tens j}$. Then, for any $x \in \mathrm{supp}(\D)$,  $\Pr[\D = x]$ depends only on $|x|$, the Hamming weight of $x$. 
\end{claim}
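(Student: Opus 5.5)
Since $\D = (\S^m_k)^{\tens j}$ is a product of $j$ independent copies of $\S^m_k$, the probability of a string factors over its $j$ blocks. I will write $x = (x^{(1)}, \dots, x^{(j)})$ with each $x^{(b)} \in \bin^m$, and set $w_b := |x^{(b)}|$. Then
\[
\Pr[\D = x] = \prod_{b=1}^{j} \Pr_{y \sim \S^m_k}[y = x^{(b)}].
\]
Because $\S^m_k$ is symmetric, its mass on a particular string $y$ of weight $w$ is $s(w)/\binom{m}{w}$.

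The first step is to read off membership in the support. We have $x \in \mathrm{supp}(\D)$ exactly when $1 \le w_b \le k$ for every block $b$, since $s$ is supported on $[k]$ and each $s(w) > 0$ there. For such $x$, I will substitute the definition from \cref{def:damped_binomial}:
\[
\frac{s(w_b)}{\binom{m}{w_b}} = \lam \cdot (mk)^{-w_b}.
\]
The binomial coefficient cancels exactly, and this cancellation is the whole point of choosing $s(j) \propto \binom{m}{j}\beta^j$.

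The second step is to take the product over the $j$ blocks:
\[
\Pr[\D = x] = \prod_{b=1}^{j} \lam\,(mk)^{-w_b} = \lam^{j}\,(mk)^{-\sum_b w_b} = \lam^{j}\,(mk)^{-|x|}.
\]
For a fixed $j$, this depends only on $|x|$, which proves the claim.

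There is no real obstacle here. The only point needing care is to state the support condition precisely: every block must have weight in $[k]$, and in particular no block is empty. The formula above is valid only on that support.
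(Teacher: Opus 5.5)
Your proposal is correct and matches the paper's proof essentially step for step. You factor the probability over the $j$ blocks, use symmetry to write each block's mass as $s(h)/\binom{m}{h}$, cancel the binomial coefficient, and multiply to get $\lambda^{j}(mk)^{-|x|}$; your explicit support condition (every block has weight in $[k]$) is a slightly more careful statement of what the paper leaves implicit.
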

\begin{proof}
Each string $x \in \mathrm{supp}(\D)$ can be written as $x = x_1 \dots x_j$ for $x_i \in \bin^m$ with $h_i := |x_i| \in [k]$. Since each $\S^m_k$ is symmetric, 
\begin{align}
   \Pr[\D = (x_1, x_2 \dots x_j)] &= \prod_{i \in [j]} \Pr[\S^m_k = x_i]  \\
 &= \prod_{i \in [j]} \binom{m}{h_i}^{-1} \cdot s(h_i) \label{eq:here13243} 
\end{align}
Substituting from \cref{def:damped_binomial} into \cref{eq:here13243} for $\tau = 1/(mk)$. 
\begin{align}
   \Pr[\D = (x_1, x_2 \dots x_j)] &=  \lam^j \prod_{i \in [j]} \binom{m}{h_i}^{-1} \cdot \binom{m}{h_i} \cdot \tau^{h_i} \\
   &= \lam^{j} \prod_{i \in [j]} \tau^{h_i} \\ 
   &= \lam^j \cdot \tau^{|x|} 
\end{align}
which only depends on $|x|$ and is as claimed.
\end{proof}

We require a controlled version of the associated state to leverage \cref{cor:constctrl}. To do so, we will show that a superposition over $\ket{0^m}$ and $\ket{\S^m_k}$ can be prepared.


\begin{lemma}\label{lem:zeroshft}
For any $1 \leq k \leq m$, there exists $\gamma \in (0,1)$ with $\gamma/(1-\gamma) = \Theta(1)$, such that the state $\ket{\psi} := \sqrt{1-\gamma} \ket{0^m} + \sqrt{\gamma} \ket{\S^m_k}$ can be prepared in $\qaczfkd$. 
\end{lemma}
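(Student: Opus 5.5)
The plan is to start from a product state and correct its Hamming-weight profile by marking and amplifying, following the template in the overview. Apply a layer of single-qubit rotations to get $\ket{1/m}^{\otimes m}$. Its weight profile is $\binomd(m,1/m)$, which puts constant mass $(1-1/m)^m \approx 1/e$ on $\ket{0^m}$ and constant mass on each small weight. Writing
\[
\ket{1/m}^{\otimes m} = \sum_{j=0}^m \sqrt{b(j)} \ket{D^m_j}, \qquad b(j) := \Pr[\binomd(m,1/m)=j],
\]
we use $\HAM^m_k$ from \cref{lem:hamgate}. It is available in $\qaczfkd$ because $\EXACT^m_j$ for $j\le k$ and $\THR^m_k$ need only $\FANOUT_k$ by \cref{fact:kthresh}. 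This computes the one-hot register $\ket{e_{h_k(x)}}_Y$, with $e_0=0^{k+1}$ marking weight $0$. The result is a state of the form $\sum_{h} \sqrt{b'(h)} \ket{e_h}_Y \ket{\varphi_h}_T$, where for $1\le h\le k$ we have $\ket{\varphi_h}=\ket{D^m_h}$, $\ket{\varphi_0}=\ket{0^m}$, and the $h=k+1$ branch collects all weights $>k$.

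Next, reweight the branches with the mechanism of \cref{lem:adjustamp}. That lemma is stated for one-hot registers, so the weight-$0$ branch is handled by an extra flag qubit $y_0$ set via an $\OR$/$\NOR$ on $X$, which makes $Y' = y_0 y_1\cdots y_{k+1}$ genuinely one-hot. The multipliers are chosen as follows.
\begin{itemize}
\item The overflow branch gets $\beta_{k+1}=0$.
\item For $1\le j\le k$, set $\beta_j := s(j)/(c\, b(j))$, where $c$ is the constant from \cref{cl:binomdom}. This guarantees $\beta_j\le 1$.
\item The weight-$0$ branch gets $\beta_0 := \min(1, 1/(c\,b(0)))$, or any fixed constant in $(0,1]$.
\end{itemize}
After reweighting, the weight-$j$ branches carry amplitudes proportional to $\sqrt{s(j)/c}$ for $j\ge1$, and the zero branch carries $\sqrt{b(0)\beta_0}$. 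The normalizer is $Z = b(0)\beta_0 + 1/c = \Theta(1)$, so \cref{lem:adjustamp} runs in constant depth and yields
\[
\sqrt{1-\gamma}\,\ket{0^m} + \sqrt{\gamma}\,\ket{\S^m_k}, \qquad \gamma = \frac{1/c}{Z}.
\]
Here $\gamma/(1-\gamma) = 1/(c\,b(0)\beta_0) = \Theta(1)$. Finally, $\HAM^m_k$ is run in reverse to uncompute $Y$, which is clean because $Y$ is a function of $X$; the flag $y_0$ is uncomputed the same way.

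The main obstacle is the bookkeeping around \cref{lem:adjustamp}. First, its ancilla register must be one-hot, which forces the explicit weight-$0$ flag described above; as an alternative, one could apply the $\ctrl{\Rot}$ gates directly to $Y$ and the flag, mark with an $\AND$, and then invoke \cref{cor:constamptoexact}. Second, the $\HAM$ gadget must be uncomputed \emph{after} amplification. This needs care: the construction lemma's amplification treats the whole state-preparation circuit, including the $\HAM$ computation, as $C$, so the register $Y$ is still entangled with $X$ afterwards and must be removed by a further application of $(\HAM^m_k)^\dagger$. This works since $Y$ remains a deterministic function of $X$ on every surviving branch.

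One edge case remains. \cref{cl:binomdom} assumes $k\le m/2$; for $k>m/2$, $\FANOUT_k$ is already $\FANOUT_{\Theta(m)}$. In that regime I would simply bound $s(j)\le \binom{m}{j}m^{-j}\lambda k^{-j}\le \binom{m}{j}m^{-j}$ using $\lambda\le k$ from \cref{cl:hybridnorm}. This bound does not need $k\le m/2$, so the same $\beta_j$ with $c=e$ works.
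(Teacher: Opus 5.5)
Your proposal is correct and follows essentially the same route as the paper: prepare $\ket{1/m}^{\otimes m}$, compute the Hamming-weight register with \cref{lem:hamgate}, reweight the branches via \cref{lem:adjustamp} with $\beta_j \propto s(j)/\Pr[\mathrm{Bin}(m,1/m)=j]$ (justified by \cref{cl:binomdom}), and finally uncompute the weight register. The differences are cosmetic: you drop the overflow branch by setting $\beta_{k+1}=0$ instead of using a separate amplification step, and your explicit weight-$0$ flag makes the one-hot hypothesis of \cref{lem:adjustamp} hold literally, a point the paper's choice $e_0 = 0^{k+1}$ glosses over.
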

\begin{proof}
First prepare $\ket{\psi_0} = \ket{1/m}^{\tens m}$. Then, using the $\HAM^m_k$ map from  \cref{lem:hamgate} on a fresh $k+1$-qubit register $Y$, obtain, 
\begin{align}
    \ket{\psi_1} &= \sum_{i = 0}^k \sqrt{a_i} \ket{D^m_i}_T \ket{e_{i}}_Y + \sum_{i = k+1}^m \sqrt{a_i} \ket{D^m_i} \ket{e_{k+1}}_Y.
\end{align}
where $a_i = \Pr[\binomd(m,1/m) = i]$ and recall $e_0 := 0^{k+1}$.
Since $\Pr[\binomd(m,1/m) \leq k] = \Omega(1)$, we can apply an $X$ gate on the last qubit of the $Y$ register followed by \cref{fact:ampamp} to obtain in $O(1)$ depth, 
\begin{align}\label{eq:statewewant}
 \ket{\psi_2} &= \sum_{i = 0}^k \sqrt{\alpha_i} \ket{D^m_i}_T \ket{e_{i}}_Y,
\end{align}
where $\alpha_i = a_i / \Pr[\binomd(m,1/m) \leq k]$.

Let $s(j)$ be the probability mass function of $\ham(\S^m_k)$ as defined in \cref{def:damped_binomial}. For each $j \in [k]$, let $\beta_j = s(j)/(4 \cdot \alpha_j)$ and $\beta_0 = 1$. Then, $\alpha_i \geq a_i$ and due to \cref{cl:binomdom}, each $\beta_j \in [0,1]$ for $j \in [k]$. Additionally, 
$\sum_{j=0}^k \alpha_j \beta_j = 4^{-1} + \alpha_0 = \Theta(1)$. Therefore, applying \cref{lem:adjustamp} produces the following state $\ket{\psi_3}$ in depth $O(1)$.  
\begin{align}
\ket{\psi_3} := \sqrt{\gam} \cdot \sum_{j \in [k]} \sqrt{s(j)} \ket{D^m_j} \ket{e_{j+1}}_Y + \sqrt{1-\gam} \ket{0^m} \ket{e_{1}}_Y,
\end{align}
where $\frac{\gamma}{1-\gamma} = \sum_{j \in [k]} \alpha_j \beta_j / \alpha_0 = \frac{1}{4 \cdot \alpha_0} + 1 = \Theta(1)$, as required.
Finally, we can apply $\HAM^m_{k}$ again to obtain, 
\begin{align}
    \ket{\psi_4} &:= \sqrt{\gam} \cdot \sum_{j \in [k]} \sqrt{s(j)} \ket{D^m_j} \ket{0^{k+1}}_Y + \sqrt{1-\gam} \ket{0^m} \ket{0^{k+1}}_Y  \\
    &= \lr{ \sqrt{\gam} \ket{\S^m_k} + \sqrt{1-\gamma} \ket{0^{m}} } \tens \ket{0^{k+1}}_Y,
\end{align}
as claimed.
\end{proof}

\begin{corollary}[Controlled dampened binomial state]\label{cor:ctrlhybrid}
For any $k$, $m \geq k$, the following map on an input qubit $x \in \bin$ can be implemented in $\QACZ$,   
    $$\ket{0}_x \ket{0^m} \mapsto \ket{0}_x \ket{0^m}, \quad \ket{1}_x \ket{0^m} \mapsto \ket{1}_x \ket{\S^m_k}$$
\end{corollary}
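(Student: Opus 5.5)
This corollary follows by combining \cref{lem:zeroshft} with \cref{cor:constctrl}. We take $\ket{\perp_m} := \ket{\S^m_k}$. By \cref{def:damped_binomial}, this state is supported only on Dicke components $\ket{D^m_j}$ with $j \in [k]$, so $j \ge 1$, and hence $\braket{\perp_m | 0^m} = 0$.

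\cref{lem:zeroshft} gives a $\qaczfkd$ preparation of $\sqrt{1-\gamma}\ket{0^m} + \sqrt{\gamma}\ket{\S^m_k}$, where $\gamma/(1-\gamma) = \Theta(1)$. This is exactly the hypothesis of \cref{cor:constctrl}, with $\alpha := 1-\gamma$ and $n := m$. The condition $\gamma/(1-\gamma) = \Theta(1)$ implies that both $\alpha$ and $1-\alpha$ are $\Theta(1)$. Applying \cref{cor:constctrl} then yields the controlled map $\ket{0}_x\ket{0^m} \mapsto \ket{0}_x\ket{0^m}$ and $\ket{1}_x\ket{0^m} \mapsto \ket{1}_x\ket{\S^m_k}$ in $\qaczfkd$.

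The statement says ``$\QACZ$''. I read this as $\qaczfkd$, which is the class the lemmas produce and the one used in the main construction, since $\HAM^m_k$ needs $\FANOUT_k$. If plain $\QACZ$ is genuinely intended, it holds in the regime $k = \polylog(m)$ by \cref{fact:kthresh}.

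The only thing to check is that the constants match. The amplitude amplification inside \cref{cor:constctrl} runs in depth $O(1/\min(\alpha,1-\alpha)) = O(1)$, and each invocation preserves cleanliness. The ancillas used by \cref{lem:zeroshft} (the $Y$ register, the \cref{lem:adjustamp} ancillas, and the \cref{lem:hamgate} copies) are all returned to $\ket{0}$, so the composed circuit is clean with $\poly(m)$ ancillae. I do not expect any real obstacle beyond confirming that $\gamma$ is bounded away from both $0$ and $1$, which \cref{lem:zeroshft} already guarantees via $\alpha_0 = \Theta(1)$.
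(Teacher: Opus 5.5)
Your proposal is correct and matches the paper's proof, which is exactly this one-line application of \cref{cor:constctrl} to the state $\sqrt{1-\gamma}\ket{0^m}+\sqrt{\gamma}\ket{\S^m_k}$ from \cref{lem:zeroshft}, with $\braket{\S^m_k|0^m}=0$ and $\alpha=1-\gamma$ bounded away from $0$ and $1$. Your remark about the class is also accurate: the circuit is really in $\QACZ[\FANOUT_k]$, because $\HAM^m_k$ needs $\FANOUT_k$, rather than plain $\QACZ$ as the statement says, and that is the setting in which the corollary is used in \cref{lem:binomoccckt}.
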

where $\ket{\S^m_k}$ is the state associated with the shifted-truncated binomial.
\begin{proof}
The required map is obtained through \cref{cor:constctrl} due to \cref{lem:zeroshft}. 
\end{proof}

\subsubsection{Dicke occupancy states}\label{sec:occdistr}
\noindent We now describe the preparation of the state corresponding to the Dicke occupancy distribution, induced on each bucket of the Dicke state.
First, we define the occupancy of a bistring distribution under bucketing.

\begin{definition}[Binomial occupancy distribution, $\Occ(\D,\ell)$]\label{def:occupancy_dist}
  Given a distribution $\D$ on $\bin^n$, and $\ell$ s.t $\ell \mid n$, 
  $\Occ(\D,\ell)$ is the distribution over $\{0,\ldots,\ell\}$ as described below. Partition $n$ bits into $\ell$ buckets of sizes $n/\ell$ each. A bucket is \emph{empty} if all its bits are $0$.  For $\x \in \bin^n$, define $\occ(\x)$ to be the number of non-empty buckets in $\x$. Then,  
$$\Pr[\Occ(\D,\ell) = k] := \Pr_{\x \sim \D} \blr{\occ(\x)=k}.$$
\end{definition}
\noindent When $\D$ is the Dicke distribution, define the corresponding state to track the occupancy as below. 
\begin{definition}[Dicke occupancy state, $\ket{\Occ(\D^n_k,\ell)}$]\label{def:occstate}
Let $\D^n_k$ denote the uniform distribution over $n$-bit, Hamming weight $k$ strings. For $\ell \mid n$ and $k\leq \ell$, this is an $(n + k)$-qubit state corresponding to the occupancy of $\dkn$ with $\ell$ buckets given by,  
\begin{align*}
    \ket{\Occ(\D^n_k,\ell)} &= \binom{n}{k}^{-1/2} \sum_{\x \in \{0,1\}^n, |\x| = k} \ket{\x} \ket{e_\occ(\x)}  \\
                          &= \sum_{j = 0}^k \sqrt{p(j)} \dmkj  \ket{e_j}_A,
\end{align*}
where $m = n/\ell$ and $\dmkj$ is the state conditioned on measuring occupancy $j$, and $e_s$ denotes the $k$-bit string with all but the $j$th bit zero, and $p$ is the pmf of $\Occ(\mathcal{D}^n_k, \ell)$. 
\end{definition}
\noindent Note that $\dmkj$ is symmetric within each bucket, and also symmetric across buckets whenever $n \equiv 0 \pmod{\ell}$, and symmetric across the first $\ell-1$ buckets otherwise.

Our strategy will be to approximate $\occknl$ by using the map from \cref{cor:ctrlhybrid} and then correct the amplitudes. First, will relate the two distributions as follows.

\begin{claim}[Hybrid to Dicke occupancy ratio]\label{cl:hybridtoocc} 
For any $m$, $1 \leq k \leq m$, and $\ell \geq k^3$, let $p(j)$ be the probability mass function of $\Occ(\D^{m \cdot \ell}_k, \ell)$. For $j \in [k]$, let
$$q(j) := \Pr_{\x \sim (\S^m_k)^{\tens j}} \bigl[|\x| = k\bigr].$$
It holds that,
\[q(k)=\Theta(1),p(k) = \Theta(1) \quad \text{and} \quad \frac{p(j)}{q(j)} \leq e^2 \cdot k^{j-k} \]
\end{claim}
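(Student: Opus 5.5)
We treat the three assertions separately, writing $n = m\ell$ and $\tau = 1/(mk)$ throughout.

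\textbf{The bound on $p(k)$.} We use the birthday argument. A uniformly random weight-$k$ string places its $k$ ones in distinct positions. The probability that some two of them fall in the same bucket is at most $\binom{k}{2}\cdot\frac{m-1}{n-1}$, which is at most $\binom{k}{2}/\ell \le 1/(2k)$ because $\ell \ge k^3$. Hence $p(k) \ge 1 - 1/(2k) = \Theta(1)$.

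\textbf{The bound on $q(k)$.} For $j=k$ every bucket has weight at least $1$, so $|x| = k$ forces every bucket to have weight exactly $1$. Therefore $q(k) = s(1)^k$. By \cref{def:damped_binomial} and \cref{cl:hybridnorm},
\[
s(1) = \lambda \cdot \frac{m}{mk} = \frac{\lambda}{k} \ge \frac{k}{k+1},
\]
so $q(k) \ge (k/(k+1))^k \ge 1/e$.

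\textbf{The ratio $p(j)/q(j)$.} We compute both quantities by the same count. Let $N_j$ be the number of tuples $(h_1,\ldots,h_j)$ with each $h_i \ge 1$, $\sum_i h_i = k$, weighted by $\prod_i \binom{m}{h_i}$. Equivalently, $N_j$ is the number of weight-$k$ strings on $mj$ bits in which all $j$ buckets are non-empty.

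By \cref{cl:shiftedoccuniform}, every string in the support of $(\S^m_k)^{\otimes j}$ with weight $k$ has probability $\lambda^j\tau^k$. Since $h_i \le k$ holds automatically, this gives
\[
q(j) = N_j\,\lambda^j\,\tau^k .
\]
Choosing which $j$ buckets are non-empty gives
\[
p(j) = \binom{\ell}{j}\,\frac{N_j}{\binom{n}{k}} .
\]
Dividing, the factor $N_j$ cancels:
\[
\frac{p(j)}{q(j)} = \frac{\binom{\ell}{j}}{\binom{n}{k}\,\lambda^j\,\tau^k} = \frac{\binom{\ell}{j}\,(mk)^k}{\binom{n}{k}\,\lambda^j}.
\]

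\textbf{Bounding the ratio.} We use three estimates.
\begin{itemize}
\item $\binom{\ell}{j} \le \ell^j/j!$.
\item $\binom{n}{k} \ge \frac{n^k}{k!}\left(1-\frac{k^2}{n}\right)$, and the correction factor is $\ge e^{-1}$ or so since $n \ge \ell \ge k^3$.
\item $\lambda^j \ge k^j (k/(k+1))^j \ge k^j/e$, by \cref{cl:hybridnorm}.
\end{itemize}
Substituting, and using $n = m\ell$ so that $n^k = m^k\ell^k$, the $m^k$ factors cancel and we get
\[
\frac{p(j)}{q(j)} \le e^2\,\frac{k!\,k^k\,\ell^{j}}{j!\,k^j\,\ell^k} .
\]
Now $k!/j! \le k^{k-j}$, so
\[
\frac{p(j)}{q(j)} \le e^2\, k^{2(k-j)}\,\ell^{j-k}.
\]
Since $\ell \ge k^3$ and $j \le k$, we have $\ell^{j-k} \le k^{3(j-k)}$. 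Combining, $p(j)/q(j) \le e^2 k^{j-k}$.

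\textbf{Main obstacle.} The delicate step is keeping the constants within $e^2$. The loss from $\lambda^j$ and the loss from the lower bound on $\binom{n}{k}$ must each cost at most a factor of $e$. I would verify the first via $(1+1/k)^j \le e$. For the second I would use $\binom{n}{k} \ge (n-k)^k/k!$ together with $(1-k/n)^k \ge 1 - k^2/n \ge 1/e$, valid since $k$ is at least a sufficiently large constant.
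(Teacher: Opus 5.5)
Your proposal is correct and follows essentially the same route as the paper. The common count $N_j$ (the paper's $\Gamma_j$) cancels in $p(j)/q(j)$, and the same three estimates ($\binom{\ell}{j}\le \ell^j/j!$, $\binom{n}{k}\ge (n-k)^k/k!$ at a cost of a factor $e$, and $\lambda^j\ge k^j/e$ from \cref{cl:hybridnorm}) yield $e^2k^{j-k}$. Your explicit birthday union bound $p(k)\ge 1-1/(2k)$ and the observation $q(k)=s(1)^k\ge (k/(k+1))^k$ just spell out what the paper only sketches.
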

\begin{proof}
Let $v = \frac{1}{mk}$. Define, 
\begin{align}
    \Gamma_j := \sum_{\substack{t_1, t_2 \dots t_j \in [k]\\t_1 + t_2 \dots + t_j = k}} \prod_{i \in [j]} \binom{m}{t_i}
\end{align}
Then, we can write $p(j)$ and $q(j)$ in terms of this quantity as, 
\begin{align}
    q(j) &=  \sum_{\substack{t_1, t_2 \dots t_j \in [k]\\t_1 + t_2 \dots t_j = k}} \prod_{i \in [j]} \left(\binom{m}{t_i}  \cdot \lam \cdot v^{t_i} \right) \\
    &= \lam^j \cdot v^k \cdot \Gamma_j, 
\end{align}
and 
\begin{align}
   p(j) &= \binom{\ell}{j} \cdot \binom{m\ell}{k}^{-1} \cdot \Gamma_j.
\end{align}
Therefore, letting $n = m \cdot \ell$,
\begin{align}
    \frac{p(j)}{q(j)} &= \binom{\ell}{j} \cdot \binom{n}{k}^{-1} \cdot \lam^{-j} \cdot v^{-k}  \\
     &= \binom{\ell}{j} \cdot \binom{n}{k}^{-1} \cdot \lam^{-j} \cdot (mk)^{k} \\
     &\leq \binom{\ell}{j} \cdot \binom{n}{k}^{-1} \cdot k^{-j} \cdot (mk)^{k} \cdot \left( \frac{k + 1}{k} \right)^{j} \tag{from \cref{cl:hybridnorm}} \\
     &\leq e \cdot \frac{\ell^j}{j!} \cdot \binom{n}{k}^{-1} \cdot k^{k-j} \cdot m^{k}.  \label{eq:here234}
\end{align}
Note,
\begin{align}
    \binom{n}{k} \geq \frac{(n-k)^k}{k!} = \frac{n^k}{k!} \cdot \lr{1-\frac{k}{n}}^{k}.
\end{align}
Substituting into \cref{eq:here234}, 
\begin{align}
\frac{p(j)}{q(j)} &\leq e\cdot \frac{\ell^j}{j!} \cdot  \frac{k!}{n^k} \cdot \lr{1-\frac{k}{n}}^{-k} \cdot k^{k-j} \cdot m^{k} \\
&= e\cdot \frac{k!}{j!} \cdot \ell^{j-k} \cdot k^{k-j} \cdot \lr{1-\frac{k}{n}}^{-k} \\
&\leq e\cdot \ell^{-k+j} \cdot k^{2(k-j)} \cdot \lr{1-\frac{k}{n}}^{-k} \\
&\leq e\cdot k^{-3(k-j)} \cdot k^{2(k-j)} \cdot \lr{1-\frac{k}{n}}^{-k} \tag{$\ell^{-1} \leq k^{-3}$} \\
&\leq e\cdot k^{j-k} \cdot \lr{1-\frac{1}{mk^2}}^{-k} \\
&\leq e^2 \cdot k^{j-k}
\end{align}
where in the last step, we used that $k < m$. To see $q(j) = \Theta(1)$, note that its upper bounded by $1$, and lower bounded by $(1 - 1/(k+1))^k$. To see $p(j) = Theta(1)$, note that this occupancy distribution stochastically dominates its ``balls-and-bins'' version, and so standard arguments give $p(j) = \Omega(1)$.
\end{proof}

\noindent We now describe the construction.
\begin{lemma}[Binomial occupancy states in $\QACZ$]\label{lem:binomoccckt}
Given $k$, $\ell \geq k^3$ such that $\ell \mid n$, the state $\occknl$ can be constructed cleanly in $\qaczfkd$ using $\poly(n)$ ancillas.
\end{lemma}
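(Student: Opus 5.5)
The plan is to prepare $\occknl$ by a two-level bucketing. The only state on many qubits that we know how to build with $\FANOUT_k$ is a Dicke state on $\poly(k)$ qubits. Here, though, $\ell$ may be as large as $n$, and $\ell$ need not factor nicely. So I would not prepare $\ket{D^\ell_j}$ on $\ell$ ancillae directly.

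\textbf{Coarse level.} Group the $\ell$ buckets into $L=k^3$ super-buckets of sizes $r_1,\dots,r_L\in\{\lfloor \ell/L\rfloor,\lceil \ell/L\rceil\}$. Unequal sizes are allowed.
\begin{enumerate}
\item On a $(k+1)$-qubit register, prepare a one-hot state $\sum_j \sqrt{t(j)}\ket{e_{j+1}}$ using \cref{lem:anyonehotqac}.
\item Apply \cref{col:dickesuperpos} with $\ell$ replaced by $L$. Its circuit needs only $\FANOUT_L$, and \cref{fact:kthresh} gives $\FANOUT_L$ from $\FANOUT_k$. The result is $\sum_j\sqrt{t(j)}\ket{D^L_j}$ on $L$ coarse ancillae.
\item Uncompute the one-hot register using $\HAM^L_k$.
\end{enumerate}

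\textbf{Expansion from coarse ancillae to the $n$ target qubits.} This uses the damped-binomial trick twice, with one common damping parameter $\tau$.
\begin{enumerate}
\item For each coarse ancilla, apply in parallel the controlled map of \cref{cor:ctrlhybrid}. It sends $\ket{1}$ to a damped-binomial state $\sum_{i\in[k]}\sqrt{\lam_a\binom{r_a}{i}\tau^i}\ket{D^{r_a}_i}$ on the $r_a$ fine-bucket indicator qubits of super-bucket $a$.
\item For each fine-bucket indicator, apply in parallel the controlled map sending $\ket{1}$ to $\sum_{i}\sqrt{\mu\binom{m}{i}\sigma^i}\ket{D^m_i}$ on its $m$ target qubits.
\item The proof of \cref{lem:zeroshft} goes through verbatim for any size $r\ge k$ and any damping $\tau\le 1/(rk)$. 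The bound used there is that \cref{cl:binomdom} holds with $\binomd(r,1/r)$, so both controlled maps are available. I would fix $\tau$ relative to $\lceil \ell/L\rceil$ so that a single $\tau$ serves every super-bucket.
\end{enumerate}

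\textbf{Uniformity.} This is the key point. The computation in \cref{cl:shiftedoccuniform} shows that the amplitude of any string in the support is
\[
\Bigl(\prod_{a\text{ occupied}}\lam_a\Bigr)\,\tau^{\#\text{occupied fine buckets}}\,\mu^{\#\text{occupied fine buckets}}\,\sigma^{|x|}.
\]
The factors $\lam_a$ depend on the sizes $r_a$, so I would correct them by adjusting amplitudes on the coarse branches. Two quantities remain. The first is $u$, the number of occupied coarse buckets. The second is the fine occupancy $j$. After $\lam_a$ is corrected, the amplitude depends only on $(u,j,|x|)$.

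\textbf{Computing labels and selecting.}
\begin{enumerate}
\item Compute $e_{j}$ with $\HAM$ on the fine indicators. This needs $\THRESHOLD^{\ell}_{\le k+1}$, which \cref{fact:kthresh} provides.
\item Compute $u$ with $\HAM$ on the coarse ancillae.
\item Compute $[|x|=k]$ with $\EX^n_k$.
\item Apply \cref{lem:adjustamp}, with branch coefficients chosen so that the marked $|x|=k$ part equals $\sum_j\sqrt{p(j)}\dmkj\ket{e_j}$. This requires that, within fixed $j$, all fine-occupancy patterns appear with equal weight after summing over $u$. I would check this by counting. Each pattern with $j$ occupied fine buckets arises from the unique set of coarse buckets it meets. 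So its weight is a function of $j$ and $u$ alone, and $u$ is recorded, so it can be reweighted.
\item Amplify with \cref{cor:constamptoexact}.
\item Uncompute the fine and coarse indicator registers from the target. Occupancy of bucket $b$ is the $\OR$ of its $m$ qubits, and occupancy of super-bucket $a$ is the $\OR$ of its fine indicators. Both are computable and reversible given the target.
\end{enumerate}

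\textbf{Main obstacle.} The hard step is showing that the final reweighting has total marked mass $Z=\Theta(1)$. Write $c(u,j)$ for the weight of a fine-occupancy pattern with $j$ occupied fine buckets meeting $u$ coarse buckets. The condition needed is that $\max_{u,j}\, p(j)/\bigl(t(u)\,c(u,j)\,q(j)\bigr)$ be $O(1)$ times $\sum p(j)=1$.

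The dominant contribution is $u=j=k$: every occupied fine bucket lies in a distinct super-bucket and holds one $1$. For that branch I would take $t(k)=\Theta(1)$. The constant lower bounds then follow exactly as in \cref{cl:hybridtoocc}, using the birthday bound with $L=k^3$ and with $\ell\ge k^3$.

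For smaller $j$ and $u$, I would set $t(u)\propto$ the required mass divided by the conditional selection probability, as in the overview. I would bound each ratio by $e^{O(1)}k^{-(k-u)}$ and $k^{-(k-j)}$ factors, repeating the \cref{cl:hybridtoocc} computation once per level.

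If the two-level counting fails to give uniformity across unequal super-bucket sizes, my fallback is to choose the super-bucket sizes as equal as possible and absorb the size-dependent factors into $\lam_a$ via \cref{lem:adjustamp}, applied on the coarse one-hot labels.
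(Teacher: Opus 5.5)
The step you flag as the main obstacle, constant marked mass, is never carried out. For the construction as you specify it, the claim is false in part of the range the lemma allows. The problem is unequal super-bucket sizes when $\ell/L=O(k)$.

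\textbf{A counterexample.} Take $\ell=L+\lfloor L/2\rfloor$, so about half the super-buckets hold one fine bucket and half hold two.
\begin{itemize}
\item Strings whose $k$ ones fall in distinct super-buckets carry a $1-O(1/k)$ fraction of the target.
\item Given that super-bucket $a$ is occupied, a particular single fine bucket in it is the occupied one with probability $\lambda_a\tau\le 1/r_a$, for every $\tau$. This probability is exactly $1$ when $r_a=1$.
\item Your coarse state is a Dicke state, so it is uniform over the occupied set $U$. Hence such a string gets probability at most $C\cdot 2^{-u_1}$, where $u_1$ counts occupied two-bucket super-buckets, and exactly $C$ when $u_1=0$.
\item The target is uniform on these strings. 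Any reweighting with $\beta\le 1$ must push every string down to the $u_1=k$ level.
\end{itemize}
That keeps at most $(1+o(1))\,2^{-k}\,\mathbb{E}_U[2^{u_1}]\le(1+o(1))(3/4)^k$ of the mass, for every choice of $t,\tau,\sigma$. Your fallback, absorbing the $\lambda_a$ factors via Lemma~\ref{lem:adjustamp}, is exactly this lossy reweighting.

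\textbf{How the paper avoids this, and how to repair your argument.} The paper uses the one-level version of your construction. It prepares $\sum_j\sqrt{t(j)}\,\ket{e_j}\ket{D^\ell_j}$ directly by Corollary~\ref{col:dickesuperpos}, expands each occupied bucket into the damped-binomial state, and marks $|x|=k$. It sets $t(j)\propto p(j)/q(j)$, so the marked branch is exactly $R^{-1/2}$ times the occupancy state. No reweighting is needed, and $R\le 2e^2$ by Claim~\ref{cl:hybridtoocc}. That argument uses $\FANOUT_\ell$, so it really proves the lemma for $\ell=k^{O(1)}$. This is all Theorem~\ref{thm:kdicke} needs, since it takes $\ell=k^3$. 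Your concern about larger $\ell$ is a fair reading of the statement, but it goes beyond what the paper uses.

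To complete your proof, split into cases:
\begin{itemize}
\item For $\ell\le k^4$, use the one-level argument; $\FANOUT_\ell$ comes from $\FANOUT_k$ by Fact~\ref{fact:kthresh}.
\item For $\ell>k^4$, every $r_a\ge k$. Take $\tau=1/(r_{\max}k)$ and record separate counts of occupied large and small super-buckets. The size mismatch then costs only a factor $(1-1/r_{\max})^k\ge 1/4$.
\end{itemize}

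\textbf{The mass bound.} This is short once you notice that the occupancy state is just $\ket{D^n_k}$ with labels attached. So you only need $P(x)\binom{n}{k}=\Omega(1)$ for every weight-$k$ string $x$. Take equal sizes $r$, $L=k^3$, $\tau=1/(rk)$, $\sigma=1/(mk)$, and apply Claim~\ref{cl:hybridnorm} at both levels. A string with coarse occupancy $u$ and fine occupancy $j$ then satisfies
\[
P(x)\binom{n}{k}\;\ge\;\frac{t(u)}{2e^{2}}\cdot\frac{u!}{k!}\,k^{2(k-u)}\,r^{k-j}\;\ge\;\frac{t(u)}{2e^{2}}\,k^{k-u}.
\]
Choosing $t(u)\propto k^{-(k-u)}$ therefore gives marked mass at least $1/(4e^2)$ after reweighting on the recorded labels.
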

\begin{proof}
Recall, $\occknl := \sum_{j = 0}^k \sqrt{p(j)} \ket{e_j}_A \ket{D^{m,\ell}_{k,j}}_T$. Label the $m$ qubits in each bucket as $T_j$ so that $T = T_1 \cup T_2 \dots T_\ell$, and label the qubits in the $A$ register as  $A = \clr{a_1, a_2 \dots a_k}$. 

Let $q(j)$ be as defined in \cref{cl:hybridtoocc}, $r(j) := p(j) / q(j)$ and  $R := \sum_{j \in [k]} r(j)$. Perform the following steps in $\qaczfkd$.
\begin{enumerate}
\item Set $\delta_j = \sqrt{r(j)/R}$ and use \cref{lem:anyonehotqac} to prepare on the $A$ register, 
\begin{align}
    \ket{\psi_0}_{A} := \sum_{j \in [k]} \delta_j \ket{e_j}_A 
\end{align}
\item  Let $C_1$ be the circuit from \cref{col:dickesuperpos}. With a new register, $B = \clr{b_1,b_2\dots b_\ell}$, obtain,  
\begin{align}
    \ket{\psi_1} &:= C_1 \ket{\psi_0} \ket{0^{\ell}}_{B}  \\
    &= \sum_{j \in [k]} \delta_j \cdot \ket{e_j}_{A} \dicke{j}{\ell}_B 
\end{align}
\item  Let $m = n/\ell$ and recall the state $\S^m_k$ from \cref{def:damped_binomial}. 
Let $C_2$ be the map from \cref{cor:ctrlhybrid}, then, for each bucket $i \in [\ell]$ prepare $\ket{\S^m_k}$ on $T_i$ controlled on $b_i$ as below.  
\begin{align}
    \ket{\psi_2}_{A,B,T} &:= \lr{ \bigotimes_{i \in [\ell]} \ctrl{C_2}(b_i,T_i)} \cdot \ket{\psi_1} \ket{0^n}_T \\ 
    &= \sum_{j \in [k]} \delta_j \cdot \ket{e_j}_{A} \sum_{\y \in \bin^\ell : |\y| = j}  \ket{\y}_{B} \lr{ \bigotimes_{i : y_i = 1} \ket{\S^m_k}_{T_i} } \lr{ \bigotimes_{i : y_i = 0} \ket{0^{m}}_{T_i} }. 
\end{align}
The nonzero-bits in each branch of $\ket{D^\ell_j}$ will indicate which $j$ of the $\ell$ buckets are non-empty. This essentially prepares the state $\ket{\S^m_k}$ on each non-empty bucket indicated by the bits of $\y$. 

\item Note $\braket{\S^m_k|0^m} = 0$. Then, using an OR gate on $T_i$ registers with $b_i$ as the target, uncompute $b_i$ to obtain, 
\begin{align}
    \ket{\psi_3} &= \sum_{j = 1}^k \delta_j \ket{e_j}_{A} \ket{0^\ell}_B \sum_{\substack{S \subseteq [\ell] \\ |S| = j}} \lr{ \bigotimes_{i \in S} \ket{\S^m_k}_{T_i} }
\lr{ \bigotimes_{i \in [\ell]\setminus S} \ket{0^{m}}_{T_i} } \label{eq:symm0}  
\end{align}
 Observe that due to \cref{cl:shiftedoccuniform}, all strings $x \in \bin^{m\ell}$ with hamming weight $k'$ and $\occ(x) = j$ appear with the same amplitude. Therefore, the state can be re-written as, 
\begin{align}
\ket{\psi_3} &=  \sum_{j = 1}^k \delta_j \ket{e_j}_A  \ket{0^\ell}_B \sum_{k' = j}^k \gamma_{j,k'} \ket{\D^{m,\ell}_{j,k'}}_{T} \label{eq:symm1}
\end{align}
where $\ket{\D^{m,\ell}_{j,k'}}$ is the state conditioned on measuring occupancy $j$ conditioned on hamming weight $k'$ as defined in \cref{def:occstate} $\gamma^2_{j,k'}$ is its corresponding probability. From here on we will drop the uncomputed $B$ register.
\item Recall $r(j), R$ as defined previously with $\delta_j = \sqrt{r(j)/R}$, and observe $\gamma_{j,k} = \sqrt{q(j)}$. Apply $\THRESHOLD^n_k$ gates, via \cref{fact:kthresh} on the $T$ register and a new ancilla $a$ to obtain, 
\begin{align}\label{eq:beforeamp}
    \ket{\psi_4}_{A,T,a} &= \lr{\sum_{j = 1}^k \delta_j \gamma_{j,k} \ket{e_j} \ket{\D^{m,\ell}_{j,k}}} \ket{1}_a + \gamma_{\bad} \ket{\bad} \ket{0}_a  \\
    &= \frac{1}{\sqrt{R}} \lr{\sum_{j = 1}^k \sqrt{p(j)} \ket{e_j} \ket{\D^{m,\ell}_{j,k}}} \ket{1}_a + \gamma_{\bad} \ket{\bad} \ket{0}_a \\
    &= \sqrt{1/R} \occknl + \sqrt{1-1/R} \cdot \ket{\bad} \ket{0}_a.
\end{align}
\end{enumerate}
Let $d_0 = O(1)$ be the depth of the circuit so far to obtain $\ket{\psi_4}$. 
Due to \cref{cl:hybridtoocc}, $R = \Theta(1)$. Thus, applying \cref{cor:constamptoexact} produces $\occknl$ exactly as claimed. 
\end{proof}

\subsection{Dicke States}

Now we have the required pieces to prove the main theorem. 
\kdicke
\begin{proof}
Let $\ell = k^3$, we will describe the construction when $\ell \mid n$.  
Partition the $n$ target qubits $T$ into $\ell$ buckets, where $T_i$ is the set of $m = n/\ell$ qubits in the $i$th bucket ($T = \bigcup_{i \in [\ell]} T_i$).
Then, from \cref{lem:binomoccckt}, we can construct the associated state with $\Occ(\D^n_{k}, \ell)$, 
\begin{align}
\occknl &= \sum_{j = 1}^k \sqrt{p(j)} \dmkj_T \ket{e_j}_{a_1 \dots a_k}
\end{align}
We can prepare this state due to \cref{lem:binomoccckt}. To obtain $\dkn$ cleanly from this, use temporary ancillae to mark whether each bucket is non-empty using an $\OR$ gate, then use $\HAM^{\ell}_k$ to uncompute the $A$ register, and uncompute the marked buckets using another layer of $\OR$ gates. Discarding the uncomputed register produces 
\begin{align}
    \ket{\psi_2} :=  \sum_{j \in [k]} \sqrt{p(j)} \dmkj,  
\end{align}
which, by \cref{def:occstate} is exactly the state $\dkn$. 

\paragraph{General case:} When $\ell \nmid n$, let $m = \lceil n/\ell \rceil$ and $n' = \ell \cdot m$. First construct $\dicke{k}{n'}_T$ as above. Now use an additional ancilla $a$ to compute the $\NOR$ on the last $n'-n$ qubits, obtaining, 
\begin{align}
    \ket{\psi_3} := \sqrt{p_0} \dkn_T \ket{1}_a + \sqrt{1-p_0} \ket{\bad}_T \ket{0},
\end{align}
where $p_0$ is the probability of measuring $0$ on the last $n'-n < m$ qubits.  Observe that since $m \ll n/k$, this happens with overwhelming probability. Hence, we can apply \cref{cor:constamptoexact} to obtain $\dkn_T$ in $O(1)$ depth. 
\end{proof}

\noindent Then, from \cref{fact:kthresh}, since $\qaczfk{\polylog(n)} = \QACZ$ we immediately have the following.

\polylogdicke*

\subsection{General symmetric states}

Finally we extend to any superposition of these Dicke states as below. 
\ksymmstate
The proof of this theorem follows by a first constructing the corresponding superposition over the $\ket{e_i}$ using \cref{thm:anylognqram}.  Then, using a very similar procedure as the one in \cref{lem:binomoccckt}, we produce the corresponding superposition over $\dmkj$ as defined in \cref{def:occstate}. Finally, we obtain the corresponding Dicke state by uncomputing the occupancies and hamming weights similar to  \cref{thm:kdicke}. 

Even when provided black-box access to a controlled preparation of $\ket{D^n_k}$, preparing arbitrary superpositions over $\ket{D^n_j}$ for $j \leq k$ naively requires  
$\FANOUT_n$. We instead construct the corresponding superposition over occupancy states from scratch using the same ideas as before.  We provide the full proof in \cref{sec:arbitrarysymm}. 

\ifnum\ANON=1
\else
\section{Acknowledgments}
The authors are grateful to Avishay Tal and Ryan Williams for helpful discussions and advice. 
\fi

\appendix
\bibliographystyle{alpha}
\bibliography{main, dicke_priors}

\section{Deferred Proofs}\label[appendix]{sec:arbitrarysymm}
In this section we will formally prove \cref{thm:any_symm}.
First, we need the following lemmas. 

\begin{lemma}[General Hybrid to Dicke occupancy ratio]\label{cl:genhybridtoocc} 
For any $m$, $1 \leq k_* \leq m$, $\ell = k_*^3$, $k \in [k_*]$, and $j \in [k]$, let $p_k(j)$ be the probability mass function of $\Occ(\D^{m \cdot \ell}_k, \ell)$ and let
$$q_k(j) := \Pr_{\x \sim (\S^m_{k_*})^{\tens j}} \bigl[|\x| = k\bigr].$$
Then, for $R_k := \sum_{j \in [k]} \frac{p_k(j)}{q_k(j)}$ it holds that $R_k \leq c$ for a global constant $c$.   
\end{lemma}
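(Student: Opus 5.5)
We redo the computation of \cref{cl:hybridtoocc}, now with the damping parameter set by $k_*$ rather than by the target weight $k$. Let $v := 1/(mk_*)$ and let $\lam$ be the normalization constant of $\S^m_{k_*}$. By \cref{cl:hybridnorm}, $\lam \in [k_*^2/(k_*+1), k_*]$. Write $n = m\ell$. In \cref{cl:hybridtoocc} each bucket was restricted to $[k]$; here the bucket weights range over $[k_*]$. Since the total weight is $k \le k_*$, however, every composition of $k$ into $j$ positive parts automatically has all parts at most $k \le k_*$. So the same combinatorial factor
\[
\Gamma_j := \sum_{\substack{t_1,\dots,t_j\ge 1\\ t_1+\dots+t_j=k}} \prod_{i\in[j]}\binom{m}{t_i}
\]
appears in both quantities:
\[
q_k(j)=\lam^j v^k\Gamma_j, \qquad p_k(j)=\binom{\ell}{j}\binom{n}{k}^{-1}\Gamma_j .
\]
Therefore
\[
\frac{p_k(j)}{q_k(j)}=\binom{\ell}{j}\binom{n}{k}^{-1}\lam^{-j}(mk_*)^k,
\]
and $\Gamma_j$ cancels.

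Next we bound each term. We use $\lam^{-j}\le k_*^{-j}\bigl((k_*+1)/k_*\bigr)^{j}\le e\,k_*^{-j}$ (since $j\le k\le k_*$), together with $\binom{\ell}{j}\le \ell^j/j!$ and $\binom{n}{k}\ge (n^k/k!)(1-k/n)^k$. The powers of $m$ cancel against $n^k=m^k\ell^k$, leaving
\[
\frac{p_k(j)}{q_k(j)} \le e\,(1-k/n)^{-k}\cdot\frac{k!}{j!}\,\ell^{j-k}\,k_*^{k-j}.
\]
Now $k!/j!\le k^{k-j}\le k_*^{k-j}$ and $\ell=k_*^3$, so $\ell^{j-k}k_*^{2(k-j)}=k_*^{-(k-j)}$. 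The prefactor satisfies $(1-k/n)^{-k}\le (1-1/(mk_*^2))^{-k_*}\le e$ for $k_*\ge 2$, say; the trivial case $k_*=1$ is checked directly. This gives
\[
\frac{p_k(j)}{q_k(j)}\le e^2 k_*^{-(k-j)}.
\]

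Summing over $j\in[k]$ yields a geometric series:
\[
R_k\le e^2\sum_{s\ge 0}k_*^{-s}\le \frac{e^2}{1-1/k_*}\le 2e^2
\]
for $k_*\ge2$. For $k_*=1$ we have $k=j=1$, and $R_1 = p_1(1)/q_1(1)=1$ directly. So $c=2e^2$ works.

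The only delicate point is the mismatch between the $k_*$ used to define $\S^m_{k_*}$ and $\ell$, and the actual weight $k$. This mismatch is handled by three observations: the support truncation at $k_*$ is inactive; the exponent comparison $k!/j!\le k_*^{k-j}$ still holds; and $\ell=k_*^3\ge k^3$ still beats the $k_*^{k-j}$ growth, because we only need $k_*^{3}$ against $k_*^{2}$ per missing occupied bucket. I would double-check the $\lam^{-j}$ bound direction, since it is the one place where the upper and lower bounds of \cref{cl:hybridnorm} could be confused.
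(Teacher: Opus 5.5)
Your proof is correct, but it handles the mismatch between the damping index $k_*$ and the target weight $k$ differently from the paper.

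The paper does not recompute the ratio. It asserts that $\S^m_k$ is $\S^m_{k_*}$ conditioned on weight at most $k$. It then uses the fact that each sample of $\S^m_{k_*}$ has weight $1$ with probability $\lambda/k_*\ge k_*/(k_*+1)$ to get $q_k(j)\ge e^{-2}\Pr_{x\sim(\S^m_k)^{\otimes j}}[|x|=k]$. Finally it cites \cref{cl:hybridtoocc} (valid since $\ell=k_*^3\ge k^3$) to obtain $p_k(j)/q_k(j)\le e^4k^{j-k}$ and $R_k\le 2e^4$.

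You instead observe that no composition of $k$ into $j$ positive parts has a part exceeding $k\le k_*$. So the truncation at $k_*$ is inactive and $\Gamma_j$ cancels exactly. You then rerun the computation of \cref{cl:hybridtoocc} with damping $1/(mk_*)$ and \cref{cl:hybridnorm} applied at $k_*$.

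The paper's route is shorter because it reuses \cref{cl:hybridtoocc} as a black box, but yours is the sound one. In \cref{def:damped_binomial} the damping factor $1/(mk)$ depends on the subscript. So for $1<k<k_*$, conditioning $\S^m_{k_*}$ on weight at most $k$ does not yield $\S^m_k$. Comparing the two closed forms gives $q_k(j)\le e\,(k/k_*)^{k-j}\Pr_{x\sim(\S^m_k)^{\otimes j}}[|x|=k]$. Hence the paper's intermediate lower bound on $q_k(j)$ fails once $j<k$ and $k_*/k$ is large, even though the lemma itself is true.

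Your direct computation avoids this and also gives sharper bounds: $p_k(j)/q_k(j)\le e^2k_*^{-(k-j)}$ and $R_k\le 2e^2$.
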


\begin{proof}
Let $v = \frac{1}{mk}$. First observe that $\S^m_k$ is obtained by conditioning $\S^m_{k_*}$ on Hamming weight $\leq k$. The probability that a single sample from $\S^m_{k_*}$ has Hamming weight $\leq k$ is at least $\Pr[\ham(\S^m_{k_*}) = 1] \geq \frac{k_*}{k_* + 1}$. 
Let $\lam_{j,k}$ be the probability that $j$ samples from $\S^m_{k_*}$ all have Hamming weight $\leq k$. Then, 
\begin{align}
    \lam_{j,k} &\geq \lr{ \frac{k_*}{k_* + 1}}^j\\
    &\geq e^{-2j/(k_* + 1)}\\
    &\geq e^{-2j/k}.
\end{align}

Therefore, we can re-write $q_k$ as, 
\begin{align}
    q_k(j) &= \lam_{j,k} \cdot \Pr_{\x \sim (\S^m_{k})^{\tens j}} \bigl[|\x| = k\bigr] \\
    &\geq e^{-2} \cdot \Pr_{\x \sim (\S^m_{k})^{\tens j}} \bigl[|\x| = k\bigr].
\end{align}
Therefore, using \cref{cl:hybridtoocc}, 
\begin{align}
    \frac{p_k(j)}{q_k(j)} &\leq e^2 \frac{p_k(j)}{\Pr_{\x \sim (\S^m_{k})^{\tens j}} \bigl[|\x| = k\bigr] } \\
    &\leq e^4 k^{j-k}.
\end{align}
summing over all $k$, this gives, 
\begin{align}
    R_k &\leq e^{4} \sum_{d = 0}^{k-1} \frac{1}{k^d} \\
    &\leq 2e^4.
\end{align}
\end{proof}

\begin{lemma}\label{lem:customthresh}
For $n, k \leq n$, the following map defined for $x \in \bin^n, 0 \leq j \leq k$, can be implemented in $\qaczfkd$. 
$$\ket{e_j} \ket{x} \ket{0} \mapsto \ket{e_j} \ket{x} \ket{\blr{|x| \leq j}}$$
\end{lemma}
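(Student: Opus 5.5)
The plan is to compute all $k+1$ candidate comparison bits in parallel and then select the one indexed by the one-hot control register. Concretely, we use the one-hot register $\ket{e_j}$ as a selector, following the pattern of \cref{lem:hamgate}.

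First, apply $\FANOUT_k$ (nested, which \cref{fact:kthresh} permits for $\poly(k)$ copies) to the input register $X$. This yields $k+1$ copies $X_0, X_1, \dots, X_k$.

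Next, introduce fresh ancillae $z_0,\dots,z_k$. For each $i \in \{0,\dots,k\}$, apply a negated threshold gate $\neg\THRESHOLD^n_{i+1}(X_i, z_i)$, which sets $z_i = [|x| \le i]$. These gates are available in $\qaczfkd$ by \cref{fact:kthresh}, since $i+1 \le k+1$ and threshold $k+1$ is obtained from $\FANOUT_{\poly(k)}$. After this layer the ancillae hold $\ket{[|x|\le 0]}\cdots\ket{[|x|\le k]}$, with the copies of $x$ still present.

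Then copy the selected bit onto the target $t$ while using the one-hot promise on the control register $A$. For $j \ge 1$, write $a_j$ for the qubit of $A$ that is $1$ exactly on $\ket{e_j}$. For $j = 0$, the selector is $\ket{e_0} = 0^k$, so we compute $\NOR(A)$ into a fresh ancilla $a_0$; this is a single gate. Because exactly one of $a_0,\dots,a_k$ is $1$, the target should become $\bigoplus_i (a_i \wedge z_i) = \bigvee_i (a_i \wedge z_i)$. We implement this by computing each $a_i \wedge z_i$ into a fresh ancilla $w_i$ with parallel Toffolis on disjoint qubits, then taking the $\OR$ of $w_0,\dots,w_k$ into $t$ with one wide gate. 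An alternative is to apply the controlled-swap/CNOT pattern of \cref{fact:ctrlswap}, which is valid under the Hamming-weight-$\le 1$ promise.

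Finally, clean up by running the preceding layers in reverse. Uncompute the $w_i$, then $a_0$, then the $z_i$ by reapplying the thresholds, and finally uncompute the copies $X_i$ with the inverse fanout. Each of these stages commutes with the computational-basis value now stored in $t$, so the map is exactly $\ket{e_j}\ket{x}\ket{0}\mapsto\ket{e_j}\ket{x}\ket{[|x|\le j]}$. The whole circuit has constant depth and $\poly(n)$ ancillae.

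The only delicate point is the selection step. We must avoid reading the control register $A$ more than once per layer; this is handled by computing $w_i$ from disjoint qubit pairs $(a_i, z_i)$. We also need the $j=0$ case, which the $\NOR$ ancilla handles. If $x$ may have weight above $k$, all $z_i$ are $0$, which is still correct.
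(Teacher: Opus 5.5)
Your proposal is correct and matches the paper's proof: fan out $x$ into parallel copies, compute each comparison bit $[|x| \le i]$ with a threshold gate, write $\bigvee_i (a_i \wedge z_i)$ into the target using the one-hot control register, and then uncompute. Your extra ancilla $a_0 = \NOR(A)$ for the $j=0$ case and your use of $\neg\THRESHOLD^n_{i+1}$ (instead of the paper's loosely stated $\THRESHOLD^n_i$) are small refinements of details the paper's write-up glosses over.
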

\begin{proof}
On input $\ket{e_j}_A \ket{x}_X$, make $k$ copies of the $X$ register and label them $X_1 \dots X_k$.  Let $A = \clr{a_1 \dots a_k}$. 
Prepare a fresh $k$ qubit ancilla register $y_1, y_2 \dots y_k$. Using the $i$th copy $X_i$, compute $\THRESHOLD^n_i$ on $y_i$ to obtain,
\begin{align}
    \ket{\psi_1(j,x)} := \ket{e_j} \ket{x}_{X} \ket{x}^{\tens k}_{X_1 \dots X_k} \bigotimes_{i \in [k]} \ket{[|x| \leq i]}_{y_i} 
\end{align}
Next, on an output register $q$, compute the $\OR$ of the $\AND(y_i,a_i)$. This is a standard procedure that one can implement in cleanly in depth $3$ using $k$ additional ancillae. This produces, 
\begin{align}
    \ket{\psi_2(j,x)} := \ket{e_j} \ket{x}_{X} \ket{x}^{\tens k}_{X_1 \dots X_k} \lr{\bigotimes_{i \in [k]} \ket{[|x| \leq i]}_{y_i} } \ket{|x| \leq j}_q
\end{align}
Finally, uncompute the $y_i$ registers and then the $X_1 \dots X_k$ to produce, 
\begin{align}
    \ket{\psi_3(j,x)} := \ket{e_j} \ket{x}_{X} \ket{0^{kn}}_{X_1 \dots X_k} \ket{0^k}_{y_1 \dots y_k} \ket{|x| \leq j}_q
\end{align}
as claimed. 
\end{proof}

\subsection{Limited-fanout symmetric states synthesis}
We wish to prepare in $\qaczfkd$ the state,
    $$\ket{\psi_*} := \sum_{k = 0}^{k_*} \eta_k \ket{D^n_k}$$
for arbitrary amplitudes $\eta_1, \eta_2 \dots \eta_{k_*} \in \mathbb{C}$ with $\sum_k |\eta_k|^2 = 1$. 
\begin{proof}[Proof of \cref{thm:any_symm}]
Let $\ell := (k_*)^3$. First we will describe the case  when $\ell \mid n$. For ease of exposition define $\I := \clr{(j,k)}_{k \in [k_*], j \in [k]} \cup \clr{(0,0)}$ and define $\S$ as shorthand for the distribution  $\S^m_{k_*}$ from \cref{def:damped_binomial}. 

\paragraph{Notation.}
For all $k \in [k_*]$, let  $p_k(j)$ be the probability mass function of $\occlabel$.
For a $k_*$ qubit register $A = {a_1, a_2 \dots a_{k_*}}$ and $n$ qubit register $T$ define the following  state for $k \in [k_*]$ 
\begin{align}
    \ket{\O^n_k}_{A,T} &:= \sum_{j = 0}^k \sqrt{p_k(j)} \ket{e_j}_A \dmkj_T. 
\end{align}
Observe that this state is simply $\occknl \tens \ket{0^{k_*-k}}$. (For $k = 0$ define $\ket{\O^n_k} := \ket{0^{n+k_*}}$). Then, given amplitudes $\eta_1, \eta_2 \dots \eta_{k_*}$, it suffices to construct the following state
\begin{align}\label{eq:objoccstateap}
    \ket{\varphi_*} := \sum_{k = 0}^{k_*} \eta_k \ket{e_k}_Q \ket{\O^n_k}_{A,T}, 
\end{align}
where $Q$ is a $k_*$-qubit register. 

For $k \in [k_*]$ and $j \in [k]$, define $p_k(j), q_k(j)$ and $R_k$ as in \cref{cl:genhybridtoocc}. To handle edge cases define $e_0 := 0^{k_*}$ and $R_0 = 0$. From \cref{cl:genhybridtoocc}, observe that, 
\begin{align}
    R &:= \sum_{k = 0}^{k_*} R_k \cdot |\eta_k|^2 \\
    &\leq \sum_{k = 0}^{k_*} |\eta_k|^2 \cdot O(1) \\
    &= O(1)\label{eq:normalize12323ap}
\end{align}

\paragraph{Procedure for constructing $\ket{\varphi_*}$.}  For all $k \in [k_*], j \in [k]$ set $\delta^k_j = \eta_k \cdot \sqrt{r_k(j)/R}$ and set $\delta^0_0 = \eta_0/\sqrt{R}$.  
Begin with the following state whose preparation we will describe at the end 
\begin{align}
    \ket{\psi_0} := \sum_{(k,j) \in \I} \delta^k_j \ket{e_k}_Q \ket{e_j}_A. 
\end{align}

\begin{enumerate}
\item  Let $C_1$ be the circuit from \cref{col:dickesuperpos}. With a new register, $B = \clr{b_1,b_2\dots b_\ell}$, obtain,  
\begin{align}
    \ket{\psi_1} &:= C_1 \ket{\psi_0} \ket{0^{\ell}}_{B}  \\
    &= \sum_{(j,k) \in \I} \delta^k_j \cdot \ket{e_k}_Q \ket{e_j}_{A} \dicke{j}{\ell}_B. 
\end{align}
\item  Let $m = n/\ell$ and let $C_2$ be the map from \cref{cor:ctrlhybrid}, then, for each $i$ prepare $\ket{\S}$ on $T_i$ controlled on $b_i$ as below.
\begin{align}
    \ket{\psi_2}_{A,B,T} &:= \lr{ \bigotimes_{i \in [\ell]} \ctrl{C_2}(b_i,T_i)} \cdot \ket{\psi_1} \ket{0^n}_T \\ 
    &= \sum_{(j,k) \in \I} \delta^k_j \cdot \ket{e_k}_Q \ket{e_j}_{A} \sum_{\y \in \bin^\ell : |\y| = j}  \ket{\y}_{B} \lr{ \bigotimes_{i : y_i = 1} \ket{\S}_{T_i} } \lr{ \bigotimes_{i : y_i = 0} \ket{0^{m}}_{T_i} }. 
\end{align}

\item Note $\braket{\S|0^m} = 0$. Then, using an OR gate on $T_i$ registers with $b_i$ as the target, uncompute $b_i$ to obtain, 
\begin{align}
    \ket{\psi_3} &= \sum_{(j,k) \in \I} \delta^k_j \ket{e_k}_Q \ket{e_j}_{A} \ket{0^\ell}_B \sum_{\substack{S \subseteq [\ell] \\ |S| = j}} \lr{ \bigotimes_{i \in S} \ket{\S}_{T_i} }
\lr{ \bigotimes_{i \in [\ell]\setminus S} \ket{0^{m}}_{T_i} }. \label{eq:symm12ap}  
\end{align}

\item Due to \cref{cl:shiftedoccuniform}, all strings $x \in \bin^{m}$ with Hamming weight $k'$ and $\occ(x) = j$ appear with the same amplitude. Therefore the state can be re-written as $\ket{\psi} = \ket{\psi_3'} \ket{0^{\ell}}_B$, 
\begin{align}
\ket{\psi_3'} &=  \sum_{(j,k) \in \I } \delta^k_j \ket{e_k} \ket{e_j}_A   \sum_{k' = j}^k \gamma_{j,k'} \ket{\D^{m,\ell}_{j,k'}}_{T},
\end{align}
where $\ket{\D^{m,\ell}_{j,k'}}$ is the state conditioned on measuring occupancy $j$ and hamming weight $k'$ as defined in \cref{def:occstate} and $\gamma^2_{j,k'}$ is its corresponding probability. We will drop the $B$ register in the following. 

\item Recall that $q_k(j) := \Pr[|\S^{\tens j}| = k]$, and is precisely the amplitude on $\ket{e_k}_Q \ket{e_j}_A \ket{D_{j,k}^{m,\ell}}_T$. Marking this on ancilla $a$ for all $j,k$ using \cref{lem:customthresh} produces, 
\begin{align}
\ket{\psi_4} &= \lr{\sum_{(j,k) \in \I} \delta^k_j \sqrt{q_k(j)} \ket{e_k}_Q  \ket{e_j}_A \ket{\D^{m,\ell}_{j,k}}_T \ket{1}_a } + \gamma_{\bad} \ket{\bad} \ket{0}_a \\
&=  \lr{\frac{1}{\sqrt{R}} \sum_{(j,k) \in \I} \eta_k \sqrt{p_k(j)} \ket{e_k}_Q  \ket{e_j}_A \ket{\D^{m,\ell}_{j,k}}_T \ket{1}_a } + \gamma_{\bad} \ket{\bad} \ket{0}_a \\
&=  \lr{\frac{1}{\sqrt{R}} \sum_{k = 0}^{k_*} \eta_k \sqrt{p_k(j)} \ket{e_k}_Q  \ket{\O^n_k} \ket{1}_a } + \gamma_{\bad} \ket{\bad} \ket{0}_a 
\end{align}
\end{enumerate}
Let $d_0 = O(1)$ be the depth of the circuit so far to obtain $\ket{\psi_4}$. 
Due to \cref{cl:hybridtoocc}, $R = O(1)$. Thus, applying \cref{cor:constamptoexact} produces the desired state $\ket{\varphi_*}$ from \cref{eq:objoccstateap} in $O(d_0 R) = O(1)$ depth.

\paragraph{Construction of the initial state.}
It remains to describe the construction of $\ket{\psi_0}$. 
Let $b = \lceil \log k_* \rceil$. Then, for $j,k \in [2^{b}]$ define $v(k,j) := k \cdot 2^{b} + j$. Observe that, $\ket{v(k,j)} = \ket{k} \ket{j}$, (recall that for an integer $v$, $\ket{v}$ denotes its binary representation in the standard basis). 
Therefore, due to \cref{cor:anylogreal}, we can prepare the $2b$-qubit superposition, 
\begin{align}
    \sum_{(j,k) \in \I} \delta^k_j \cdot \ket{k} \ket{j}
\end{align}
and then obtain $\ket{\psi_0}$ by applying \cref{cl:onehot}. 

\paragraph{To corresponding Dicke superposition.}
Once we have $\ket{\varphi_*}$, use \cref{lem:hamgate} to uncompute the $Q$ register and uncomputing the $A$ register the same way as in \cref{thm:kdicke}, to produce, 
\begin{align}
    \sum_{(j,k) \in \I} \eta_k \sqrt{p_k} \ket{0^k}_Q \ket{0^k}_A \dmkj_T &= \sum_{k} \eta_k \ket{D^n_k}_T \ket{0^{2k}}_{Q,A} \\
    &= \ket{\psi_*} \ket{0^{2k}}_{Q,A}
\end{align}
as required. 

\paragraph{Generalizing to arbitrary $n$}
When $\ell \nmid n$, let $m = \lceil n/\ell \rceil$ and $n' = \ell \cdot m$. 
Let $p_{k,0}$ be the probability of measuring $0$ on the last $n'-n < m$ qubits of $\dkn$. Regardless of $k$, we have, $p_{k,0} \geq 1/2$. Let $\eta'_k = ( \eta_k \cdot p_{k,0})/\sqrt{Z}$ where $Z = \sum_{k} \eta_k p_{k,0} \leq 2$, and construct, 
\begin{align}
    \ket{\psi_1}_T := \sum_{k \in [k_*]} \eta'_k \ket{D^{n'}_k}_T
\end{align}
using the above. 
Now use an additionally ancilla $a$ to compute the $\NOR$ on the last $n'-n$ qubits labelled $T'$, to obtain 
\begin{align}
    \ket{\psi_2} := \sqrt{1/Z} \ket{\psi_*}_{T\setminus T'} \ket{0^{n'-n}}_{T'} \ket{1}_a  
    + \sqrt{1-1/Z} \ket{\bad}_T \ket{0},
\end{align}
Then, using \cref{fact:ampamp}, we obtain $\ket{\psi_*}$ in $O(Z) = O(1)$ depth.
\end{proof}


\end{document}